\documentclass[a4paper,10pt]{article}
\usepackage[english]{babel}
\usepackage[margin=32mm]{geometry}
\usepackage{amsmath}
\usepackage{amsfonts}
\usepackage{amssymb}
\usepackage{mathrsfs}
\usepackage{mathtools}
\usepackage{amsthm}
\usepackage{enumerate,enumitem}
\usepackage{bm}
\usepackage{graphicx}
\usepackage{natbib}
\usepackage{comment} 
\usepackage[affil-it]{authblk} 
\usepackage{hyperref}
\usepackage{xcolor}
\hypersetup{
  colorlinks   = true, 
  urlcolor     = blue, 
  linkcolor    = red, 
  citecolor   = blue 
}
\usepackage{color}
\usepackage{caption}
\usepackage{subcaption}

\providecommand{\keywords}[1]
{
 {\small
  \textbf{\textit{Keywords:}}} #1
}

\newtheorem{thm}{Theorem}
\numberwithin{thm}{section}
\newtheorem{lem}[thm]{Lemma}
\newtheorem{prop}[thm]{Proposition}
\newtheorem{cor}[thm]{Corollary}

\theoremstyle{definition}

{\theoremstyle{plain}}
{\theoremstyle{plain}}

\numberwithin{equation}{section}

\newcommand{\dd}{{\rm d}}
\newcommand{\norm}[1]{\|#1\|}
\newcommand{\normal}{{\rm N}}

\newcommand{\half}{\tfrac12}

\newcommand{\E}{{\rm E}}

\newcommand{\eps}{\varepsilon}
\newcommand{\given}{\,|\,}
\newcommand{\real}{\mathbb{R}}

\newcommand{\calF}{\mathcal{F}}

\newcommand{\calA}{\mathcal{A}}
\newcommand{\calB}{\mathcal{B}}

\newcommand{\calG}{\mathcal{G}}

\newcommand{\calY}{\mathcal{Y}}

\newcommand{\sigalg}{{$\sigma$-algebra}}

\newcommand{\PP}{\mathbb{P}}
\newcommand{\tilda}{\widetilde}
\newcommand{\nobs}{{n_{\rm obs}}}

\newcommand{\true}{{\rm true}}
\newcommand{\as}{{\rm a.s.}}
\newcommand{\Leb}{{\rm Leb}}

\title{Posterior uncertainty for kernel density estimates}
\author{\normalfont Dennis Christensen$^{1}$, Torjus Svardal$^{2}$, Leiv R{\o}nneberg$^{2}$, Emil A.~Stoltenberg$^{2,3}$} 
\affil{{\normalfont\normalsize $^1$Norwegian Defence Research Establishment (FFI)}\\
{\normalfont\normalsize $^2$Department of Mathematics, University of Oslo}\\{\normalfont\normalsize $^3$BI Norwegian Business School, Oslo}}	

\date{\today}

\begin{document}
\maketitle

\begin{abstract} Recent work in predictive Bayesian inference has enabled novel Bayesian interpretations of many well-known stochastic one-step-ahead predictive algorithms. In this paper, we study classic kernel density estimation in the predictive Bayesian framework. We prove that their predictive measures converge weakly almost surely---meaning that their associated predictive resampling sequences are almost surely asymptotically exchangeable---and we provide estimators for moments of the limiting random probability measure. We also show that the resampling sequences do not satisfy standard assumptions like being conditionally identically distributed (c.i.d.)~or almost c.i.d.~(a.c.i.d.), thus providing a non-trivial example of a predictive sequence which is not a.c.i.d.~but nevertheless converges weakly almost surely. For Gaussian kernels, we show that the limiting directing measure is almost surely absolutely continuous with respect to the Lebesgue measure, meaning it emits a probability density. This enables us to derive credibility intervals for kernel density estimates, which we illustrate on two real datasets.
\end{abstract}

\begin{small}
\keywords{Almost sure weak convergence, Gaussian kernels, Martingale posteriors, Predictive Bayesian inference, Predictive resampling.}
\end{small}

\section{Introduction}\label{sec:intro}
Predictive Bayesian inference is a framework for achieving posterior type inference for parameters of interest without passing through the classical prior-likelihood setup of Bayesian modelling. Instead, one concentrates on the one-step-ahead predictive distributions \citep{fong2023martingale,fortini2023prediction,fortini2025exchangeability,berti2025probabilistic}. In this paper, we study kernel density estimation via predictive Bayesian inference. Let $y_{1:\nobs} = (y_1,\ldots,y_{\nobs})$ be our observed data. When these are the realisations of independent and identically distributed (i.i.d.) random variables with density function $f_{\true}$ on the real line, a common nonparametric estimate of $f_{\true}$ is a kernel density estimate of the form
\begin{equation}
f_0(y) = \frac{1}{\nobs h_{\nobs}}\sum_{i=1}^{\nobs} K\bigg(\frac{y - y_i}{h_{\nobs}} \bigg), \quad y \in \real, 
    \label{eq:kde1}
\end{equation}
where $K$ is some density and $h_{\nobs}$ is the bandwidth. Parameters of interest are finite or infinite dimensional mappings of $f_{\true}$ such as $\theta(f_{\true}) = \int y^p f_{\true}(y)\,\dd y$ for some $p \geq 1$, or $\theta(f_{\true}) = f_{\true}(y)$, or indeed $\theta(f_{\true}) = f_{\true}$ itself. Natural estimates of these quantities are obtained by plugging in the kernel density estimate $f_0$ in place of $f_{\true}$, and we write $\theta(f_0) = \theta(y_{1:\nobs})$ to highlight that these are computed based on the observed data. The predictive Bayes point of view is that uncertainty associated with the estimates $\theta(y_{1:\nobs})$ arise because we have not observed the entire (infinite) population, but only the finite sample $y_{1:\nobs}$. Posterior computation then involves simulating the missing observations, which, in our setup, means drawing $Y_1 \sim f_0(y) = f_0(y ; y_{1:\nobs})$, and then recursively sampling\footnote{We write $Y_n$, instead of the more cumbersome of {`}$Y_{\nobs + n}${'}, to ease the notation. This slight double use of notation, i.e., that for $1 \leq i \leq \nobs$, $y_i$ is {\it not} the realisation of $Y_i$, will not cause confusion.} 
\begin{equation}
Y_{n+1}\given Y_1,\ldots,Y_{n}
\sim f_n(y) = f_{n}(y; y_{1:\nobs},Y_1,\ldots,Y_n), \quad \text{for $n \geq 1$}, 
    \label{eq:intrinsic_resamlpling}
\end{equation}
where 
\begin{equation}
f_n(y) = \frac{1}{m h_{m}}\sum_{i=1}^{\nobs} K\big(\frac{y - y_i}{h_m} \big) + 
\frac{1}{m h_{m}}\sum_{i=1}^{n} K\big(\frac{y - Y_i}{h_m} \big), \quad m = \nobs + n.
    \label{eq:kde_intrinsic}
\end{equation}
When the number $n$ of one-step-ahead predictions tends to infinity and $F_n(y) = \int_{-\infty}^y f_n(u)\,\dd u$ converges to some limiting distribution $F_{\infty}$ in an appropriate sense, we (heuristically speaking) obtain a full sample $(y_{1:\nobs}, Y_1,Y_2,\ldots)$, and based on these data compute our parameter of interest, $\theta(F_{\infty}) = \theta(y_{1:\nobs}, Y_1,Y_2,\ldots)$. As the simulated data are random, so is $\theta(F_{\infty})$, and the distribution of $\theta(F_{\infty})$ is called the {\it martingale posterior}, a term introduced by~\citet{fong2023martingale}. The martingale posterior is a posterior in the sense that it is a distribution over the parameter space which depends on the observed data $y_{1:\nobs}$, and, typically, one relies on martingale methods to ensure its existence, hence the concatenation of the two terms. Crucially, when the observed and the simulated data form an exchangeable sequence, the martingale posterior coincides with the traditional Bayesian posterior, providing a theoretical safety net which anchors the method under exchangeability.  

The primary aim of this paper is to show that under certain conditions on the kernel density estimator (KDE), and for various mappings $\theta(\cdot)$, the limiting random variable $\theta(F_{\infty})$ does indeed exist and satisfies desirable properties. A crucial element in proving the existence of a martingale posterior is showing that the predictive distributions $\alpha_n(A) = \Pr(Y_{n+1} \in A \given Y_1,\ldots,Y_n) = \int_{A} f_n(y) \,\dd y,\, A\in \calB(\real)$ converge weakly almost surely ({\as}) to a random probability measure $\alpha$. That is,
\begin{equation}
\Pr (\{ \omega \in \Omega \colon \alpha_n(\cdot,\omega) \Rightarrow \alpha(\cdot,\omega)\} )= 1, 
    \notag
\end{equation}
a notion we denote by $\alpha_n \Rightarrow \alpha$ {\as}, with the bold arrow denoting weak convergence of probability measures.\footnote{And, as usual, we also write $X_n \Rightarrow X$ whenever $X_n \sim P_n$, $X \sim P$, and $P_n \Rightarrow P$.} Weak convergence {\as}~is crucial because it ensures the sequence $(Y_n)_{n \geq 1}$ is {\it asymptotically exchangeable} (see \citet[p.~60]{aldous1985exchangeability}, and also \citet[Theorem~2]{leisen2025weak}). This means that 
\begin{equation}
(Y_{n+1},Y_{n+2},\ldots ) \Rightarrow (Z_1,Z_2,\ldots),
    \notag
\end{equation}
for an exchangeable sequence  $(Z_1,Z_2,\ldots)$ with directing measure $\alpha$. That is, given $\alpha$ the $Z_i$ are conditionally i.i.d.~according to $\alpha$. The existence of the directing measure for the exchangeable sequences is guaranteed by de Finetti{'}s theorem, see for example \citet[Theorem~3.1, p.~19]{aldous1985exchangeability} or \citet[Theorem~1.49, p.~29]{schervish1995theory} for proofs and discussions.

The study of kernel density estimation via predictive Bayesian inference has been treated previously in the literature. \citet[Sect.~4.3]{battiston2025bayesian} studied a recursive version of KDEs with individual bandwidths for each observation in the sequence (instead of letting the bandwidths depend on the sample size, as we do), and required that the bandwidths decayed exponentially fast. Their arguments therefore do not cover the canonical KDEs, which are those of the form~\eqref{eq:kde1} with bandwidth decay rate $n^{-a}$ for some $a>0$. Recently, however, \citet{hilbert2026predictive} studied the same resampling scheme as we do in this paper, with bandwidths $h_n \propto n^{-a}$, and proved asymptotic exchangeability using an argument combining characteristic functions and martingale theory. Furthermore, he showed the rather surprising result that, under weak conditions, the limiting distribution has compact support. Hilbert{'}s paper and the present paper have the same point of departure (i.e., predictive KDEs), but go in somewhat different and complementary directions. The proofs of asymptotic exchangeability are also different.

In the present work, we provide an independent proof (see Section~\ref{sec:asweakconv}) that the predictive distributions $\alpha_n$ associated with KDEs converge weakly {\as}~to a limiting random measures $\alpha$, and hence asymptotic exchangeability of the $Y_n$ sampled from $\alpha_n$. In the predictive Bayes literature, such results are usually proved by verifying that the predictive
sequences are martingales, quasi-martingales \citep{fortini2026principled}, or nonnegative almost supermartingales \citep{battiston2025bayesian}. Predictives that are martingales correspond to $(Y_n)$ sequences that are conditionally identically distributed (c.i.d.), a concept introduced by~\citet{berti2004limit}; while predictives that are form nonnegative almost supermartingales correspond to $Y_n$ sequences that are almost c.i.d.~(a.c.i.d.), which was recently introduced by~\citet{battiston2025bayesian}. As is common in the literature, we may refer to predictives as being c.i.d.~or a.c.i.d.~whenever their corresponding data sequences have any of these properties. We prove explicitly, however, that KDEs with Gaussian kernels $K = \phi$ satisfy neither of these criteria (Prop.~\ref{prop:notacid}). Thus, these KDE predictives provide a non-trivial example of an {\as}~weakly convergent predictive sequence which is neither c.i.d., nor a quasi-martingale, nor a.c.i.d.

In Section~\ref{sec:abscont}, we show that for Gaussian kernels $K = \phi$, the limiting distribution $\alpha$ is absolutely continuous with respect to Lebesgue measure, which means we can derive credibility intervals to assess posterior uncertainty of the kernel density estimate itself. We conjecture that this result holds more generally, and not merely for Gaussian kernels. In Section~\ref{sec:int_to_limit} we consider estimators of various moments of the limiting distribution, and show that these are {\as}~convergent. Finally, in Section~\ref{sec:realdata}, we use our predictive resampling algorithm to analyse two real datasets. All proofs omitted in the main text, along with background and some extra material, can be found in the appendices.

\section{Predictive resampling with kernel densities}\label{sec:predresample_kde}
In this section we study a resampling scheme that is (notationally) less involved than, but closely related to, that described in~\eqref{eq:kde_intrinsic}. The sequence $(Y_n)_{n \geq 1}$ of random variables is defined on a probability space $(\Omega,\calF,\Pr)$, and each $Y_n$ takes its values in $\real$ equipped with the Borel {\sigalg} $\calB$. Let $(\calG_n)_{n \geq 0}$ be the natural filtration of $(Y_n)_{n \geq 1}$, with $\calG_0$ the trivial {\sigalg}. The sequences $(a_n)_{n \geq 0}$ of predictive distributions we study in this section are given by 
\begin{equation}
\alpha_n(A) = \int_A f_n(y) \,\dd y, \quad \text{with}\quad
f_n(y) = \frac{1}{nh_n}\sum_{i=1}^n K \big( \frac{y - Y_i}{h_n}\big) , \quad n \geq 1,
    \label{eq:predkernel1}
\end{equation}
for $A\in\calB$, where $K$ is a density and the bandwidths take the form
\begin{equation}
h_n =cn^{-a}\quad \text{for some $a>0$, $c > 0$}.
    \label{eq:bandwidth}
\end{equation}
The initial predictive $\alpha_0$ is some fixed distribution (that may be formed from the observed data $y_{1},\ldots,y_{\nobs}$). The predictive resampling scheme we study is then
\begin{equation}
Y_{n+1}\given \calG_n \sim \alpha_n, \quad \text{for $n \geq 0$}. 
    \notag
\end{equation}
That the sequence $(Y_n)_{n \geq 1}$ is well-defined, i.e., that there exists a probability measure with $(\alpha_n)_{n \geq 0}$ as its conditionals, is ensured by the Ionescu--Tulcea theorem \citep[Theorem~6.17, p.~116]{kallenberg2002foundations}. The predictives are random probability measures, meaning that $\omega \mapsto \alpha_n(A,\omega)$ is $\calF$-measurable for each $A \in \calB$; and that $A\mapsto \alpha_n(A,\omega)$ is a probability measure on $(\real,\calB)$ for each $\omega$. Note that~\eqref{eq:bandwidth} contains the standard choice of $h_n \propto n^{-1/(2 \beta + 1)}$, for some $\beta \geq 1$ (see \citet[Theorem 1.1, p.~9]{tsybakov2009nonparametric}), which is known to have desirable frequentist properties. A standard choice of kernel $K$, which also forms a key example in the present paper, is the Gaussian kernel, given by $K(z) = \phi(z) = (2\pi)^{-1/2}\exp(-\half z^2)$ for $z \in \real$. 

\subsection{Almost sure weak convergence}\label{sec:asweakconv}
To prove our main weak convergence result for kernel density based predictives, we rely on Theorem~2.2 in~\citet[p.~92]{berti2006almost}. According to that theorem, if $(\alpha_n)_{n \geq 0}$ is a sequence of random probability measures on a Polish space (in this paper we are working with random measures on $\real$ equipped with the Euclidean metric, which is Polish), 
then 
\begin{equation}
\alpha_n \Rightarrow \alpha,\quad \as
    \label{eq:berti1}
\end{equation}
is equivalent to 
\begin{equation}
\text{$\int g(y)\,\alpha_n(\dd y)$ is {\as}~convergent, for all $g \in C_b$},
    \notag
\end{equation}
where $C_b$ is the set of bounded and continuous functions. For the case where $(\alpha_n)_{n \geq 0}$ are random measures on $(\real,\calB)$, we strengthen this (see Appendix~\ref{moreweak}) to an equivalence between~\eqref{eq:berti1} and
\begin{equation}
\text{$\int g(y)\,\alpha_n(\dd y)$ is {\as}~convergent, for all $g \in C_b^{\infty}$},
    \label{eq:ourcond}
\end{equation}
where $C_b^{\infty}$ is the set of all bounded and continuous functions with bounded and continuous derivatives of all orders. Using this equivalence we obtain the following theorem.
\begin{thm}\label{thm:weakconvas} Let $(\alpha_n)_{n \geq 0}$ be as defined in~\eqref{eq:predkernel1}, and suppose that $\int z K(z) \,\dd z = 0$ and $\int z^2K(z) \,\dd z < \infty$. Then there is a random probability measure $\alpha$ such that $\alpha_n \Rightarrow \alpha$ \as. 
\end{thm}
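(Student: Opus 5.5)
The plan is to verify condition~\eqref{eq:ourcond}: by the strengthened Berti--Pratelli--Rigo equivalence of Appendix~\ref{moreweak}, this gives the theorem. So fix $g \in C_b^\infty$ and write $\sigma^2 = \int z^2 K(z)\,\dd z$. For $h>0$ define the smoothed function
\begin{equation}
g_h(y) = \int g(y + hz) K(z)\,\dd z .
\notag
\end{equation}
After the substitution $z = (y-Y_i)/h_n$ we get
\begin{equation}
G_n := \int g\,\dd\alpha_n = \frac1n\sum_{i=1}^n g_{h_n}(Y_i).
\notag
\end{equation}
Since $|G_n| \le \norm{g}_\infty$, it suffices to show that $(G_n)$ is a quasi-martingale with a deterministic, summable bound on the conditional drifts. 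Then $G_n - \sum_{k<n}\E[G_{k+1}-G_k \given \calG_k]$ is a uniformly bounded martingale, which converges {\as}, and hence $G_n$ converges {\as}.

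The first step is two deterministic estimates, both of which use only $\int zK = 0$ and $\sigma^2<\infty$.

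(i) By Taylor's theorem with the mean-zero property, $|g_h(y) - g(y)| \le \tfrac12 \sigma^2 \norm{g''}_\infty h^2$ uniformly in $y$.

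(ii) Differentiating under the integral in $h$ gives $\partial_h g_h(y) = \int z\,g'(y+hz)K(z)\,\dd z = \int z\,(g'(y+hz)-g'(y))K(z)\,\dd z$. Hence $|\partial_h g_h(y)| \le \sigma^2\norm{g''}_\infty\, h$, and therefore
\begin{equation}
\sup_y |g_{h_{n+1}}(y) - g_{h_n}(y)| \le \sigma^2 \norm{g''}_\infty\, h_n\,(h_n - h_{n+1}) = O(n^{-2a-1}).
\notag
\end{equation}

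The second step is the drift computation. Write
\begin{equation}
G_{n+1} = \frac{1}{n+1}\sum_{i=1}^n g_{h_{n+1}}(Y_i) + \frac{1}{n+1} g_{h_{n+1}}(Y_{n+1}).
\notag
\end{equation}
Since $Y_{n+1}\given\calG_n\sim\alpha_n$, we have $\E[g_{h_{n+1}}(Y_{n+1})\given\calG_n] = \int g_{h_{n+1}}\,\dd\alpha_n = \frac1n\sum_i (g_{h_{n+1}})_{h_n}(Y_i)$. Subtracting $G_n = \frac{n}{n+1}G_n + \frac{1}{n+1}G_n$ yields
\begin{equation}
\E[G_{n+1}-G_n\given\calG_n] = \frac{1}{n+1}\sum_{i=1}^n \bigl(g_{h_{n+1}}-g_{h_n}\bigr)(Y_i) + \frac{1}{n+1}\cdot\frac1n\sum_{i=1}^n \bigl((g_{h_{n+1}})_{h_n} - g_{h_n}\bigr)(Y_i).
\notag
\end{equation}
The first sum is $O(n^{-2a-1})$ by (ii). For the second, apply (i) to the function $g_{h_{n+1}}$, whose second derivative is bounded by $\norm{g''}_\infty$; this gives the bound $\frac{1}{n+1}O(h_n^2) = O(n^{-2a-1})$. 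Both bounds are deterministic and summable for every $a>0$, which completes the argument.

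The main obstacle is the first drift term. A naive bound $|g_{h_{n+1}} - g_{h_n}| \le |g_{h_{n+1}}-g| + |g - g_{h_n}| = O(h_n^2)$ is multiplied by the factor $n/(n+1)\approx 1$, not by $1/n$. It would therefore be summable only when $a>1/2$. The fix is the derivative-in-$h$ estimate (ii). Its extra factor $h_n - h_{n+1} = O(n^{-a-1})$, obtained via the mean-zero condition, is what makes the argument work for all $a>0$. I would check carefully that differentiation under the integral is justified, which follows by dominated convergence since $g'$ is bounded and $\int|z|K<\infty$.
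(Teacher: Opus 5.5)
Your argument is correct, but it takes a different route from the paper's: you prove convergence of $G_n=\int g\,\dd\alpha_n$ directly, whereas the paper never tracks $G_n$.

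\textbf{The paper's route.} It uses your estimate (i) only to write $G_n=Z_n+O(h_n^2)$, where $Z_n=n^{-1}\sum_{i=1}^n g(Y_i)$ is the empirical mean. From $Z_n=(1-1/n)Z_{n-1}+g(Y_n)/n$ it gets $\E\{Z_n\given\calG_{n-1}\}-Z_{n-1}=n^{-1}(G_{n-1}-Z_{n-1})=O(h_{n-1}^2/n)$. Because $Z_n$ does not involve the bandwidth, the term you call the main obstacle never appears, and your estimate (ii) is not needed. Convergence of $Z_n$ then follows from the Robbins--Siegmund theorem for nonnegative almost supermartingales, and $G_n-Z_n\to 0$ transfers it to $G_n$.

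\textbf{What each route buys.} Working with $G_n$ costs you the extra estimate (ii). In exchange you get a two-sided deterministic drift bound, so a bounded-martingale (Doob decomposition) argument suffices and no sign condition on $g$ is needed. The paper's use of Robbins--Siegmund forces a reduction to $g\geq 0$. It does this via $g^{\pm}$, which are not in $C_b^\infty$ in general; adding the constant $\norm{g}_\infty$ would be the clean fix, and your approach avoids the issue entirely.

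\textbf{Two small remarks on your write-up.} First, (ii) integrates to the bound $\tfrac12\sigma^2\norm{g''}_\infty(h_n^2-h_{n+1}^2)$, which telescopes, so the first drift term is summable for any decreasing bandwidth sequence. Second, for the second drift term, (i) applied to $g_{h_{n+1}}$ controls only $(g_{h_{n+1}})_{h_n}-g_{h_{n+1}}$. You also need $|g_{h_{n+1}}-g_{h_n}|=O(h_n^2)$, which (ii) already supplies.
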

\begin{proof} For $f \in C_b$ write $f = f^{+} - f^{-}$. Then $f^{+}(x) = \half \{|f(x)| + f(x) \}$ and $f^{-}(x) = \half \{f(x) - |f(x)|\}$ showing that both $f^{+}$ and $f^{-}$ are in $C_b$, and they are not nonnegative. It therefore suffices to consider nonnegative $g \in C_b^\infty$. A Taylor expansion yields
\begin{align*}
\E\,\{g(Y_{n+1}) \given \calG_n      \}
 = \frac{1}{n}\sum_{i=1}^n\int g(Y_i)
+ \frac{h_n^2}{2n} \sum_{i=1}^n  \int g^{\prime\prime}(\tilde{y}_i) z^2K(z) \,\dd z \quad {\as}, 
\end{align*}
where $\tilde{y}_i$ is between $h_n z + Y_i$ and $Y_i$. Since $g^{\prime\prime} \in C_b^{\infty}$, and $\int z^2 K(z)\,\dd z < \infty$, 
\begin{equation}
\big| \frac{h_n^2}{2n} \sum_{i=1}^n  \int g^{\prime\prime}(\tilde{y}_i) z^2K(z) \,\dd z \big| \leq C h_n^2, 
    \notag
\end{equation}
where $C$ is a constant only depending on $g^{\prime\prime}$ and $\int z^2 K(z)\,\dd z$. Thus 
\begin{equation}
\E\,\{g(Y_{n+1}) \given \calG_n \} = n^{-1}\sum_{i=1}^n g(Y_i) + O(h_{n}^2)\quad {\as}.
    \notag
\end{equation}
Write $Z_n = n^{-1} \sum_{i=1}^n g(Y_i)$ for $n \geq 1$. Then $Z_n = (1 - 1/n) Z_{n-1} + g(Y_n)/n$, which by the above entails that
\begin{align*}
\E\,\{ Z_n\given\calG_{n-1} \}
\leq Z_{n-1} + Ch_n^2/n\quad {\as}. 
\end{align*}
With $h_n$ chosen according to~\eqref{eq:bandwidth}, we have $\sum_{j=1}^\infty h_j^2/j \leq \sum_{j=1}^\infty j^{-(1+2a)} < \infty$ since $a > 0$, and so the convergence theorem for nonnegative almost supermartingales of \citet{robbins1971convergence} applies (using that $g \geq 0$), implying that $\lim_n Z_n$ exists and is \as~finite. It then follows that 
\begin{equation}
\lim_n \E\, \{ g(Y_{n+1}) \given \calG_n\} = \lim_n\{ Z_n + O(h_n^2) \} = \lim_n Z_n.
    \label{eq:conv1}
\end{equation}
But from Lemma~\ref{lemma:weakconvCbinfty} in the appendix, we have that~\eqref{eq:conv1} is equivalent to there being a random probability measure $\alpha$ such that $\alpha_n \Rightarrow \alpha$ {\as} 
\end{proof}
Note that the proof of Theorem~\ref{thm:weakconvas} does not employ any of the three convergence criteria (c.i.d., quasi-martingale, a.c.i.d.) mentioned in the Introduction. As a brief summary, a sequence of predictives is 
\begin{enumerate}[label=(\roman*)]
    \item c.i.d.~if, for all $A\in\calB$, $\E\,\{\alpha_{n+1}(A)\given \calG_n\} = \alpha_n(A)$ {\as}; 
    \item a quasi-martingale if, for all $A\in\calB$, $\sum_{n }\E\,|\E\,\{\alpha_{n+1}(A) \given \calG_n\} - \alpha_n(A)| < \infty$;
    \item a.c.i.d.~if, for all $A\in\calB$, $\alpha_{n+1}(A) \leq \alpha_n(A) + \xi_n$ {\as}~for a nonnegative and {\as}~summable sequence $(\xi_n)$.
\end{enumerate}
In Appendix~\ref{app:convcriteria} we look at the relation between these three convergence criteria. In particular, we show that predictives satisfying the quasi-martingale criterion are a.c.i.d., but that an a.c.i.d.~sequence is not necessarily a quasi-martingale. It is well known that these three criteria give sufficient conditions for {\as}~weak convergence, but that they are not necessary, which the following lemma confirms.
\begin{prop}\label{prop:notacid} Let $(Y_n)_{n \geq 1}$ be a sequence generated according to~\ref{eq:predkernel1}, with $K = \phi$, where $\phi(z) = \exp(-\half z^2)/\sqrt{2\pi}$ is the Gaussian kernel. Then $(Y_n)_{n \geq 1}$ is not a.c.i.d.
\end{prop}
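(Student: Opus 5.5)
The plan is to show that some set $A=(-\infty,t]$ violates the a.c.i.d.\ inequality on an event of positive probability. The smallest admissible choice in criterion (iii) is $\xi_n=\{\alpha_{n+1}(A)-\alpha_n(A)\}^+$, so it is enough to show that
\begin{equation*}
\sum_n\{\alpha_{n+1}(A)-\alpha_n(A)\}^+=\infty
\end{equation*}
with positive probability. Throughout, $\Phi$ denotes the standard normal distribution function, so that $\alpha_n((-\infty,t])=n^{-1}\sum_{i\le n}\Phi((t-Y_i)/h_n)$.

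\textbf{Step 1: decomposition of the increment.} Set $u_{i,n}=(t-Y_i)/h_n$ and $\beta_n=n^{-1}\sum_{i\le n}\Phi(u_{i,n+1})$. Then
\begin{equation*}
\alpha_{n+1}(A)-\alpha_n(A)=\frac{\Phi(u_{n+1,n+1})-\beta_n}{n+1}+\frac{1}{n}\sum_{i=1}^n\big\{\Phi(u_{i,n+1})-\Phi(u_{i,n})\big\}.
\end{equation*}
We call the first term the \emph{new-point term} and the second the \emph{bandwidth term}. Since $h_n=cn^{-a}$, we have $u_{i,n+1}-u_{i,n}=u_{i,n}\{(1+1/n)^a-1\}$. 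A mean-value bound then gives
\begin{equation*}
|\text{bandwidth term}|\le \frac{C}{n}\cdot\frac1n\sum_{i\le n}\psi(u_{i,n}),
\end{equation*}
where $\psi(u)=\sup_{|s|\le 2|u|/n}|u|\,\phi(u+s)$ is bounded and decays in $|u|$. This bound is only $O(1/n)$, which is not summable, so it must be shown to be $o(1/n)$. It is here that the behaviour of $\alpha$ at $t$ enters.

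\textbf{Step 2: choice of $t$.} By Theorem~\ref{thm:weakconvas}, $\alpha_n\Rightarrow\alpha$ a.s. The set of $t$ with $\Pr(\alpha(\{t\})>0)>0$ is countable, so we pick $t$ outside it. Then $\alpha_n((t-Mh_n,t+Mh_n))\to0$ a.s.\ for every fixed $M$. Since $f_n$ is a Gaussian mixture with bandwidth $h_n$, this small mass near $t$ forces the empirical fraction $n^{-1}\#\{i\le n:|u_{i,n}|\le M\}$ to tend to $0$. Together with the decay of $\psi$, this makes the bandwidth term $o(1/n)$ a.s. We also need $t$ such that $\Pr\{0<\alpha((-\infty,t])<1\}>0$, which requires $\alpha$ not to be a.s.\ a point mass. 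I expect this nondegeneracy to be the main obstacle. My first attempt would be a second-moment computation with $g(y)=y^2$ and $g(y)=y$ truncated via $C_b^\infty$ approximations. From the expansion in the proof of Theorem~\ref{thm:weakconvas}, the empirical variance $n^{-1}\sum Y_i^2-(n^{-1}\sum Y_i)^2$ has expectation bounded below by that of $\alpha_0$ plus positive bandwidth contributions minus $O(\sum 1/n^2)$ fluctuation terms. Uniform integrability would then pass this lower bound to $\alpha$. Failing that, I would argue directly that $\Pr(Y_{n}\le t)$ and $\Pr(Y_n>t)$ stay bounded away from $0$ for suitable $t$.

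\textbf{Step 3: divergence of the positive parts.} On $B=\{0<\alpha((-\infty,t])<1\}$, we have $\beta_n\to\alpha((-\infty,t])<1$. Define the event $E_n=\{Y_{n+1}\le t-Mh_{n+1}\}$ with $M$ fixed. Then $\Pr(E_n\given\calG_n)=\alpha_n((-\infty,t-Mh_{n+1}])\to\alpha((-\infty,t])>0$ on $B$, using again that there is no atom at $t$. On $E_n$ the new-point term is at least
\begin{equation*}
\frac{\Phi(M)-\beta_n}{n+1}\ge\frac{\delta}{n+1}
\end{equation*}
for large $n$, with some random $\delta>0$. Combined with Step 1, this gives
\begin{equation*}
\{\alpha_{n+1}(A)-\alpha_n(A)\}^+\ge\frac{\delta}{n+1}1_{E_n}-o(1/n).
\end{equation*}
Finally, $\sum_n\{1_{E_n}-\Pr(E_n\given\calG_n)\}/(n+1)$ is a martingale with square-summable increments, so it converges a.s. Since $\sum_n\Pr(E_n\given\calG_n)/(n+1)=\infty$ on $B$, it follows that $\sum_n 1_{E_n}/(n+1)=\infty$ on $B$. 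Hence no a.s.\ summable $\xi_n$ can satisfy (iii) for this $A$, and $(Y_n)_{n\ge1}$ is not a.c.i.d.
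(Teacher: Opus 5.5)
The main problem is which notion you refute. You show that $\sum_n\Delta_n^+=\infty$ with positive probability, where $\Delta_n=\alpha_{n+1}(A)-\alpha_n(A)$. This contradicts (iii) of Section~\ref{sec:asweakconv} read literally, but that literal reading is not the notion the paper works with. Your Step~3 applies verbatim to the exchangeable Blackwell--MacQueen predictive $\alpha_n=(\theta\alpha_0+\sum_{i\le n}\delta_{Y_i})/(\theta+n)$, whose increments are $\{\delta_{Y_{n+1}}(A)-\alpha_n(A)\}/(\theta+n+1)$. Under the literal reading, c.i.d.\ sequences would therefore fail to be a.c.i.d. That contradicts the paper's statements that c.i.d.\ and quasi-martingale predictives are a.c.i.d. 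The paper's proof, Appendix~\ref{app:convcriteria} and the phrase ``nonnegative almost supermartingale'' all use the conditional form $\E\{\alpha_{n+1}(A)\given\calG_n\}\le\alpha_n(A)+\xi_n$. Against that form your argument proves nothing. The divergence you obtain comes entirely from the conditionally centred fluctuation of the new-point term, which the conditional drift does not see.

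Nor can you repair this by switching to a drift argument with half-lines. The conditional mean of your new-point term is $\{n(n+1)\}^{-1}\sum_{i\le n}\{\Phi((t-Y_i)/s_n)-\Phi(u_{i,n+1})\}$, with $s_n^2=h_n^2+h_{n+1}^2$. This is localised at $t$ exactly like your bandwidth term, so your own Step~2 gives $\E\{\alpha_{n+1}(A)\given\calG_n\}-\alpha_n(A)=o(1/n)$ a.s. The same cancellation undermines the paper's lower bound. In~\eqref{eq:not-acid-three-terms}, the two terms it discards because they tend to zero are of order $1/n$. Together they equal $-\alpha_n(A)/(n+1)$ plus a bandwidth term, which cancels the last term up to a localised remainder. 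For a sanity check, let $b$ lie many bandwidths below every $Y_i$. Then $\alpha_n(A)$ and $\alpha_{n+1}(A)$ are both essentially $1$, so the drift is essentially $0$, not at least $1/\{2(n+1)\}$. Proving the intended statement would require a set $A$ whose drift has non-summable positive part, and neither argument supplies one.

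Granting the pathwise reading, your proof is sound apart from the step you flag, and neither of your suggested fixes closes it. Having $\Pr(Y_n\le t)$ and $\Pr(Y_n>t)$ bounded away from zero is compatible with $\alpha=\delta_Z$ for a random $Z$. The variance route needs moments of $\alpha_0$ that are not assumed. You may cite Theorem~\ref{thm:abscont}, as the paper does.

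Alternatively, there is a moment-free route valid for every $a>0$: apply your second-moment idea to $m_n(\xi)=\E|\widehat{\PP}_n(\xi)|^2$, where $\widehat{\PP}_n(\xi)=n^{-1}\sum_{j\le n}e^{i\xi Y_j}$.
\begin{itemize}
\item Since $\E\{e^{i\xi Y_{n+1}}\given\calG_n\}=r_n\widehat{\PP}_n(\xi)$ with $r_n=e^{-h_n^2\xi^2/2}$, one gets $1-m_{n+1}=(1-m_n)(1-c_n)+2n(1-r_n)/(n+1)^2$.
\item Here $c_n=\{1+2n(1-r_n)\}/(n+1)^2<1$ is summable, so $\liminf_n(1-m_n)>0$ for $\xi\neq0$.
\item Because $r_n\widehat{\PP}_n(\xi)=\int e^{i\xi y}\,\alpha_n(\dd y)$ converges a.s.\ by Theorem~\ref{thm:weakconvas}, bounded convergence shows that $\alpha$ is not a.s.\ a point mass.
\item A countable dense set of $t$ avoiding the atoms of $\E\,\alpha$ then yields your $t$.
\end{itemize}

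Two small repairs in Step~3 remain. First, fix $M$ only on $\{\alpha((-\infty,t])<\Phi(M)\}$; countably many $M$ cover $B$. Second, do not sum the $-o(1/n)$ term. Instead use that $\Delta_n\ge\delta/\{2(n+1)\}$ on $E_n$ for large $n$, while $\Delta_n^+\ge0$ always.
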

This theorem is proved in Appendix~\ref{app:notacid}. Note that since $(Y_n)_{n \geq 1}$ is not a.c.i.d., neither is it c.i.d., nor does its predictives form a quasi-martingale. Theorem~\ref{thm:weakconvas} and Proposition~\ref{prop:notacid} thus combine to give a non-trivial example of sequence which is not a.c.i.d., but nevertheless asymptotically exchangeable. It is tempting to conjecture that no non-trivial kernels bring about predictives that are a.c.i.d., but this remains to be verified (or disproved).   

\subsection{Absolute continuity of the limiting distribution}\label{sec:abscont}
When $\alpha_0$ is a absolutely continuous with respect to the Lebesgue measure, i.e., $\alpha_0 \ll \Leb$, then we have, by construction, that $\alpha_n \ll \Leb~\as$~for all $n$. It therefore seems a reasonable supposition that also $\alpha\ll\Leb~\as$, that is, that there exists a nonnegative function $f:\mathbb{R}\times\Omega \to [0, \infty)$ such that $\alpha(\dd y, \omega) = f(y, \omega)\,\dd y$ for almost all $\omega\in\Omega$. This is, however, not immediate. In Appendix~\ref{app:abscont}, we prove the result for Gaussian kernels.

\begin{thm}\label{thm:abscont} Let $(Y_n)_{n \geq 1}$ be a sequence generated according to~\eqref{eq:predkernel1}, with $K = \phi$, where $\phi(z) = \exp(-\half z^2)/\sqrt{2\pi}$ is the Gaussian kernel. Let $\alpha$ be the {\as}~weak limit of the predictives associated with $(Y_n)_{n \geq 1}$ (which exists due to Theorem~\ref{thm:weakconvas}). Then $\alpha \ll {\rm Leb}$ {\as}. 
\end{thm}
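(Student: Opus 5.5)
We plan to show that $\alpha$ a.s.\ has a square-integrable density, using Fourier analysis: a finite measure $\mu$ on $\real$ with $\int |\hat\mu(t)|^2\,\dd t<\infty$ is absolutely continuous (its density is the $L^2$ inverse Fourier transform, by Plancherel). Write $\hat\alpha_n(t)=\int e^{ity}\alpha_n(\dd y)$. For the Gaussian kernel,
\begin{equation*}
\hat\alpha_n(t)=e^{-h_n^2t^2/2}\,\psi_n(t),\qquad \psi_n(t)=\frac1n\sum_{i=1}^n e^{itY_i}.
\end{equation*}
Since $\alpha_n\Rightarrow\alpha$ a.s.\ by Theorem~\ref{thm:weakconvas}, $\hat\alpha_n(t)\to\hat\alpha(t)$ for every $t$ a.s. By Fatou's lemma (applied twice),
\begin{equation*}
\E\int|\hat\alpha(t)|^2\,\dd t\le \liminf_n \int \E\,|\hat\alpha_n(t)|^2\,\dd t,
\end{equation*}
so it suffices to bound $\int \E|\hat\alpha_n(t)|^2\,\dd t$ uniformly in $n$. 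It may be more convenient to bound $\int\E|\psi_n(t)|^2 w(t)\,\dd t$ for a weight $w$. Equivalently, we want $\sup_n \E\int f_n(y)^2\,\dd y<\infty$, since $\int|\hat\alpha_n|^2=2\pi\int f_n^2$. Then a.s.\ $\liminf_n\|f_n\|_2<\infty$, and a weakly convergent sequence with bounded $L^2$ norms along a subsequence has a limit with an $L^2$ density (weak-$L^2$ compactness identifies the limit). This gives an alternative argument avoiding Fourier transforms.

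The key computation is a recursion for $S_n=\E\int f_n^2 = \E\,\frac{1}{n^2}\sum_{i,j} (\phi_{h_n}*\phi_{h_n})(Y_i-Y_j)$, where $\phi_{h}*\phi_h=\phi_{\sqrt2 h}$. The Gaussian semigroup makes this explicit. Decompose $n^2 S_n$ into diagonal terms $n\,\phi_{\sqrt2h_n}(0)\asymp n/h_n=n^{1+a}/c$ and off-diagonal pairs. For a new point $Y_{n+1}$ drawn from $f_n$, the pair with an old $Y_j$ has conditional expectation $\E[\phi_{\sqrt2h}(Y_{n+1}-Y_j)\mid\calG_n]=\frac1n\sum_i\phi_{\sqrt{2h^2+h_n^2}}(Y_i-Y_j)$. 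This is again a Gaussian-smoothed pair functional at a slightly larger bandwidth. Tracking the quantities $Q_n(s)=\E\frac1{n^2}\sum_{i,j}\phi_s(Y_i-Y_j)$ gives the recursion
\begin{equation*}
Q_{n+1}(s)\approx\Bigl(1-\tfrac2{n}\Bigr)Q_n(s)+\tfrac2{n}Q_n\bigl(\sqrt{s^2+h_n^2}\bigr)+\tfrac{1}{n^2}\phi_s(0).
\end{equation*}
Using that $Q_n(s)$ is nonincreasing in $s$ (by positive definiteness, the Fourier form $\frac1{2\pi}\int e^{-s^2t^2/2}\E|\psi_n(t)|^2dt$), we get $Q_n(\sqrt{s^2+h^2})\le Q_n(s)$. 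Hence $Q_{n+1}(s)\le Q_n(s)+\phi_s(0)/n^2$, and so $Q_n(s)\le Q_1(s)+C/s$ is bounded for \emph{fixed} $s$. Then letting $n\to\infty$ yields $\E\int|\hat\alpha(t)|^2e^{-s^2t^2/2}dt$ bounded, which is useless as $s\to0$ unless the bound is uniform in $s$.

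This is the main obstacle: getting a bound on $Q_n(\sqrt2 h_n)$ uniform in $n$, where the diagonal contribution $\sum_k \phi_{\sqrt2h_n}(0)/k^2$ is fine but the bandwidth shrinks. I would refine the recursion using the actual gain: the true smoothing bandwidth $\sqrt{s^2+h_k^2}$ versus $s$. I would work in Fourier form, with $g_n(t)=\E|\psi_n(t)|^2$:
\begin{equation*}
g_{n+1}(t)=\Bigl(1-\tfrac2{n+1}\Bigr)g_n(t)+\tfrac{2}{n+1}e^{-h_n^2t^2/2}g_n(t)+O(n^{-2}),
\end{equation*}
up to lower-order terms. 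So $g_n(t)\lesssim \prod_k\bigl(1-\tfrac{2}{k}(1-e^{-h_k^2t^2/2})\bigr)+\sum_k k^{-2}\prod_{j>k}(\cdots)$. For $|t|\gg 1$, the product decays once $h_k|t|\gtrsim1$, i.e.\ for $k\lesssim |t|^{1/a}$. This gives $g_n(t)\lesssim (K_t/n)^2+\sum_{k\ge K_t}k^{-2}\cdot(k/n)^2\cdots$, roughly $g_\infty(t)\lesssim |t|^{-1/a}$. If this crude bound is not integrable for $a\ge1$, I would sharpen it by exploiting that for $k\le K_t$ the factor is $(1-2/k)$, which yields decay like $(k/K_t)^2$. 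The resulting estimate is $\E|\hat\alpha(t)|^2\lesssim \sum_{k\ge K_t}k^{-2}\asymp |t|^{-1/a}$, which is integrable when $a<1$. For general $a$, I would instead bound $\E|\hat\alpha(t)|^{2}$ only on average and fall back to showing $\int|\hat\alpha|^2<\infty$ a.s.\ via the decay contributed by the diagonal terms. Making this step rigorous is where I expect the real work to lie.
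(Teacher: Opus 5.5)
\textbf{Case $a<1$: a correct proof by a different route.} For $a<1$ your argument is sound and differs from the paper's. The paper controls $\int f_n^2$ \emph{pathwise}. It shows that $Q_n=(n^2h_n)^{-1}\sum_{i,j}\phi\bigl((Y_i-Y_j)/h_n\bigr)$ is a.s.\ bounded via Robbins--Siegmund. It then combines de la Vall\'ee-Poussin (with $G(t)=t^2$), tightness of $\alpha_n(\cdot,\omega)$ and Dunford--Pettis to extract a weak-$L^1$ subsequential limit of $f_n(\cdot,\omega)$, identified with $\alpha(\cdot,\omega)$ through $C_b$.

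You need only a bound in expectation. Identification then uses Plancherel, or weak-$L^2$ compactness tested against $C_c$, with no tightness or Dunford--Pettis. Your recursion is in fact exact: with $g_n(t)=\E|\psi_n(t)|^2$,
\begin{equation*}
g_{n+1}(t)=\rho_n(t)\,g_n(t)+\frac{1}{(n+1)^2},\qquad \rho_n(t)=\frac{n^2+2n\,e^{-h_n^2t^2/2}}{(n+1)^2},\qquad g_1\equiv 1.
\end{equation*}
Dominated convergence gives $\E|\hat\alpha(t)|^2=\lim_n g_n(t)=:g_\infty(t)$, so $\E\int|\hat\alpha|^2=\int g_\infty$ and no uniform-in-$n$ bound is needed. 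This makes your route quantitative. It is also more robust than the paper's: as written, the paper's almost-supermartingale inequality has coefficient $2+(n/(n+1))^2h_n/h_{n+1}\to 3$ in front of $Q_n$, so Robbins--Siegmund does not apply directly.

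The step needing care is the one you flagged. Set $K_t=(c|t|)^{1/a}$, so that $h_j|t|=(K_t/j)^a$. You need $\prod_{j=k+1}^{K_t}\rho_j(t)\le C_a(k/K_t)^2$, which follows from $\sum_{j\le K_t}j^{-1}e^{-(K_t/j)^{2a}/2}=O(1)$ uniformly in $t$. Using only $1-e^{-h_j^2t^2/2}\ge 1-e^{-1/2}$ for $j\le K_t$ gives exponent $2(1-e^{-1/2})\approx 0.79$ instead of $2$, hence integrability only for $a<0.79$. With the sharp exponent, $g_\infty(t)\lesssim K_t^{-1}\asymp|t|^{-1/a}$, which is integrable exactly when $a<1$.

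\textbf{Case $a\ge 1$: the fallback is a dead end.} State $a<1$ as a hypothesis instead. The paper's proof uses it too, in its lemma on $Q_n$, where $\sum_n(n^2h_n)^{-1}<\infty$ is needed. The case $a=1$ is covered by neither argument.

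Your estimate is two-sided. For $j>k>K_t$ one has $1-\rho_j(t)\le (j+1)^{-2}+j^{-1}(K_t/j)^{2a}$. Hence $\prod_{j>k}\rho_j(t)\ge c_0>0$ and $g_\infty(t)\ge c_0\sum_{k>K_t}(k+1)^{-2}\gtrsim |t|^{-1/a}$. So $\E\int|\hat\alpha|^2=\infty$ for $a\ge1$. The diagonal source term $(n+1)^{-2}$ is exactly what creates this floor; it cannot supply decay.

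Worse, for $a>1$ the conclusion itself fails:
\begin{enumerate}[label=(\roman*)]
\item Realise $Y_{n+1}=Y_{P_{n+1}}+h_nZ_{n+1}$ with $P_{n+1}$ uniform on $\{1,\ldots,n\}$. In this random recursive tree, $j$ is an ancestor of $m>j$ with probability $1/j$.
\item Let $A_k(m)$ be the last ancestor of $m$ with index at most $k$. The conditional variance of $Y_m-Y_{A_k(m)}$ given the tree then has expectation at most $h_k^2+\sum_{v>k}h_{v-1}^2/v\le C_ah_k^2$.
\item Take $0<\eta<a-1$ and $E_k=\bigcup_{j\le k}[Y_j-k^{\eta}h_k,\,Y_j+k^{\eta}h_k]$. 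Chebyshev, Fatou and the portmanteau theorem (via $\PP_n\Rightarrow\alpha$) give $\E\,\alpha(E_k^c)\le C_ak^{-2\eta}$, while $\Leb(E_k)\le 2ck^{1-a+\eta}\to 0$.
\item Borel--Cantelli along $k=2^\ell$ shows that $\alpha$ is a.s.\ carried by a Lebesgue-null set, i.e.\ $\alpha\perp\Leb$ a.s.
\end{enumerate}
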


We highlight the importance of this theorem. Without it, credibility intervals for the kernel density estimate (when employing Gaussian kernels), like those in Section~\ref{sec:realdata}, have no justification. It is precisely the existence of the limiting density that allows us to assess the posterior uncertainty of the kernel density estimates.

\citet[Sect.~2]{berti2013exchangeable} give sufficient conditions for absolute continuity of the limiting distribution $\alpha$. Namely, that if $\alpha_n \ll {\rm Leb}$ {\as} for each $n$ and $\norm{\alpha_n - \alpha}_{\rm TV} \to 0$ {\as}, then $\alpha \ll {\rm Leb}$ a.s.,\footnote{In fact, Theorem~1 in~\citet[p.~2094]{berti2013exchangeable} says that whenever $(Y_n)_{n \geq 1}$ is c.i.d., the implication also goes in the other direction. Here, we are only interested in the forward direction.} where $\norm{\cdot}_{\text{TV}}$ denotes the total variation norm. Our proof of Theorem~\ref{thm:abscont}, however, relies on a different argument, and so the question as to whether $\norm{\alpha_n - \alpha}_{\rm TV} \to 0$ {\as}~remains an open problem.

\subsection{Uniform integrability and integration to the limit}\label{sec:int_to_limit} 
In studying the predictively resampled sequences $(Y_n)_{n \geq 1}$ it is important to control the moments of $f_n$. In particular, we need to establish that $(Y_n^p)_{n \geq 1}$ is uniformly integrable. The following identity, whose proof can be found in Appendix~\ref{app:momentlemma}, is useful in this regard.
\begin{lem}\label{momentlemma} For an integer $p \geq 1$, suppose that $\E\,|Y_1|^p$ is finite, and let $K$ be a kernel so that $\int |z|^p K(z) \,\dd z$ is finite. Then 
\begin{equation}
\E\,Y_{n+1}^p    
= c_p h_n^p + \sum_{\ell=1}^{n-1} \frac{h_{\ell}^p}{\ell + 1} 
+ \sum_{j=1}^{p-1} \binom{p}{j} \bigg( \sum_{\ell=1}^{n-1}\frac{h_{\ell}^{p-j} a_{\ell,j} }{\ell+1}\bigg) c_{p-j}
+ \sum_{j=1}^{p-1} \binom{p}{j} h_n^{p-j}c_{p-j}a_{n,j}
+ \E\, Y_1^p,
\notag
\end{equation}
where $a_{k,j} = k^{-1}\sum_{i=1}^k\E\, Y_i^j$ and $c_m = \int z^m K(z) \,\dd z$, so $c_m=0$ for odd $m$ if $K$ is symmetric. Moreover, the same identity holds if we replace $\E\, Y_i^j$ and $\int z^jK(z)\,\dd z$ by $\E\,|Y_i|^j$ and $\int |z|^j K(z)\,\dd z$, respectively, throughout.
\end{lem}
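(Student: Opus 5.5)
The plan is to derive a one-step recursion for $\E\,Y_{n+1}^p$ in terms of the Cesàro averages $a_{n,j}$, and then a recursion for $a_{n,p}$ itself, which telescopes into the stated closed form. The argument runs by induction on $n$, and simultaneously justifies finiteness of every moment that appears.

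\textbf{Step 1 (conditional moment).} From \eqref{eq:predkernel1}, the substitution $y = Y_i + h_n z$ gives
\begin{equation*}
\E\,\{Y_{n+1}^p \given \calG_n\} = \frac1n\sum_{i=1}^n \int (Y_i + h_n z)^p K(z)\,\dd z = \sum_{j=0}^{p}\binom{p}{j} h_n^{p-j} c_{p-j}\,\frac1n\sum_{i=1}^n Y_i^j ,
\end{equation*}
using $c_0=1$. Inductively, assume $\E\,|Y_i|^j<\infty$ for $i\le n$ and $j\le p$; this holds for $i=1$ by hypothesis. Since $\int |z|^p K(z)\,\dd z<\infty$ implies $\int|z|^mK(z)\,\dd z<\infty$ for $m\le p$, the same display with absolute values gives $\E\,|Y_{n+1}|^p<\infty$. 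Taking expectations then yields
\begin{equation*}
\E\,Y_{n+1}^p = a_{n,p} + c_p h_n^p + \sum_{j=1}^{p-1}\binom{p}{j} h_n^{p-j}c_{p-j}a_{n,j}.
\end{equation*}
This already produces the terms $c_p h_n^p$ and $\sum_j \binom{p}{j}h_n^{p-j}c_{p-j}a_{n,j}$ of the statement. It remains to express $a_{n,p}$.

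\textbf{Step 2 (recursion for $a_{n,p}$ and telescoping).} Since $a_{\ell+1,p} = a_{\ell,p} + (\E\,Y_{\ell+1}^p - a_{\ell,p})/(\ell+1)$, Step 1 at index $\ell$ gives
\begin{equation*}
a_{\ell+1,p} - a_{\ell,p} = \frac{1}{\ell+1}\Big\{c_p h_\ell^p + \sum_{j=1}^{p-1}\binom{p}{j}h_\ell^{p-j}c_{p-j}a_{\ell,j}\Big\}.
\end{equation*}
Summing over $\ell=1,\dots,n-1$ and using $a_{1,p}=\E\,Y_1^p$ gives $a_{n,p}$ as $\E\,Y_1^p$ plus the two sums over $\ell$ appearing in the statement. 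Substituting this into Step 1 gives the identity.

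The only point requiring care is bookkeeping of the constant in front of $\sum_\ell h_\ell^p/(\ell+1)$. The derivation gives it the factor $c_p$, so the displayed term should be read with that factor. It drops out only when $c_p=1$, and in any case it is harmless for the uniform-integrability use later.

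\textbf{Absolute-moment version.} For the absolute version, the same steps apply with $|Y_i+h_nz|^p \le (|Y_i|+h_n|z|)^p$ expanded binomially. This gives the analogous expression with $\E\,|Y_i|^j$ and $\int|z|^jK(z)\,\dd z$, but as an upper bound on $\E\,|Y_{n+1}|^p$. The recursion for the averages is then an inequality, which telescopes in the same way because all terms are nonnegative. I expect the main (mild) obstacle to be stating precisely in what sense the "same identity" holds here, namely as an inequality, and checking that nonnegativity of every term makes the telescoping preserve it.
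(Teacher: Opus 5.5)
Your proof is correct, but it organises the computation differently from the paper. Both start from the same one-step expansion $\E\{Y_{n+1}^p\given\calG_n\}=\sum_{j=0}^p\binom{p}{j}h_n^{p-j}c_{p-j}\,n^{-1}\sum_{i\le n}Y_i^j$.

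\textbf{The paper's route.} It computes $\E Y_2^p$, $\E Y_3^p$, $\E Y_4^p$ by hand, guesses the closed form, and verifies it by strong induction. In the induction step it substitutes the guessed formula for every $\E Y_i^p$ with $2\le i\le n+1$ into $(n+1)^{-1}\sum_i\E Y_i^p$. It then collapses the resulting double sums via $\sum_{i=2}^{n+1}\sum_{\ell=1}^{i-2}x_\ell/(\ell+1)=\sum_{\ell=1}^{n-1}(n-\ell)x_\ell/(\ell+1)$.

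\textbf{Your route.} You write $\E Y_{n+1}^p=a_{n,p}+d_n$ and combine this with the Cesàro update $a_{\ell+1,p}=a_{\ell,p}+(\E Y_{\ell+1}^p-a_{\ell,p})/(\ell+1)$, which telescopes directly. This derives the formula rather than verifying a guess, and it makes the weights $1/(\ell+1)$ transparent. It also avoids the paper's index bookkeeping, whose displayed intermediate steps contain slips (e.g.\ $\E Y_1^p$ in place of $(n+1)^{-1}\E Y_1^p$ in \eqref{eq:inductionstep}). Unlike the paper, you also track finiteness of all lower moments of $Y_i$ and of $K$, so the use of Fubini is justified.

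Both of your caveats are correct.
\begin{itemize}
\item The paper's own induction hypothesis in the appendix carries the factor $c_p$ in front of $\sum_\ell h_\ell^p/(\ell+1)$. Its omission from the displayed statement is therefore a typo.
\item The absolute-moment version cannot hold as an identity, despite the paper's remark that expectations can be replaced by absolute moments ``at all steps''. We have $|Y_i+h_nz|^p<(|Y_i|+h_n|z|)^p$ whenever $Y_i$ and $z$ have opposite signs. So in general (e.g.\ for the Gaussian kernel) one gets only the upper bound you describe. That bound is all Lemma~\ref{lemma:UI} actually uses, and its $p=1$ display should likewise read $\le$.
\end{itemize}

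One small remark: summing inequalities that all point the same way telescopes without any nonnegativity assumption. Your appeal to nonnegativity of the terms is therefore superfluous, though harmless.
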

One thing that is important to note, and that is important for the proof of the next lemma, is that with bandwidths $h_n$ of the form~\eqref{eq:bandwidth}, 
\begin{equation}
\sum_{n = 1}^{\infty} \frac{h_{n}^j}{n + 1} 
< \infty,\quad \text{for any $j \geq 1$}. 
    \notag
\end{equation}
We have the following lemma.
\begin{lem}\label{lemma:UI} If the conditions of Lemma~\ref{momentlemma} are in force, then $\sup_{n}\E\,|Y_n|^p < \infty$. In particular, $(Y_n^j)_{n \geq 1}$ is uniformly integrable for each $j = 1,\ldots,p-1$.
\end{lem}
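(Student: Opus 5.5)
We will use the absolute-value version of Lemma~\ref{momentlemma} and argue by induction on the moment order $j=1,\ldots,p$, writing $b_{n,j} = \E\,|Y_n|^j$ and $\bar c_m = \int |z|^m K(z)\,\dd z$. Since $\bar c_p<\infty$ and $K$ is a density, Lyapunov's inequality gives $\bar c_m<\infty$ for every $m\le p$. Likewise, $\E\,|Y_1|^p<\infty$ gives $\E\,|Y_1|^j<\infty$ for $j\le p$. So Lemma~\ref{momentlemma} can be applied with $p$ replaced by any $j\le p$.

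The induction runs as follows.

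\textbf{Base case $j=1$.} The identity reads $\E\,|Y_{n+1}| = \bar c_1 h_n + \sum_{\ell=1}^{n-1} h_\ell/(\ell+1) + \E\,|Y_1|$, since the inner sums over $j=1,\ldots,p-1$ are empty. Both $h_n\le c$ and $\sum_\ell h_\ell/(\ell+1)<\infty$ are bounded uniformly in $n$ by~\eqref{eq:bandwidth}. Hence $\sup_n b_{n,1}<\infty$.

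\textbf{Inductive step.} Suppose $\sup_n b_{n,i}=:B_i<\infty$ for all $i<j$. Then the Ces\`aro averages satisfy $a_{k,i}=k^{-1}\sum_{l\le k} b_{l,i}\le B_i$ for every $k$. Every term in the identity for $\E\,|Y_{n+1}|^j$ is then bounded uniformly in $n$:
\begin{itemize}
\item $\bar c_j h_n^j\le \bar c_j c^j$;
\item $\sum_{\ell} h_\ell^j/(\ell+1)<\infty$;
\item each double-sum term is at most $\binom{j}{i}\bar c_{j-i} B_i\sum_\ell h_\ell^{j-i}/(\ell+1)$, which is finite by the summability remark following Lemma~\ref{momentlemma} (here $j-i\ge1$);
\item each boundary term is at most $\binom{j}{i} c^{j-i}\bar c_{j-i}B_i$.
\end{itemize}
Hence $B_j<\infty$, and taking $j=p$ gives $\sup_n\E\,|Y_n|^p<\infty$.

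The one point that needs care is that the identity contains the averages $a_{\ell,i}$ only for orders $i<j$, never $i=j$. This is exactly what makes the induction close without any Gr\"onwall-type argument: the recursion in $j$ is triangular. I would double-check this against the statement of Lemma~\ref{momentlemma}; the inner sums run to $p-1$, so it holds. If it failed, the fallback would be a discrete Gr\"onwall bound exploiting the summability of $h_\ell^{m}/(\ell+1)$.

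For uniform integrability, fix $j\le p-1$. Then $|Y_n^j|^{p/j}=|Y_n|^p$ with $p/j>1$, so $\sup_n\E\,|Y_n^j|^{p/j}<\infty$. By de la Vall\'ee-Poussin's criterion, or directly from $\E\,|Y_n|^j 1\{|Y_n|^j>M\}\le M^{1-p/j}\sup_n\E\,|Y_n|^p\to0$ as $M\to\infty$, the family $(Y_n^j)_{n\geq1}$ is uniformly integrable.
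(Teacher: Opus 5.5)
Your proposal is correct and is essentially the paper's proof. Both argue by induction on the moment order via the absolute-value form of Lemma~\ref{momentlemma}, bound the Ces\`aro averages by the inductive hypothesis, and control the remaining terms with $h_n\le c$ and $\sum_\ell h_\ell^{m}/(\ell+1)<\infty$, while you also spell out the Lyapunov reduction to lower orders and the uniform-integrability step that the paper leaves implicit. One minor remark: the absolute-value version of Lemma~\ref{momentlemma} is really only an upper bound, not an identity, since it comes from $|hz+y|^j\le\sum_{i}\binom{j}{i}|hz|^{j-i}|y|^i$ and the nonnegativity of all coefficients in the recursion. An upper bound is all your argument needs.
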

\begin{proof} For $p = 1$, the absolute-value version of the expression in Lemma~\ref{momentlemma} reads $\E\,|Y_{n+1}| = c_1 h_n + \sum_{\ell + 1}^{n-1}h_{\ell}/(\ell + 1) + \E\,|Y_1|$. Since $h_n \leq h_1$ and the sum $\sum_{\ell + 1}^{n-1}h_{\ell}/(\ell + 1)$ is convergent as $n \to \infty$, we conclude that $\sup_{n}\E\,|Y_{n+1}| \leq c_1 h_1 + \sum_{\ell = 1}^{\infty} h_{\ell}/(\ell + 1)+ \E\,|Y_1| < \infty$. This yields the base case of an induction proof. Suppose that $\sup_{n}\E\,|Y_n|^j < \infty$ for $j = 1,\ldots,p-1$. We show that then $\sup_{n}\E\,|Y_n|^p < \infty$ as well. Let $\breve{\alpha}_{m,j} = m^{-1}\sum_{i=1}^m \E\,|Y_i|^j$. Then, using the induction hypothesis,    
\begin{align*}
\sup_{n}   \sum_{\ell=1}^{n-1}\frac{h_{\ell}^{p-j} \breve{a}_{\ell,j} }{\ell+1}  
\leq \sup_{n}\big(\sup_{\ell \leq n-1}\breve{a}_{\ell,j}\big) \sum_{\ell = 1}^{n-1}\frac{h_{\ell}^{p-j}}{\ell + 1}
\leq \sup_n \breve{a}_{n,j} \sum_{\ell = 1}^{\infty}\frac{h_{\ell}^{p-j}}{\ell + 1} < \infty, 
\end{align*}
because $\sup_n \breve{a}_{n,j} = \sup_n n^{-1}\sum_{i=1}^n \E\, |Y_i|^j \leq \sup_n \E\, |Y_n|^j < \infty$ by the hypothesis. The remaining terms in the expression for $\sup_{n}\E\,|Y_{n+1}|^p$ are finite by similar arguments. 
\end{proof}

For random variables $X_n$ converging in distribution to a random variables $X$, a standard result says that if $(X_n)_{n\geq 1}$ is uniformly integrable, then $X$ is integrable and $\E\, X_n \to \E\, X$ \citep[Theorem~25.12, p.~338]{billingsley96pm}. For our purposes, we need an analogous result for {\as}~weak convergence to show that the moments are \as~convergent. If the sequence $(Y_n)_{n \geq 1}$ is exchangeable with directing measure $P$, then its predictives are of the form $\alpha_n(A) = \E\,\{P(A) \given \calG_n\}$. Given that $\E\, |Y_n|^p < \infty$, we have that $\E\,\{ Y_{n+1}^p \given \calG_n\} = \E\, \{ \int y^p \,\dd P(y)  \given \calG_n  \}$, and so 
\begin{equation}
\E\,\{ Y_{n+1}^p \given \calG_n\} \to \int y^p \,\dd P(y) ,\; \as.
    \notag
\end{equation}
This is essentially paraphrasing Doob{'}s consistency theorem, see, e.g.,~\citet[Theorem~2.2]{miller2018detailed} or \citet[ch.~6.2]{ghosal2017fundamentals}. If the sequence $(Y_n)_{n \geq 1}$ is not exchangeable, but has {\as}~weakly convergent predictives, we need a conditional notion of uniform integrability. If $(Z_n)_{n \geq 1}$ is a sequence of random variables with predictives $(\eta_n)_{n \geq 0}$, we say that $(Z_{n+1},\eta_n)_{n \geq 0}$ is {\as}~uniformly integrable if 
\begin{equation}
\lim_{M \to \infty}\sup_{n}\int_{|z|\geq M} |z| \,\eta_n(\dd z) = 0\quad {\as}.
 \label{eq:UIas}
\end{equation}

\begin{lem}\label{lemma:momentconvergence} If $\alpha_n \Rightarrow \alpha$ {\as} and $(Y_n^p,\alpha_n)_{n \geq 1}$ is {\as}~uniformly integrable, 
then 
\begin{equation}
\int y^p \,\dd \alpha_n \to \int y^p \,\dd \alpha,\quad {\as}.
    \notag
\end{equation}
\end{lem}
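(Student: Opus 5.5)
Fix an $\omega$ in the probability-one event where $\alpha_n(\cdot,\omega) \Rightarrow \alpha(\cdot,\omega)$ and where the uniform integrability condition \eqref{eq:UIas} holds for the sequence $(Y_n^p,\alpha_n)$. On this event I read \eqref{eq:UIas} as
\begin{equation}
\lim_{M\to\infty}\sup_n \int_{|y|^p\geq M} |y|^p\,\alpha_n(\dd y,\omega) = 0 .
\notag
\end{equation}
The argument is then purely deterministic. It is the standard proof that weak convergence plus uniform integrability gives convergence of means (\citet[Theorem~25.12]{billingsley96pm}), applied to the deterministic sequence of probability measures $\alpha_n(\cdot,\omega)$ on $\real$.

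\textbf{Step 1 (truncation).} For $M>0$, let $\psi_M$ be a continuous truncation of $y\mapsto y^p$: take $\psi_M(y)=y^p$ when $|y|^p\le M$, and extend it continuously and boundedly, for instance by clipping at $\pm M$. Then $\psi_M\in C_b$, so weak convergence gives $\int \psi_M\,\dd\alpha_n\to\int\psi_M\,\dd\alpha$.

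\textbf{Step 2 (tail control for $\alpha_n$).} The difference satisfies $|y^p-\psi_M(y)|\le 2|y|^p\,1\{|y|^p\ge M\}$. Hence
\begin{equation}
\sup_n\Big|\int y^p\,\dd\alpha_n-\int\psi_M\,\dd\alpha_n\Big|\le 2\sup_n\int_{|y|^p\ge M}|y|^p\,\dd\alpha_n ,
\notag
\end{equation}
and by the uniform integrability condition this tends to $0$ as $M\to\infty$.

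\textbf{Step 3 (integrability of the limit).} This is the step needing care, since we must show $\int|y|^p\,\dd\alpha<\infty$ and also control the tail of $\alpha$. For every $M$, the function $\min(|y|^p,M)$ is bounded and continuous. By weak convergence,
\begin{equation}
\int\min(|y|^p,M)\,\dd\alpha=\lim_n\int\min(|y|^p,M)\,\dd\alpha_n\le\sup_n\int|y|^p\,\dd\alpha_n .
\notag
\end{equation}
The right-hand side is finite: take $M_0$ with $\sup_n\int_{|y|^p\ge M_0}|y|^p\,\dd\alpha_n\le1$, so that $\sup_n\int|y|^p\,\dd\alpha_n\le M_0+1$. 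Monotone convergence in $M$ then gives $\int|y|^p\,\dd\alpha<\infty$. Consequently $\int|y^p-\psi_M|\,\dd\alpha\le2\int_{|y|^p\ge M}|y|^p\,\dd\alpha\to0$ by dominated convergence.

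\textbf{Step 4 (conclusion).} Use the $\varepsilon/3$ decomposition
\begin{equation}
\int y^p\,\dd\alpha_n-\int y^p\,\dd\alpha
=\Big(\int y^p\,\dd\alpha_n-\int\psi_M\,\dd\alpha_n\Big)
+\Big(\int\psi_M\,\dd\alpha_n-\int\psi_M\,\dd\alpha\Big)
+\Big(\int\psi_M\,\dd\alpha-\int y^p\,\dd\alpha\Big).
\notag
\end{equation}
Choose $M$ large so that the first and third terms are small uniformly in $n$, by Steps 2 and 3. Then let $n\to\infty$ so the middle term vanishes by Step 1. This holds for each $\omega$ in a probability-one set, which gives the almost sure convergence.

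The only subtlety I anticipate is measurability and the null set. The event is the intersection of two almost sure events, and neither depends on $M$ or on a choice of test function, since Step 1 uses the full weak convergence at a fixed $\omega$. So there is no countable-union issue.
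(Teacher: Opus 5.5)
Your proof is correct and takes essentially the paper's approach. The paper fixes $\omega$ in the intersection of the two probability-one events and applies Billingsley's Theorem~25.12 pathwise. You do the same, but write out the deterministic truncation argument, including the integrability of the limit $\int |y|^p\,\dd\alpha<\infty$, instead of citing the theorem.
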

\begin{proof} This is direct from, for example, Theorem~25.12 in~\citet[p.~338]{billingsley96pm}, using that the intersection of two null sets is null.  
\end{proof}

The following lemma gives a sufficient condition for $(Y_{n+1},\alpha_n)_{n \geq 0}$ being {\as}~uniformly integrable. 
\begin{lem}\label{lemma:suffasUI} Let $(Z_n)_{n \geq1}$ be a sequence of random variables with predictives $(\eta_n)_{n \geq 0}$. If $\sup_n \E\, |Z_n|^{1 + \delta} < \infty$ for some $\delta >0$, then $(Z_n,\eta_n)_{n \geq 1}$ is {\as}~uniformly integrable.
\end{lem}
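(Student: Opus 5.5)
We need to show that $\lim_{M\to\infty}\sup_n \int_{|z|\ge M}|z|\,\eta_n(\dd z)=0$ almost surely. The natural route is to dominate the tail integral by a conditional moment of higher order and then control that conditional moment uniformly in $n$ almost surely. First, on $\{|z|\geq M\}$ we have $|z| \leq |z|^{1+\delta}/M^{\delta}$, so
\begin{equation}
\sup_n \int_{|z|\geq M} |z|\,\eta_n(\dd z) \leq M^{-\delta}\sup_n \int |z|^{1+\delta}\,\eta_n(\dd z) = M^{-\delta} \sup_n \E\,\{|Z_{n+1}|^{1+\delta}\given \calG_n\}\quad \as
\notag
\end{equation}
It therefore suffices to show that $S := \sup_n \E\,\{|Z_{n+1}|^{1+\delta}\given\calG_n\}$ is finite almost surely. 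If so, letting $M\to\infty$ gives the claim on the event $\{S<\infty\}$, which has probability one.

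The key step is the almost sure finiteness of $S$. Write $V_n = \E\,\{|Z_{n+1}|^{1+\delta}\given\calG_n\}$; this is a nonnegative adapted sequence with $\sup_n \E\, V_n = \sup_n \E\,|Z_{n+1}|^{1+\delta} =: B<\infty$. A uniform bound on the expectations does not by itself control $\E \sup_n V_n$, so a direct argument is not available. Since the paper's applications concern the KDE predictives, I would first try to exploit more structure. For the Gaussian-type kernel case, $V_n$ is, up to $O(h_n^{1+\delta})$ corrections, comparable to the empirical average $n^{-1}\sum_{i\le n}|Z_i|^{1+\delta}$ plus constants, by the same Taylor and moment expansion used in Lemma~\ref{momentlemma}.

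For such averages I would use a martingale decomposition. Write $|Z_i|^{1+\delta} = V_{i-1} + D_i$, where $D_i$ is a martingale difference. Bounded first moments together with the almost-supermartingale argument of \citet{robbins1971convergence}, exactly as in the proof of Theorem~\ref{thm:weakconvas}, should then give almost sure convergence, and hence boundedness, of the averages. This is the step I expect to be the main obstacle. In full generality the statement seems to need exactly this kind of almost-supermartingale structure on $V_n$, or else a slightly stronger moment bound, so I would check the recursion $\E\{V_{n}\given\calG_{n-1}\}\le V_{n-1}+\xi_n$ with summable $\xi_n$.

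If that recursion is unavailable, the fallback is a Borel--Cantelli argument along a subsequence. By Markov's inequality, $\Pr(V_n > n^{1+\eta}) \leq B n^{-1-\eta}$, which is summable, so $V_n = O(n^{1+\eta})$ almost surely. That bound alone is too weak to make $S$ finite, so I would combine it with the recursion above to upgrade polynomial growth to boundedness.
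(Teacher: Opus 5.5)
\textbf{There is a gap at the key step, and it cannot be closed at the stated generality.}

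Your reduction is fine. Since $|z|\le M^{-\delta}|z|^{1+\delta}$ on $\{|z|\ge M\}$, it suffices that $\sup_n V_n<\infty$ a.s., where $V_n=\int|z|^{1+\delta}\,\eta_n(\dd z)$. You are also right that $\sup_n\E V_n<\infty$ does not control $\sup_n V_n$.

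The gap is that this step is never derived from the hypotheses of the lemma. The comparison of $V_n$ with $n^{-1}\sum_{i\le n}|Z_i|^{1+\delta}$, and the recursion $\E\{V_n\given\calG_{n-1}\}\le V_{n-1}+\xi_n$, are features of kernel predictives, not of an arbitrary sequence with predictives. The Markov/Borel--Cantelli fallback only yields $V_n=O(n^{1+\eta})$.

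In fact the lemma is false as stated. Here is a counterexample.
\begin{itemize}
\item Let $U_1,U_2,\ldots$ be i.i.d.\ uniform on $(0,1)$, $x_n=\lceil n^{1/(1+\delta)}\rceil$ and $B_n=I\{U_n<1/n\}$. Set $Z_1=U_1$ and $Z_{n+1}=x_nB_n+U_{n+1}$.
\item Since $U_k$ is the fractional part of $Z_k$, we have $\calG_n=\sigma(U_1,\ldots,U_n)$, and $\eta_n$ is uniform on $[x_nB_n,x_nB_n+1]$.
\item The moments are bounded: $\E|Z_{n+1}|^{1+\delta}\le 2^{\delta}(x_n^{1+\delta}/n+1)\le 2^{\delta}(2^{1+\delta}+1)$.
\item By the second Borel--Cantelli lemma, $B_n=1$ for infinitely many $n$ a.s. Each such $n$ with $x_n\ge M$ gives $\int_{|z|\ge M}|z|\,\eta_n(\dd z)\ge x_n$.
\item Hence $\sup_n\int_{|z|\ge M}|z|\,\eta_n(\dd z)=\infty$ a.s.\ for every $M$.
\end{itemize}
Taking $x_n=\lceil n^{1/p}\rceil$ instead shows that no bound $\sup_n\E|Z_n|^p<\infty$ helps either. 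So a ``slightly stronger moment bound'' will not rescue the general statement.

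\textbf{The paper's own proof has the same hole.} It applies Borel--Cantelli to $A_n=\{V_n=\infty\}$. But each $A_n$ is already null, since $\E V_n<\infty$. Moreover, $\limsup_n A_n=\{V_n=\infty\text{ i.o.}\}$ is not the event $\{\limsup_n V_n=\infty\}$. The argument therefore only shows that every $V_n$ is finite.

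\textbf{What can be proved is your second idea, restricted to the kernel predictives.} Take \eqref{eq:predkernel1} with $\int|z|^qK(z)\,\dd z<\infty$, $q=1+\delta$, so $Z_n=Y_n$, $\eta_n=\alpha_n$ and $V_n=\int|y|^q\,\alpha_n(\dd y)$. Put $\bar W_n=n^{-1}\sum_{i\le n}|Y_i|^q$.
\begin{itemize}
\item Use the convexity bound $|a+b|^q\le(1+h)^{q-1}|b|^q+(1+1/h)^{q-1}|a|^q$ with $h=h_n\le 1$.
\item This gives $V_n\le(1+h_n)^{q-1}\bar W_n+2^{q-1}h_n\int|z|^qK(z)\,\dd z$, and hence
\begin{equation*}
\E\{\bar W_{n+1}\given\calG_n\}\le\Big(1+\frac{Ch_n}{n+1}\Big)\bar W_n+\frac{C'h_n}{n+1}.
\end{equation*}
\item Both perturbations are summable for $h_n=cn^{-a}$, so Robbins--Siegmund yields $\sup_n\bar W_n<\infty$, hence $\sup_n V_n<\infty$ a.s.
\end{itemize}
The crude bound $|a+b|^q\le 2^{q-1}(|a|^q+|b|^q)$ would not work here. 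It produces the factor $1+(2^{q-1}-1)/(n+1)$, which is not a summable perturbation. So the argument is not ``exactly as in'' Theorem~\ref{thm:weakconvas}, where the error was purely additive.

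This establishes a.s.\ uniform integrability where the paper actually needs it. It is, however, a different lemma, with the kernel structure as a hypothesis.
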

\begin{proof} Let $A_n = \{  \int |z|^{1 + \delta}\,\eta_n(\dd z) = \infty\}$. Then
\begin{align*}
\Pr(A_n) & = \Pr( \bigcap_{M > 0} \{\omega \in \Omega \colon \int |z|^{1 + \delta}\,\eta_n(\dd z,\omega) \geq M\} )
\leq \Pr( \int |z|^{1 + \delta}\,\eta_n(\dd z) \geq M )\\
& \leq \frac{1}{M}\E\, |Z_{n+1}|^{1 + \delta}
\leq \frac{1}{M}\sup_n\E\, |Z_{n+1}|^{1 + \delta},
\end{align*}
so we can choose $M = M_n$ so that $\Pr(A_n) \leq 2^{-n}$ for example, or whatever ensuring summability of $\sum_{n}\Pr(A_n) < \infty$. The Borel--Cantelli lemma then gives 
\begin{equation}
\Pr\big( \limsup_n \int |z|^{1 + \delta}\,\eta_n(\dd z) = \infty \big) = \Pr( \limsup_{n} A_n) = 0,  
    \notag
\end{equation}
and so $\limsup_n \int |z|^{1 + \delta}\,\eta_n(\dd z) < \infty$ {\as}, which implies~\eqref{eq:UIas}. 
\end{proof}

Define the empirical moments 
\begin{equation}
\widehat{\mu}_n^{(p)} = \frac{1}{n}\sum_{i=1}^n Y_i^p,\quad \text{for $p \geq 1$}.
    \notag
\end{equation}
The next lemma establishes that $\widehat{\mu}_n^{(p)}$ is almost surely convergent, provided $\int |z|^pK(Z) \,\dd z$ is finite. The proof relies on the quasi-martingale convergence theorem (see for example Theorem~26.20 in~\citet[p.~532]{kallenberg2002foundations} coupled with the supermartingale convergence theorem, e.g., Corollary~27.1 in~\citet[p.~230]{jacod2004probability}). Recall that a sequence $(M_n)_{n}$ of adapted integrable random variables is a quasi-martingale with respect to the filtration $(\calF_n)_{n}$ if $\sum_n \E\, \big|\E\,\{M_{n+1} \given \calF_n\} - M_n \big| < \infty.$ The quasi-martingale convergence theorem says that if $(M_n)_{n \geq 1}$ is a quasi-martingale and $\sup_n \E\,|M_n| < \infty$, then $M_n \to M$ {\as}, with $\E\,|M| < \infty$. If $(M_n)_n$ happens to be uniformly integrable then the convergence also takes place in $L^1$, i.e., $\E\,|M_n - M| \to 0$. 

\begin{lem} Suppose that $(Y_n^p)_{n\geq 1}$ is uniformly integrable. Then, for $j = 1,\ldots,p$, 
\begin{equation}
\widehat{\mu}_n^{(j)} \to\mu^{(j)}\quad {\as},
    \notag
\end{equation}
for integrable random variables $\mu^{(1)},\ldots,\mu^{(p)}$.
\end{lem}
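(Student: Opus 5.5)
We follow the route the paper signals: show that $M_n = \widehat{\mu}_n^{(j)}$ is a quasi-martingale with respect to $(\calG_n)$ that is bounded in $L^1$, and then apply the quasi-martingale convergence theorem. We fix $j \in \{1,\ldots,p\}$. Uniform integrability of $(Y_n^p)$ gives $\sup_n \E\,|Y_n|^p < \infty$, and Lyapunov's inequality then gives $\sup_n \E\,|Y_n|^j < \infty$. Hence
\begin{equation*}
\sup_n \E\,|\widehat{\mu}_n^{(j)}| \leq \sup_n n^{-1}\sum_{i=1}^n \E\,|Y_i|^j \leq \sup_n \E\,|Y_n|^j < \infty ,
\end{equation*}
which settles $L^1$-boundedness.

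For the quasi-martingale property, we use the recursion $\widehat{\mu}_{n+1}^{(j)} = (1 - \tfrac{1}{n+1})\widehat{\mu}_n^{(j)} + Y_{n+1}^j/(n+1)$. It gives
\begin{equation*}
\E\,\{\widehat{\mu}_{n+1}^{(j)} \given \calG_n\} - \widehat{\mu}_n^{(j)} = \frac{1}{n+1}\Big( \E\,\{Y_{n+1}^j \given \calG_n\} - \widehat{\mu}_n^{(j)}\Big).
\end{equation*}
The conditional moment is explicit because $Y_{n+1} \given \calG_n$ has the law of $Y_I + h_n Z$, with $I$ uniform on $\{1,\ldots,n\}$ and $Z \sim K$ independent. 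A binomial expansion then yields
\begin{equation*}
\E\,\{Y_{n+1}^j \given \calG_n\} = \frac{1}{n}\sum_{i=1}^n \sum_{k=0}^{j} \binom{j}{k} Y_i^{j-k} h_n^k c_k .
\end{equation*}
Since $c_0 = 1$, the $k=0$ term is exactly $\widehat{\mu}_n^{(j)}$, and the difference reduces to
\begin{equation*}
\frac{1}{n+1}\sum_{k=1}^{j}\binom{j}{k} h_n^k c_k\, \widehat{\mu}_n^{(j-k)} .
\end{equation*}
Here the assumption $\int |z|^p K(z)\,\dd z < \infty$ from the conditions of Lemma~\ref{momentlemma} guarantees that $c_k$ is finite for $k \leq p$.

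We now take expectations of absolute values. Using $\E\,|\widehat{\mu}_n^{(j-k)}| \leq \sup_m \E\,|Y_m|^{j-k} < \infty$, we obtain
\begin{equation*}
\sum_n \E\,\big|\E\,\{\widehat{\mu}_{n+1}^{(j)} \given \calG_n\} - \widehat{\mu}_n^{(j)}\big| \leq C\sum_n \frac{1}{n+1}\sum_{k=1}^{j} h_n^k .
\end{equation*}
The right-hand side is finite by the summability $\sum_n h_n^k/(n+1) < \infty$ noted after Lemma~\ref{momentlemma}. So $(\widehat{\mu}_n^{(j)})$ is an $L^1$-bounded quasi-martingale, and the quasi-martingale convergence theorem gives $\widehat{\mu}_n^{(j)} \to \mu^{(j)}$ {\as} with $\E\,|\mu^{(j)}| < \infty$. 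Repeating the argument for each $j \leq p$ proves the statement.

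The main obstacle is making the conditional-moment computation rigorous in the present setting. This requires knowing that $Y_{n+1}^j$ is conditionally integrable, which follows from $\E\,|Y_{n+1}|^j < \infty$. It also requires that the predictive really is the kernel mixture for every $n \geq 1$, whereas the first step, from $\alpha_0$ to $Y_1$, only needs $\E\,|Y_1|^p < \infty$. If one wants $L^1$ convergence as well, it follows from uniform integrability of $\widehat{\mu}_n^{(j)}$, since averages of a uniformly integrable family are uniformly integrable; this is not needed for the statement.
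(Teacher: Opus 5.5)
Your proof is correct and follows the paper's route. In both, the binomial expansion of $\E\,\{Y_{n+1}^j \given \calG_n\}$ shows that $\widehat{\mu}_n^{(j)}$ is an $L^1$-bounded quasi-martingale with drift controlled by $\sum_n h_n^k/(n+1) < \infty$, and the quasi-martingale convergence theorem finishes the argument. The only difference is organisational: the paper runs a strong induction on the moment order and uses a.s.\ and $L^1$ convergence of the lower-order empirical moments to show the drift series is Cauchy, whereas you bound $\E\,|\widehat{\mu}_n^{(j-k)}|$ directly by $\sup_m \E\,|Y_m|^{j-k} < \infty$. That bound is all that is needed, so the induction becomes unnecessary.
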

\begin{proof} Write $\E\,\{ \widehat{\mu}_{n+1}^{(p)} \given \calG_n\} = \{1 - 1/(n+1)\}\widehat{\mu}_{n}^{(p)} + \{1/(n+1)\}\E\,\{Y_{n+1}^p \given \calG_n\}$. The expectation on the right is $\E\,\{Y_{n+1}^p \given \calG_n\}  = 
\widehat{\mu}_{n}^{(p)} + 
\sum_{j=0}^{p-1}\binom{p}{j} h_{n}^{p-j} c_{p-j}\widehat{\mu}_{n}^{(j)}$, which inserted in the previous expression gives
\begin{equation}
\E\,\{ \widehat{\mu}_{n+1}^{(p)} \given \calG_n\}
= \widehat{\mu}_n^{(p)} + 
\sum_{j=0}^{p-1}\binom{p}{j} \frac{h_{n}^{p-j}}{n+1} c_{p-j}\widehat{\mu}_{n}^{(j)}. 
    \label{eq:expect_momentestimator}
\end{equation}
In particular, writing $\widehat{\mu}_{n+1} = \widehat{\mu}_{n+1}^{(1)}$, 
\begin{equation}
\E\,\{ \widehat{\mu}_{n+1} \given \calG_n\}
= \widehat{\mu}_n + \frac{h_n}{n+1}c_1,
    \notag
\end{equation}
where, if $K$ is symmetric, $c_1 = \int z K(z) \,\dd z = 0$, but we do not assume that here. Thus
\begin{equation}
\sum_n \E\, \big|\E\,\{ \widehat{\mu}_{n+1} \given \calG_n\} - \widehat{\mu}_n \big|
\leq \sum_n \frac{h_n}{n+1}c_1 < \infty, 
    \notag
\end{equation}
This shows that $(\mu_n)$ is a quasi-martingale, and $\sup_n \E\,|\mu_n| < \infty$ since $\sup_n \E\,|Y_n| < \infty$. The quasi-martingale convergence theorem then yields $\widehat{\mu}_n \to \mu = \mu^{(1)}$ {\as} This will serve as our base case. Next, make the (strong) induction hypothesis that $\widehat{\mu}_n^{(j)} \to \mu^{(j)}$ {\as}, for $j = 1,\ldots,p-1$. From~\eqref{eq:expect_momentestimator} we get that
\begin{equation}
\sum_n \E\, \big| \E\,\{ \widehat{\mu}_{n+1}^{(p)} \given \calG_n\}
- \widehat{\mu}_n^{(p)}      \big|
\leq 
\sum_{j=0}^{p-1}\binom{p}{j} \sum_n \frac{h_{n}^{p-j}}{n+1} c_{p-j}\E\, |\widehat{\mu}_{n}^{(j)}|,
    \notag
\end{equation}
where the right hand side is convergent. This is because $\widehat{\mu}_{n}^{(j)} \to \mu^{(j)}$ {\as}, and since $(\widehat{\mu}_{n}^{(j)})_n$ is uniformly integrable $\E\,|\widehat{\mu}_{n}^{(j)} - \mu^{(j)}| \to 0$ as $n$ tends to infinity. Given $\eps > 0$, let $m < n$ be so that $\E\,|\widehat{\mu}_{\ell}^{(j)} - \mu^{(j)}| < \eps$ for all $\ell \geq m$. Then   
\begin{align*}
0 & \leq \sum_{\ell=1}^n \frac{h_{\ell}^{p-j}}{\ell+1} \E\, |\widehat{\mu}_{\ell}^{(j)}|    
- \sum_{\ell=1}^m \frac{h_{\ell}^{p-j}}{\ell+1} \E\, |\widehat{\mu}_{\ell}^{(j)}|
= \sum_{\ell=m+1}^n \frac{h_{\ell}^{p-j}}{\ell+1} \E\, |\widehat{\mu}_{\ell}^{(j)}|\\
& \leq  \sum_{\ell=m+1}^n \frac{h_{\ell}^{p-j}}{\ell+1} \E\, |\widehat{\mu}_{\ell}^{(j)} - \mu^{(j)}| + \sum_{\ell=m+1}^n \frac{h_{\ell}^{p-j}}{\ell+1} \E\, |\mu^{(j)}|
 \leq   \sum_{\ell=m+1}^n \frac{h_{\ell}^{p-j}}{\ell+1}\,\eps  + \sum_{\ell=m+1}^n \frac{h_{\ell}^{p-j}}{\ell+1} \E\, |\mu^{(j)}| ,
\end{align*}
which tends to zero as $n,m \to \infty$ since $\sum_{\ell =1}^\infty h_{\ell}^{p-j}/(\ell + 1)$ is convergent. This shows that $\sum_{\ell = 1}^n h_{\ell}^{p-j}\E\, |\widehat{\mu}_{\ell}^{(j)}|/(\ell + 1)$ is Cauchy, hence convergent. The quasi-martingale convergence theorem then yields $\widehat{\mu}_n^{(p)} \to \mu^{(p)}$ {\as}
\end{proof}
In the two above lemmata we have shown if $\alpha_n \Rightarrow \alpha$ {\as}~and $(Y_n^p,\alpha_n)_{n \geq 1}$ is {\as} uniformly integrable (as defined in~\eqref{eq:UIas}), then $\E\,\{Y_{n+1}^p \given \calG_n\}\to \int y^p \,\dd \alpha$ {\as}; and that if $(Y_n^p)_{n\geq 1}$ is uniformly integrable, then $\widehat{\mu}_n^{(p)} \to \mu^{(p)}$ {\as}~for some integrable random variable $\mu^{(p)}$. Ideally, we would like $\mu^{(p)}$ to be a version of $\int y^p \,\dd \alpha$. We achieve this under somewhat stronger conditions. The following lemma is an immediate consequence of a Lemma~2 in \citet[p.~534]{berti2011central}. 
\begin{lem}\label{lemma:meanconv} If the conditions of Lemma~\ref{lemma:momentconvergence} are in force, and $\sum_{n=1}^{\infty}n^{-2}\E\,Y_n^{2p} < \infty$, then $\widehat{\mu}_n^{(p)} \to \int y^p \,\dd \alpha$ {\as}, in particular $\mu^{(p)} = \int y^p \,\alpha(\dd y)$ {\as}. 
\end{lem}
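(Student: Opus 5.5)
The plan is to compare $\widehat{\mu}_n^{(p)}$ with the predictive moment $\int y^p\,\dd\alpha_n = \E\,\{Y_{n+1}^p \given \calG_n\}$. By Lemma~\ref{lemma:momentconvergence}, the latter converges {\as} to $\int y^p\,\dd\alpha$. It therefore suffices to show that
\begin{equation}
\frac1n\sum_{i=1}^n \big[ Y_i^p - \E\,\{Y_i^p \given \calG_{i-1}\}\big] \to 0 \quad \as,
\notag
\end{equation}
together with a Ces\`aro step. The Ces\`aro step says that if $b_{i-1}:=\E\,\{Y_i^p\given\calG_{i-1}\}\to L$ {\as}, then $n^{-1}\sum_{i=1}^n b_{i-1}\to L$ {\as}. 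Combining the two gives $\widehat{\mu}_n^{(p)}\to \int y^p\,\dd\alpha$ {\as}. The identification $\mu^{(p)} = \int y^p\,\alpha(\dd y)$ {\as} then follows from uniqueness of {\as} limits, since the preceding lemma gives $\widehat{\mu}_n^{(p)}\to\mu^{(p)}$ {\as}.

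This is precisely the structure of Lemma~2 in \citet{berti2011central}. That result says that if $(X_n)$ is adapted with $\sum_n n^{-2}\E\,X_n^2<\infty$ and $\E\,\{X_{n+1}\given\calG_n\}\to L$ {\as}, then $n^{-1}\sum_{i\le n}X_i\to L$ {\as}. I would apply it with $X_n = Y_n^p$. Then $X_n^2 = Y_n^{2p}$, so the summability hypothesis is exactly the assumption $\sum_n n^{-2}\E\,Y_n^{2p}<\infty$. The convergence of the predictive means is the conclusion of Lemma~\ref{lemma:momentconvergence}.

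For completeness, I would also sketch the martingale part directly. Put $D_i = Y_i^p - \E\,\{Y_i^p\given\calG_{i-1}\}$ and $M_n=\sum_{i\le n} D_i/i$. Then $(M_n)$ is a martingale with
\begin{equation}
\sum_i \E\, D_i^2/i^2 \le \sum_i \E\,Y_i^{2p}/i^2<\infty,
\notag
\end{equation}
using that the conditional variance is bounded by the conditional second moment. So $M_n$ is $L^2$-bounded and converges {\as}, and Kronecker's lemma gives $n^{-1}\sum_{i\le n}D_i\to0$ {\as}. The Ces\`aro step is elementary, applied pathwise on the probability-one event where $b_n$ converges.

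The main point requiring care is that $\E\,\{Y_{n+1}^p\given\calG_n\}$ really equals $\int y^p\,\alpha_n(\dd y)$ and is finite {\as}. This needs $\E\,|Y_{n+1}|^p<\infty$, which holds under the conditions in force via Lemma~\ref{lemma:UI}, or simply via the assumed $\E\,Y_n^{2p}$ bounds. It also needs care with null sets: the intersection of countably many probability-one events is used.
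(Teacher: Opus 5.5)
Your proposal is correct and follows the paper's own proof, which is likewise Lemma~2 of \citet{berti2011central} written out. Both split $\widehat{\mu}_n^{(p)}$ into two pieces. The first is the Ces\`aro average of $\int y^p\,\alpha_{k-1}(\dd y)$, handled by Lemma~\ref{lemma:momentconvergence} plus the Toeplitz lemma. The second is the martingale $M_n=\sum_{k\le n}k^{-1}\{Y_k^p-\int y^p\,\alpha_{k-1}(\dd y)\}$, which is $L^2$-bounded by $\sum_n n^{-2}\E\,Y_n^{2p}<\infty$ and is then dispatched by Kronecker's lemma.
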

\begin{proof} The proof is the same as in \citet{berti2011central}, save for notational differences. We include it for intelligibility. Write 
\begin{equation}
\widehat{\mu}_n^{(p)} = \frac{1}{n}\sum_{k=1}^n Y_k^p
= \frac{1}{n}\sum_{k=1}^n \int y^p\, \alpha_{k-1}(\dd y) + \frac{1}{n} \sum_{k=1}^n k \frac{Y_k^p - \int y^p\,\alpha_{k-1}(\dd y) }{k}.
    \notag
\end{equation}
The first sum on the right tends {\as}~to $\int y^p \,\alpha(\dd y)$ as a consequence of Lemma~\ref{lemma:momentconvergence} and the Toeplitz lemma, whereas the second sum tends {\as}~to zero by Kronecker{'}s lemma provided $M_n \coloneqq\sum_{k=1}^n  k^{-1}\{Y_k - \int y^p\,\alpha_{k-1}(\dd y)\}$ is {\as}~convergent as $n \to \infty$. The summands $k^{-1}\{Y_k - \int y^p\,\alpha_{k-1}(\dd y)\}$ are martingale increments, so a sufficient condition for Kronecker{'}s lemma to be applicable is that the martingale $(M_n)_n$ is uniformly integrable, for which a sufficient condition is that $\sup_{n}\E\,M_n^2 \leq \sum_{n=1}^{\infty}n^{-2}\E\,Y_n^{2p}$ is finite.   
\end{proof}

\section{A kernel density estimate and its intrinsic predictives}\label{sec:intrinsicpreds} In general, with $Y_1,Y_1,\ldots$ a sequence of random variables taking values in a topological space $(\calY,\calB)$, the construction of a predictive distribution for $Y_{n+1}$ given $Y_1,\ldots,Y_n$ is done according to some algorithm $\calA$, that is
\begin{equation}
\calA \colon \bigcup_{m \geq 1} \calY^m \to \{\text{all probability measures on $(\calY,\calB)$}   \}. 
    \notag
\end{equation}
In the predictive Bayes setting we start out with some observed data $y_{1},\ldots,y_{\nobs}$, and construct a predictive distribution $\alpha_0 = \calA(y_1,\ldots,y_{\nobs})$ based on these, where $\calA$ is some algorithm, for example, as in this paper, a kernel density estimate. To sample from the martingale posterior, we do predictive resampling $Y_{n+1}\given \calG_n \sim \alpha_n,\,n \geq 1$ for a sequece $\alpha_1,\alpha_2$ of predicitive distributions. If this sequence of predicitive distributions are constructed using the same algorithm $\calA$ used to construct the initial $\alpha_0$, that is
\begin{equation}
\alpha_n = \calA(y_1,\ldots,y_{\nobs},Y_1,\ldots,Y_n), \quad n \geq 1, 
    \notag
\end{equation}
then we say that $\alpha_1,\alpha_2$ are the {\it intrinsic predictives} of $\alpha_0$, or simply, that $(\alpha_n)_{n \geq 0}$ is a sequence of intrinsic predictives. 

The sequence of predictive distributions worked with in Section~\ref{sec:predresample_kde} are not intrinsic. A kernel density estimate $f_0$ of the form given in~\eqref{eq:kde1} has intrinsic predictives $\alpha_n(A) = \int_A \tilda{f}_n(y)\,\dd y$ for $n \geq 1$ where the densities $\tilda{f}_n$ are (where we for convenience changing the notation a bit compared to~\eqref{eq:kde_intrinsic})
\begin{equation}
\tilda{f}_n(y) 
= \frac{1}{(\nobs + n)h_n} \sum_{i=1}^{\nobs}K \big(\frac{y - y_i}{h_n} \big)
+ \frac{1}{(\nobs + n)h_n} \sum_{i=1}^{n}K \big(\frac{y - Y_i}{h_n} \big), \quad n \geq 0,
    \label{eq:naturalpreds}
\end{equation}
with $h_n \propto (\nobs + n)^{-a}$ for some $
a > 0$, and $Y_{n+1}\given \calG_n \sim \alpha_n$. This differs from the previous section where it was {\it not} required that the (nonrandom) initial distribution $\alpha_0$ had anything to do with the (random) kernel densities $f_1,f_2,\ldots$, with these as defined in~\eqref{eq:predkernel1}. The main results of the previous section do, however, go through almost unchanged. To see this, introduce $f_0(y,h_n) = \nobs^{-1}\sum_{i=1}^{\nobs} h_n^{-1}K((y - y_i)/h_n)$, so $f_0(y,h_0) = f_0(y)$, and write
\begin{equation}
\tilda{f}_n(y) = w_n f_0(y,h_n)
+ (1 - w_n) f_n(y), \quad w_n = \frac{\nobs}{\nobs + n}, \quad n \geq 0, 
    \notag
\end{equation}
where $f_n$ is as defined in~\eqref{eq:predkernel1}. Then
\begin{equation}
\E\, Y_{n+1}^p = w_n \int y^p f_0(y,h_n) \,\dd y
+ (1 - w_n) \int y^p f_n(y)\,\dd y.
\notag
\end{equation}

Using the inequality $|a + b|^p \leq 2^p |a|^p + 2^p |b|^p$, we see that
\begin{align*}
\E\, |Y_1|^p & = \int |y|^p f_0(y)\,\dd y
 = \frac{1}{\nobs}\sum_{i=1}^{\nobs} \int |h_0 z + y_i|^pK(z)\,\dd z\\
 & \leq 2^p h_0^p \int |z|^p K(z) \,\dd z + \frac{2^p}{\nobs}\sum_{i=1}^{\nobs}|y_i|^p, 
    \notag
\end{align*}
and therefore $\E\, |Y_1|^p$ is finite whenever $\int |z|^pK(z)\,\dd z$ is. Similarly, since $h_n$ is bounded above, $\sup_n\int |y|^p f_0(y, h_n)\,\dd y$ is also finite whenever $\int |z|^pK(z)\,\dd z$ is. This gives the following corollary to Lemma~\ref{lemma:UI}.

\begin{cor}\label{cor1:naturalpreds} Let $(Y_n)_{n \geq 1}$ be a sequence such that $Y_{n+1}\given \calG_n \sim \tilda{f}_n$ for $n \geq 0$, and $K$ be a kernel such that $\int |z|^p K(z) \,\dd z$ is finite. Then $(Y_n^j)_{n \geq 1}$ is uniformly integrable for $j = 1,\ldots,p-1$. 
\end{cor}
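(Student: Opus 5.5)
The plan is to mimic the proof of Lemma~\ref{lemma:UI}: first derive a recursion for $\E\,|Y_{n+1}|^p$ under the intrinsic predictives $\tilda{f}_n$, then show by induction on the moment order that $\sup_n \E\,|Y_n|^j<\infty$ for $j\le p$. Uniform integrability of $(Y_n^j)_{n\ge1}$ for $j<p$ then follows, since these powers are dominated in $L^{p/j}$ with $p/j>1$.

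\textbf{Step 1: the one-step recursion.} Condition on $\calG_n$ and use the decomposition $\tilda{f}_n = w_n f_0(\cdot,h_n) + (1-w_n) f_n$. Write
\[
\int |y|^p h_n^{-1}K((y-Y_i)/h_n)\,\dd y = \int |Y_i + h_n z|^p K(z)\,\dd z .
\]
Expanding $|Y_i+h_nz|^p \le \sum_{j=0}^{p}\binom{p}{j}|Y_i|^j h_n^{p-j}|z|^{p-j}$ and taking expectations gives
\[
\E\,|Y_{n+1}|^p \le w_n A_p + (1-w_n)\Big\{\breve a_{n,p} + \sum_{j=0}^{p-1}\binom{p}{j} h_n^{p-j}\breve c_{p-j}\,\breve a_{n,j}\Big\},
\]
where $\breve a_{n,j} = n^{-1}\sum_{i\le n}\E\,|Y_i|^j$, $\breve c_m=\int|z|^mK$, and $A_p=\sup_n\int|y|^pf_0(y,h_n)\,\dd y$. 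The constant $A_p$ is finite by the remark preceding the corollary. Also $\E\,|Y_1|^p<\infty$, as shown just above.

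\textbf{Step 2: induction.} Assume $B_j:=\sup_n\E\,|Y_n|^j<\infty$ for all $j<p$; the base case $j=0$ is trivial. Let $s_n = \max(\max_{i\le n}\E\,|Y_i|^p, A_p)$. Since $\breve a_{n,p}\le s_n$ and $w_n A_p\le w_n s_n$, Step~1 gives
\[
s_{n+1} \le s_n + D h_n^{p-j}\ \text{summed over } j<p,
\]
with $D$ depending only on the $B_j$ and $\breve c_m$. Because $h_n\propto(\nobs+n)^{-a}\to0$, this increment is at most $D' h_n$, which is a bounded but \emph{not} summable quantity. This is the main obstacle. In Lemma~\ref{lemma:UI} the closed form from Lemma~\ref{momentlemma} produced the summable weights $h_\ell^{p-j}/(\ell+1)$, which come from averaging.

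To recover those weights, work with the averaged quantity $S_n=\breve a_{n,p}$ (suitably including the $w_n A_p$ term) rather than with $s_n$. We have $S_{n+1} = \frac{n}{n+1}S_n + \frac{1}{n+1}\E\,|Y_{n+1}|^p$. Inserting the bound from Step~1 gives
\[
S_{n+1}\le S_n + \frac{w_n}{n+1}(A_p - S_n) + \frac{C h_n}{n+1}.
\]
Next, set $T_n=\max(S_n,A_p)$. When $S_n\ge A_p$ the middle term is nonpositive; otherwise it is bounded by $A_p/(n+1)\cdot w_n$. In either case $T_{n+1}\le T_n + (w_nA_p + Ch_n)/(n+1)$. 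Here $w_n/(n+1)\le \nobs/(n+1)^2$ and $h_n/(n+1)$ are both summable, so $\sup_n T_n<\infty$.

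\textbf{Step 3: conclusion.} Step~1 then gives $\E\,|Y_{n+1}|^p\le A_p + T_n + C' <\infty$ uniformly in $n$, which closes the induction at order $p$. Hence $\sup_n\E\,|Y_n|^p<\infty$. By the de la Vallée-Poussin criterion, $(Y_n^j)_{n\ge1}$ is then uniformly integrable for each $j=1,\ldots,p-1$, just as in Lemma~\ref{lemma:UI}.
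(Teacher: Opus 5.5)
Your proof is correct, but it takes a different route from the paper's. The paper treats the corollary as an immediate consequence of Lemma~\ref{lemma:UI}. It writes $\tilda{f}_n = w_n f_0(\cdot,h_n)+(1-w_n)f_n$ and checks that $\E\,|Y_1|^p$ and $\sup_n\int|y|^pf_0(y,h_n)\,\dd y$ are finite. It then leaves implicit that the unrolled moment identity of Lemma~\ref{momentlemma} can be redone for the $w_n$-weighted mixture; that identity is where the summable weights $h_\ell^{p-j}/(\ell+1)$ come from.

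You never unroll anything. You keep the induction on moment order from Lemma~\ref{lemma:UI}, but replace the closed form by the one-step inequality $S_{n+1}\le S_n+\{w_n(A_p-S_n)+Ch_n\}/(n+1)$ for the running average $S_n=\breve a_{n,p}$. The factor $1/(n+1)$ produced by averaging then gives summability directly, and the new component $w_nf_0(\cdot,h_n)$ is absorbed by truncating at $A_p$. This is essentially a deterministic analogue of the Robbins--Siegmund step in the proof of Theorem~\ref{thm:weakconvas}.

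What your route buys:
\begin{itemize}
\item It is self-contained and actually carries out the adaptation to intrinsic predictives that the paper only asserts.
\item It works with inequalities throughout, which is all that holds for absolute moments. Since $|Y_i+h_nz|^p\le\sum_j\binom{p}{j}|Y_i|^jh_n^{p-j}|z|^{p-j}$ is not an equality, the absolute-moment version of Lemma~\ref{momentlemma} holds only as an upper bound, and an upper bound is exactly what you use.
\end{itemize}
The paper's route, in exchange, yields an exact expression for the signed moments $\E\,Y_{n+1}^p$, which is more than boundedness. In a write-up you can simply drop the false start with $s_n$.
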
    

The weak convergence proof of Theorem~\ref{thm:weakconvas} also goes through almost unchanged. The only two differences being that the constant $C$ appearing in that proof now also depends on $\nobs$; and that for $g \in C_b^\infty$ with $g$ nonnegative, $\E\, \{g(Y_{n+1}) \given \calG_n\} = w_n \sum_{i=1}^n g(Y_i) + O(h_n^2)$, with $w_n$ instead of $1/n$. But since $w_n \sum_{i=1}^n g(Y_i) = (1 - w_n)\sum_{i=1}^{n-1} g(Y_i) + w_n g(Y_n)$, the almost supermartingale property follows as in the proof of Theorem~\ref{thm:weakconvas}. Similarly, the proof of absolute continuity of the limiting distribution (Theorem~\ref{thm:abscont}) also goes through, \emph{mutatis mutandis}.

\begin{cor}\label{cor:weakconv_naturalpred} Let $\alpha_n(A) = \int_A \tilda{f}_n(y)\,\dd y$ for $A \in \calB$ and $n \geq 0$, where $\tilda{f}_n$ is as in~\eqref{eq:naturalpreds}, with a kernel $K$ such that $\int z K(z)\,\dd = 0$ and $\int z^2 K(z)\,\dd < \infty$. Then there is a random probability measure $\alpha$ such that $\alpha_n \Rightarrow \alpha$ {\as}.   
\end{cor}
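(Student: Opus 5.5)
We follow the proof of Theorem~\ref{thm:weakconvas}, replacing the empirical average $n^{-1}\sum_{i\le n} g(Y_i)$ by a weighted average that also includes the observed data. By Lemma~\ref{lemma:weakconvCbinfty}, it suffices to show that $\E\{g(Y_{n+1})\given\calG_n\}=\int g\,\dd\alpha_n$ converges {\as} for every $g\in C_b^\infty$. Splitting $g=g^+-g^-$ as before, or simply adding a constant $\norm{g}_\infty$, we may assume $g\ge 0$.

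Put $m=\nobs+n$ and $y\mapsto \tilda f_n(y)$ as in~\eqref{eq:naturalpreds}. Substituting $y=h_nz+x$ and Taylor expanding $g$ to second order around each data point $x\in\{y_1,\ldots,y_{\nobs},Y_1,\ldots,Y_n\}$ gives
\begin{equation}
\int g\,\dd\alpha_n = Z_n + R_n,\qquad Z_n=\frac{1}{\nobs+n}\Big\{\sum_{i=1}^{\nobs}g(y_i)+\sum_{i=1}^n g(Y_i)\Big\}.
\notag
\end{equation}
Here the first-order term vanishes because $\int zK(z)\,\dd z=0$. The remainder satisfies $|R_n|\le \tfrac12\norm{g''}_\infty h_n^2\int z^2K(z)\,\dd z=:Ch_n^2$, uniformly in $\omega$. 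Since $h_n\propto(\nobs+n)^{-a}$, this is $O(m^{-2a})$.

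Next we establish the almost supermartingale property. We have $Z_n=(1-\tfrac1m)Z_{n-1}+\tfrac1m g(Y_n)$. Taking conditional expectations and using $\E\{g(Y_n)\given\calG_{n-1}\}=Z_{n-1}+R_{n-1}\le Z_{n-1}+Ch_{n-1}^2$ yields
\begin{equation}
\E\{Z_n\given\calG_{n-1}\}\le Z_{n-1}+\frac{C h_{n-1}^2}{\nobs+n}\quad\as
\notag
\end{equation}
The error terms are deterministic and summable, since $\sum_n h_{n-1}^2/(\nobs+n)\lesssim\sum_n (\nobs+n)^{-1-2a}<\infty$. As $Z_n\ge0$, the Robbins--Siegmund theorem for nonnegative almost supermartingales \citep{robbins1971convergence} gives that $Z_n$ converges {\as} to a finite limit. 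Because $R_n\to0$ uniformly, $\int g\,\dd\alpha_n=Z_n+R_n$ converges {\as} as well.

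To finish, note that $C_b^\infty$ is not countable, but the null set is allowed to depend on $g$: condition~\eqref{eq:ourcond} requires {\as} convergence only for each fixed $g$. Lemma~\ref{lemma:weakconvCbinfty} then yields a random probability measure $\alpha$ with $\alpha_n\Rightarrow\alpha$ {\as}.

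The argument is essentially routine. The only real point of care is the bookkeeping of the recursion for $Z_n$: the weight on the newest observation is $1/(\nobs+n)$ rather than $1/n$, and the remainder bound must be uniform in $\omega$, which is where boundedness of $g''$ is used. It is worth noting that the weighting in the paper's sketch should read $\tfrac{1}{\nobs+n}$, not $w_n$.
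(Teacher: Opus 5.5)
Your proof is correct and follows the same route as the paper. You Taylor-expand $\int g\,\dd\alpha_n$ into a weighted empirical average plus a deterministic $O(h_n^2)$ error, apply the Robbins--Siegmund theorem to that average, and conclude via Lemma~\ref{lemma:weakconvCbinfty}; your bookkeeping is actually cleaner than the paper's sketch, since $Z_n$ must include the observed-data terms with weight $1/(\nobs+n)$ rather than $w_n$, and only your shift by $\norm{g}_\infty$ (not the $g^{+}-g^{-}$ split, which destroys smoothness) validly reduces to nonnegative $g\in C_b^\infty$.
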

The integration to the limit results of Section~\ref{sec:int_to_limit} can similarly be extended to the intrinsic predictives setting. 

\section{Real data examples}\label{sec:realdata}
We apply the predictive resampling setup to the 'Old Faithful' dataset \citep{Azzalini1990}, consisting of ${\nobs}=272$ waiting times between eruptions of the Old Faithful geyser in Yellowstone National Park, Wyoming, United States. Using the bandwidth sequence $h_n=n^{-1/5}$ and Gaussian kernels, we start from our initial dataset $y_1,\ldots,y_{\nobs}$ recursively sample $Y_1, Y_2, \dots, Y_N$ according to~\eqref{eq:intrinsic_resamlpling}. We do so $N= 100,000$ steps forward, in parallel, for $B=1000$ simulation paths. At the end, we compute the KDE for each of the $B$ sample paths. The results are given in Figure~\ref{fig:old_faith}, displaying an estimate of the posterior mean, along with pointwise pointwise credible bands, alongside the KDE computed from the original $\nobs=272$ datapoints.

We note that the posterior mean resembles the original KDE, but perhaps appears a bit smoother. This is further confirmed by adding to the figure an alternative KDE on the original data with a bandwidth $h_n=cn^{-1/5}$, and empirically tuning the constant $c$ to match the posterior mean. We find that a value of around $c=1.5$ appear optimal, and obtain a good match with the martingale posterior mean as indicated by the green dashed line. In Appendix \ref{app:posteriormean} we further make an elementary attempt to elucidate the behaviour of the posterior mean, showing that it can be studied as an infinite product of linear operators acting on the empirical measure $\PP_N$. As for the credible intervals, we note that these do a good job overall in covering the original KDE, and that the width of the intervals are reasonably within the data range. This is in contrast to a Dirichlet Process Mixture Model (DPMM) fit of the data as shown in Figure \ref{fig:old_faith_dpmm}. Here the uncertainty is at its lowest in the region where the data is most scarce (indicated by the shading on the tick marks in the x-axis). The DPMM was fit using the default implementation in the \texttt{dirichletprocess} package in R \citep{CRAN::dirichletprocess}.
\begin{figure}
     \centering
          \begin{subfigure}[b]{0.85\textwidth}
         \centering
         \includegraphics[width=\textwidth]{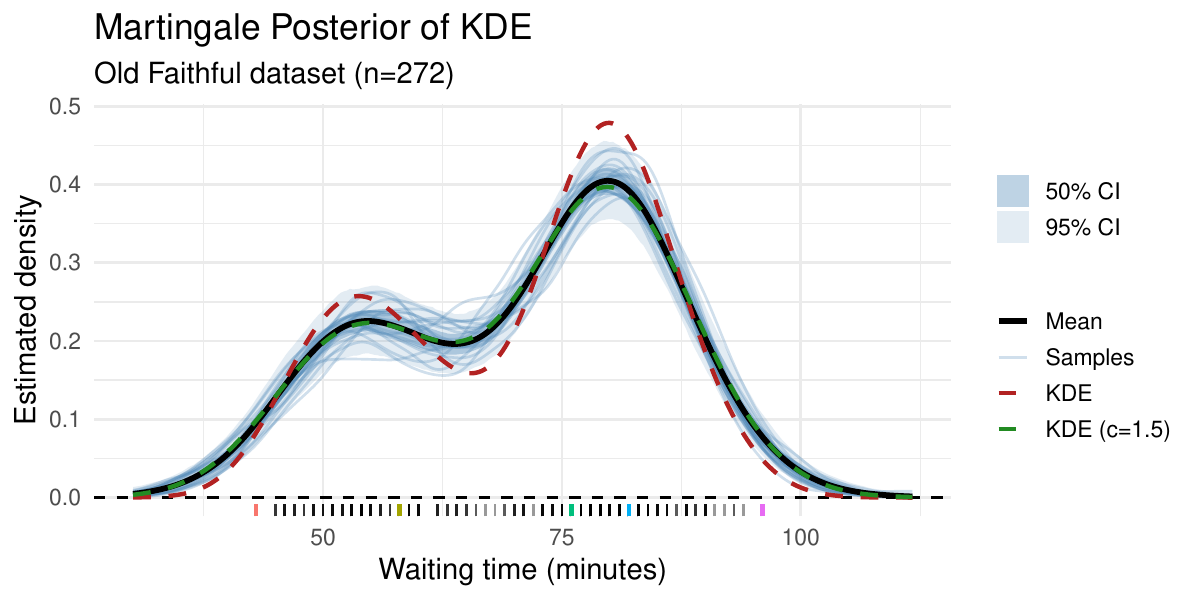}
         \caption{}
         \label{fig:old_faith}
              \end{subfigure}
         \\
     \begin{subfigure}[b]{0.85\textwidth}
         \centering
         \includegraphics[width=\textwidth]{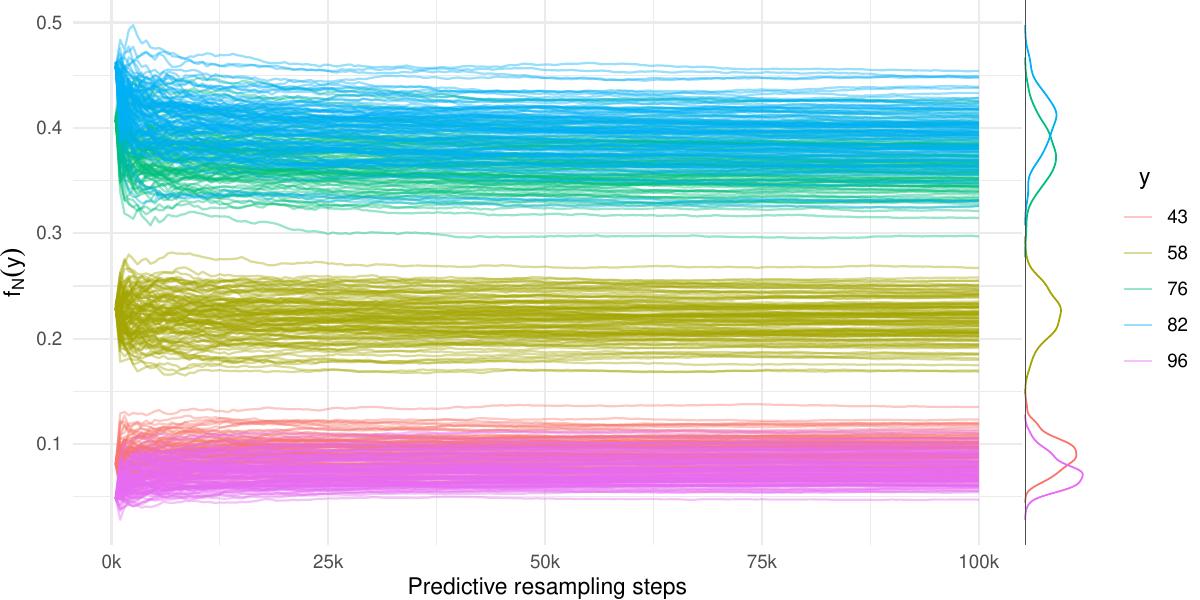}
         \caption{}
         \label{fig:old_faith_conv}
     \end{subfigure}
        \caption{(a) Martingale posterior of KDE applied to the 'Old Faithful' dataset of $\nobs = 272$ waiting times. The martingale posterior mean (black line), 50\% and 95\% credible regions (shaded blue) are displayed, along with 20 samples from the posterior. Additionally, two KDE computed on the original data is shown (red, dashed), using the bandwidth $h_n=n^{-1/5}$, along with an extra smoothed version $h_n=1.5n^{-1/5}$. The tick marks on the x-axis indicate locations of the original data, and the colouring corresponds to values of $y_i$ at which the density has been evaluated in the convergence plot in the panel below. (b) Convergence plot for the martingale posterior of KDEs in the 'Old Faithful' dataset. Each colour corresponds to a specific $y$ location, and we plot the densities $f_{N}(y)$ as a function of $N$. In the right of the figure we compute KDEs over the $B$ sample paths at the final resampling step, showing the uncertainty in the density's value. The five different $y$-values are chosen as the minium, 25\%, 50\%, 75\% and maximum value of the original data.}
        \label{fig:old_faith_MP_joint}
\end{figure}

\begin{figure}
    \centering
    \includegraphics[width=0.85\textwidth]{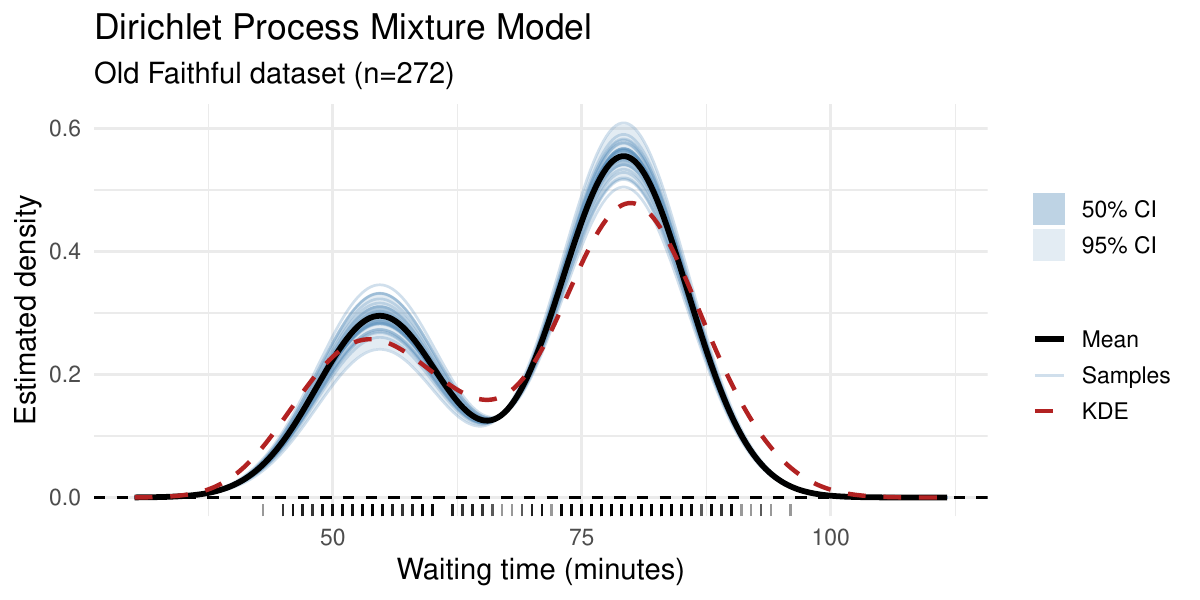}
    \caption{A DPMM applied to the old faithful dataset, annotations the same as in Figure \ref{fig:old_faith_MP_joint}}.
    \label{fig:old_faith_dpmm}
\end{figure}

We also empirically check convergence by computing a KDE every 1000 iterations of the predictive resampling scheme, and evaluating the densities at a fixed set of locations corresponding to quantiles of the original data. We do this for a subset of the sample paths, and display the results in Figure \ref{fig:old_faith_conv}, showing that after $N=100,000$ resampling steps, the density evaluations appear stable.

Finally, we applied the exact same procedure to the 'Galaxies' dataset \citep{roeder1990density}, consisting of $\nobs=82$ velocities of galaxies in the Corona Borealis region. We note similar behaviour as for the 'Old Faithful' dataset, as shown in Figure \ref{fig:galaxies_MP}. In particular, we note that the constant $c=1.5$ again brings the KDE of the original data close to the posterior mean, indicating perhaps that this value is stable across datasets, albeit possibly depending on the smoothness parameter $a$ in the bandwidth $h_n\propto n^{-a}$.

\begin{figure}
    \centering
    \includegraphics[width=0.85\textwidth]{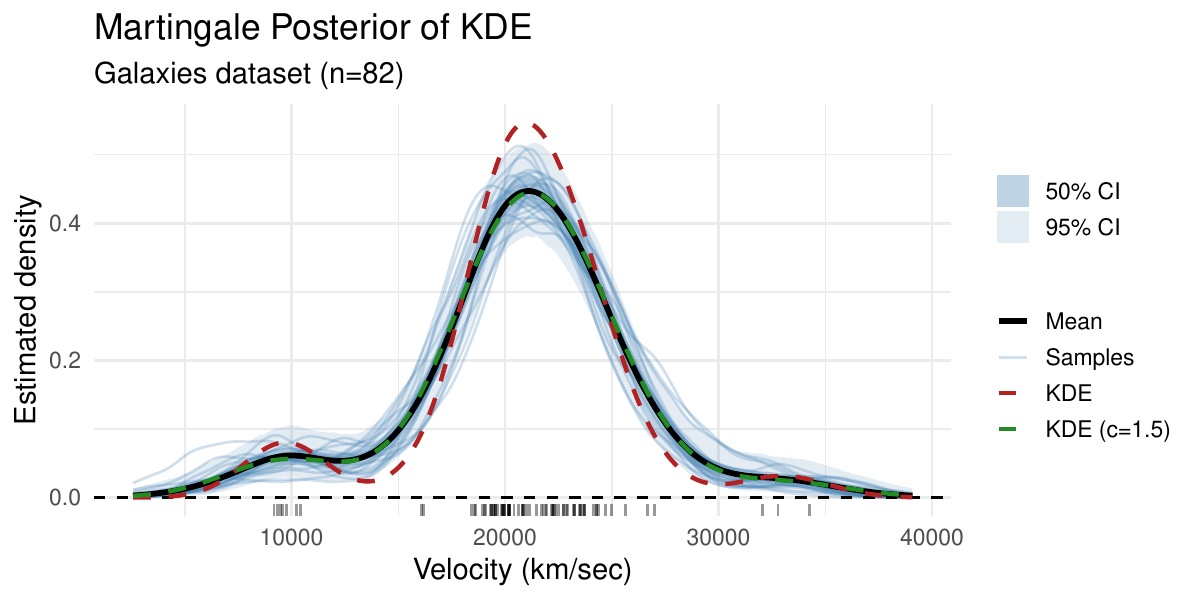}
    \caption{Martingale posterior of the KDE fit to the Galaxies dataset.}.
    \label{fig:galaxies_MP}
\end{figure}

\subsubsection*{Acknowledgements.} Parts of this work are based on Torjus Svardal{'}s bachelor{'}s thesis at the Department of Mathematics, University of Oslo. Leiv R{\o}nneberg and Emil A.~Stoltenberg acknowledge the support of {\it Integreat -- The Norwegian centre for knowledge-drive machine learning}, Norwegian Research Council Project number 332645~(01.07.2023-30.06.2033). Leiv R{\o}nneberg was further supported by the European
Union’s Horizon Europe research and innovation programme under the Marie Sklodowska-Curie grant
agreement No. 101126636. Many thanks to the participants in the predictive Bayes reading group at Blindern, and to Professor Nils Lid Hjort for helpful comments.


\newpage
\appendix

\section{Proofs left out of the main text}\label{app:proofs}
\subsection{Proof of Lemma~\ref{momentlemma}}\label{app:momentlemma}
\begin{proof}[Proof of Lemma~\ref{momentlemma}] We prove this by (strong) induction. Let $p \geq1$ and assume that $E\,|Y_1|^p <\infty$ and $\int |z|^p K(z) \,\dd z < \infty$. The expectation of $Y_2^p$ is 
\begin{align*}
\E\, Y_2^p  
& = \E\, \E\,\{ Y_2^p \given \calG_1\} 
= \E\, \int (h_1z + Y_1)^pK(z) \,\dd z
= \E\, \int \sum_{j=0}^p \binom{p}{j} (h_1z)^{p-j}Y_1^jK(z) \,\dd z\\
& = c_p h_1^p 
+ \sum_{j=1}^{p-1}\binom{p}{j} h_1^{p-j} c_{p-j}\E\,Y_1^j + \E\, Y_1^p ,
\end{align*}
where $c_k = \int z^k K(z)\,\dd z$. The expectation of $Y_3^p$ is
\begin{align*}
\E\, Y_3^p  
& = \E\, \E\,\{ Y_3^p\given \calG_2\} 
= \frac{1}{2}\sum_{i=1}^2 \E\, \int \sum_{j=0}^p \binom{p}{j} (h_2z)^{p-j}Y_i^jK(z) \,\dd z\\
& = 
c_p h_2^p + \frac{1}{2}\E\, Y_2^p + \frac{1}{2}\E\, Y_1^p 
+ \sum_{j=1}^{p-1}\binom{p}{j} h_2^{p-j} c_{p-j}\frac{1}{2}\sum_{i=1}^2\E\,Y_i^j\\
& = c_p h_2^p + \frac{1}{2}\bigg\{ c_p h_1^p + \E\, Y_1^p 
+ \sum_{j=1}^{p-1}\binom{p}{j} h_1^{p-j} c_{p-j}\E\,Y_1^j  \bigg\} + \frac{1}{2}\E\, Y_1^p 
+ \sum_{j=1}^{p-1}\binom{p}{j} h_2^{p-j} c_{p-j}\frac{1}{2}\sum_{i=1}^2\E\,Y_i^j\\
& = c_p h_2^p + c_p\frac{h_1^p}{2} + \E\, Y_1^p
+ \frac{1}{2}\sum_{j=1}^{p-1}\binom{p}{j} h_1^{p-j} c_{p-j}\E\,Y_1^j
+ \sum_{j=1}^{p-1}\binom{p}{j} h_2^{p-j} c_{p-j}\frac{1}{2}\sum_{i=1}^2\E\,Y_i^j.
\end{align*}
and, introducing $a_{m,j} = m^{-1}\sum_{i=1}^m \E\,Y_i^j$, we find that the expectation of $Y_4^p$ is
\begin{align*}
\E\, Y_4^p  
& = \E\, \E\,\{ Y_4^p \given\calG_3\} 
= \frac{1}{3}\sum_{i=1}^3 \E\, \int \sum_{j=0}^p \binom{p}{j} (h_3z)^{p-j}Y_i^jK(z) \,\dd z\\
& = c_ph_3^p + \frac{1}{3}\sum_{i=1}^3 \E\, Y_i^p  +\frac{1}{3}\sum_{i=1}^3 \E\, \int  \sum_{j=1}^{p-1} \binom{p}{j} (h_3z)^{p-j}Y_i^jK(z) \,\dd z \\
& = c_ph_3^p + c_p\frac{h_2^p}{3}
+ c_p\frac{h_1^p}{2} + \E\, Y_1^p
+ \frac{1}{2}\sum_{j=1}^{p-1}\binom{p}{j} h_1^{p-j} c_{p-j}\E\,Y_1^j\\
& \qquad \qquad + \frac{1}{3}\sum_{j=1}^{p-1}\binom{p}{j} h_2^{p-j} c_{p-j}\frac{1}{2}\sum_{i=1}^2\E\,Y_i^j 
+  \sum_{j=1}^{p-1} \binom{p}{j} h_3^{p-j}c_{p-j}\frac{1}{3}\sum_{i=1}^3\E\,Y_i^j \\
& = c_ph_3^p + c_p\frac{h_2^p}{3}
+ c_p\frac{h_1^p}{2} + \E\, Y_1^p
+ \frac{1}{2}\sum_{j=1}^{p-1}\binom{p}{j} h_1^{p-j} c_{p-j}a_{1,j}\\
& \qquad \qquad + \frac{1}{3}\sum_{j=1}^{p-1}\binom{p}{j} h_2^{p-j} c_{p-j}a_{2,j} 
+  \sum_{j=1}^{p-1} \binom{p}{j} h_3^{p-j}c_{p-j}a_{3,j}\\
& = c_ph_3^p + c_p\sum_{\ell=1}^2\frac{h_{\ell}^p }{\ell+1} 
+ \sum_{j=1}^{p-1} \binom{p}{j} \bigg( \sum_{\ell=1}^2\frac{h_{\ell}^{p-j} a_{\ell,j} }{\ell+1}\bigg) c_{p-j}
+ \sum_{j=1}^{p-1} \binom{p}{j} h_3^{p-j}c_{p-j}a_{3,j}
+ \E\, Y_1^p.
\end{align*}
This leads to the strong induction hypothesis that
\begin{align*}
\E\,Y_{m+1}^p    
= c_p h_m^p + c_p\sum_{\ell=1}^{m-1} \frac{h_{\ell}^p}{\ell + 1} 
+ \sum_{j=1}^{p-1} \binom{p}{j} h_m^{p-j}c_{p-j}a_{m,j}
+ \sum_{j=1}^{p-1} \binom{p}{j} c_{p-j}\bigg( \sum_{\ell=1}^{m-1}\frac{h_{\ell}^{p-j} a_{\ell,j} }{\ell+1}\bigg) 
+ \E\, Y_1^p,
\end{align*}
holds for $m = 1,2,3,\ldots,n$. We must show that it is true for $m = n+1$. First, 
\begin{equation}
\begin{split}
\E\,Y_{n+2}^p & = \frac{1}{n+1}\sum_{i=1}^{n+1}\E\,\int
\sum_{j=0}^p\binom{p}{j} (h_{n+1}z)^{p-j} Y_i^j K(z)\,\dd z\\
& = 
c_p h_{n+1}^p + \frac{1}{n+1}\sum_{i=2}^{n+1}\E\, Y_i^{p} + 
\sum_{j=1}^{p-1}\binom{p}{j} h_{n+1}^{p-j}c_{p-j}a_{n+1,j} + \E\,Y_1^p.
\end{split}
\label{eq:inductionstep}
\end{equation}
Here we need to check that the sum $(n+1)^{-1} \sum_{i=2}^{n+1}\E\, Y_i^p$ is of the hypothesised form. Inserting the induction hypothesis into this sum, we write
\begin{equation}
    \frac{1}{n+1} \sum_{i=2}^{n+1}\E\, Y_i^p  = A_{n+1} + B_{n+1} 
    + \frac{n}{n+1}\sum_{j=1}^{p-1}\binom{p}{j} h_{n+1}^{p-j}c_{p-j}a_{n+1,j}
    + \frac{n}{n+1}\E\,Y_1^p,  
    \label{eq:thesum}
\end{equation}
in terms of 
\begin{equation}
A_{n+1} = \frac{c_p}{n+1} \sum_{i=2}^{n+1} \bigg( h_{i-1}^p + \sum_{\ell = 1}^{i-2}\frac{h_{\ell}^p}{\ell + 1} \bigg),
    \notag
\end{equation}
and 
\begin{align*}
B_{n+1} & = \sum_{j=1}^{p-1}\binom{p}{j}\frac{c_{p-j}}{n+1}      
\sum_{i=2}^{n+1} \bigg( h_{i-1}^pa_{i-1,j} + \sum_{\ell = 1}^{i-2}\frac{h_{\ell}^pa_{\ell,j}}{\ell + 1} \bigg). 
\end{align*}
We see that these two sums contain) sums of the same form. In particular,
\begin{align*}
\frac{1}{n+1}      
\sum_{i=2}^{n+1} \bigg( h_{i-1}^pa_{i-1,j} + \sum_{\ell = 1}^{i-2}\frac{h_{\ell}^pa_{\ell,j}}{\ell + 1} \bigg) 
& = \frac{1}{n+1}\bigg( \sum_{\ell=1}^n h_{\ell}^pa_{\ell,j} + \sum_{\ell=1}^{n-1}\frac{n - \ell}{\ell + 1} h_{\ell}^p a_{\ell,j}\bigg)\\
& = \frac{1}{n+1}\sum_{\ell=1}^{n-1} h_{\ell}^p a_{\ell,j}\bigg(1 + \frac{n-\ell}{\ell + 1}\bigg) + \frac{h_n^{p}a_{n,j}}{n+1}\\
& = \sum_{\ell = 1}^{n}\frac{h_{\ell}^p a_{\ell,j}}{\ell + 1}.
\end{align*}
In summary, doing a similar computation for $A_{n+1}$, we have
\begin{equation}
A_{n+1} = c_p \sum_{\ell = 1}^{n}\frac{h_{\ell}^p }{\ell + 1},\quad \text{and}\quad 
B_{n+1} = \sum_{j=1}^{p-1}\binom{p}{j}c_{p-j}\sum_{\ell = 1}^{n}\frac{h_{\ell}^p a_{\ell,j}}{\ell + 1}.
    \notag
\end{equation}
Inserting~\eqref{eq:thesum} we these expressions for $A_{n+1}$ and $B_{n+1}$ back into the right hand side of~\eqref{eq:inductionstep} proves the induction hypothesis. It is tedious but easy to check that all expectations can be replaced by absolute moments $\E\,|Y_i|^j$ and $\int |z|^j K(z) \,\dd z$ at all steps in the above derivations, thus yielding an expression for $\E\,|Y_{n+1}|^p$.  
\end{proof}

\subsection{Proof of Theorem~\ref{thm:abscont}}\label{app:abscont}
We present this proof before that of Proposition~\ref{prop:notacid}, because the latter relies on the former. Let the sequence of predictives $(\alpha_n)_{n\geq 0}$ be as defined in~\eqref{eq:predkernel1}, with Gaussian kernel $K(z) = \phi(z) = (2\pi)^{-1/2}\exp(-\tfrac12 z^2)$ and bandwidth $h_n = cn^{-a}$. Write $\Phi(z) = \int_{-\infty}^z \phi(u)\,\dd u$ for the standard Gaussian cumulative. We begin with a purely algebraic lemma.

\begin{lem}\label{lem:integral}
Let $\mu_1, \mu_2 \in \real$ and $\sigma_1, \sigma_2 > 0$. Then
\begin{equation}
    \int_{-\infty}^{\infty} \frac{1}{\sigma_1}\phi\left(\frac{y - \mu_1}{\sigma_1}\right)\frac{1}{\sigma_2}\phi\left(\frac{y - \mu_2}{\sigma_2}\right)\,\dd y = \frac{1}{\sqrt{\sigma_1^2 + \sigma_2^2}}\phi\left(\frac{\mu_1 - \mu_2}{\sqrt{\sigma_1^2 + \sigma_2^2}}\right),
    \notag
\end{equation}

\end{lem}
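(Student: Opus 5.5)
The identity is the statement that the convolution of two Gaussian densities is again Gaussian, evaluated at a single point. The most economical proof is probabilistic. Let $X_1 \sim \normal(\mu_1,\sigma_1^2)$ and $X_2 \sim \normal(0,\sigma_2^2)$ be independent. Since $\phi$ is symmetric,
\begin{equation}
\frac{1}{\sigma_2}\phi\Big(\frac{y-\mu_2}{\sigma_2}\Big) = \frac{1}{\sigma_2}\phi\Big(\frac{\mu_2-y}{\sigma_2}\Big),
\notag
\end{equation}
so the left-hand side equals $\int f_{X_1}(y) f_{X_2}(\mu_2 - y)\,\dd y$. This is the density of $X_1 + X_2$ evaluated at $\mu_2$. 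The sum $X_1+X_2$ is $\normal(\mu_1,\sigma_1^2+\sigma_2^2)$, which follows from multiplying the characteristic functions $\exp(i t\mu_1 - \sigma_1^2 t^2/2)$ and $\exp(-\sigma_2^2t^2/2)$. Its density at $\mu_2$ is therefore
\begin{equation}
\frac{1}{\sqrt{\sigma_1^2+\sigma_2^2}}\,\phi\Big(\frac{\mu_2-\mu_1}{\sqrt{\sigma_1^2+\sigma_2^2}}\Big),
\notag
\end{equation}
and one more use of the symmetry of $\phi$ gives the stated form.

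To keep the argument self-contained, I would instead give a direct computation by completing the square. First write the product of the two densities as
\begin{equation}
(2\pi\sigma_1\sigma_2)^{-1}\exp\{-Q(y)/2\}, \qquad Q(y) = \frac{(y-\mu_1)^2}{\sigma_1^2} + \frac{(y-\mu_2)^2}{\sigma_2^2}.
\notag
\end{equation}
Next set $s^2 = \sigma_1^2+\sigma_2^2$ and $\tau^2 = \sigma_1^2\sigma_2^2/s^2$. Completing the square gives
\begin{equation}
Q(y) = \frac{(y-m)^2}{\tau^2} + \frac{(\mu_1-\mu_2)^2}{s^2}, \qquad m = \frac{\mu_1\sigma_2^2 + \mu_2\sigma_1^2}{s^2}.
\notag
\end{equation}
The cross term is the only step that needs care. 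The cleanest check is to expand $Q$ and compare constant terms, which reduces to the elementary identity
\begin{equation}
\frac{\mu_1^2}{\sigma_1^2}+\frac{\mu_2^2}{\sigma_2^2} - \frac{m^2}{\tau^2} = \frac{(\mu_1-\mu_2)^2}{s^2}.
\notag
\end{equation}

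Finally, integrate in $y$. The Gaussian integral gives $\int \exp\{-(y-m)^2/(2\tau^2)\}\,\dd y = \sqrt{2\pi}\,\tau$. The prefactor then collapses because $\sqrt{2\pi}\,\tau/(2\pi\sigma_1\sigma_2) = 1/(\sqrt{2\pi}\,s)$. What remains is $s^{-1}\phi((\mu_1-\mu_2)/s)$, as claimed.

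None of these steps is a real obstacle. The only place an error could slip in is the completing-the-square bookkeeping, and the probabilistic argument serves as an independent cross-check on it.
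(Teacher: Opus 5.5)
Your proposal is correct. Both arguments check out: the constant-term identity reduces to $(\mu_1^2\sigma_2^2+\mu_2^2\sigma_1^2)(\sigma_1^2+\sigma_2^2)-(\mu_1\sigma_2^2+\mu_2\sigma_1^2)^2=\sigma_1^2\sigma_2^2(\mu_1-\mu_2)^2$, and the prefactor collapses as you say.

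Your route differs from the paper's, which does no computation of its own. The paper substitutes $x=(y-\mu_1)/\sigma_1$ to reduce the integral to $\sigma_1\int\phi(x)\phi(a+bx)\,\dd x$, with $a=(\mu_1-\mu_2)/\sigma_2$ and $b=\sigma_1/\sigma_2$. It then quotes from Owen's (1980) table the antiderivative $(1+b^2)^{-1/2}\phi(a/\sqrt{1+b^2})\,\Phi(\sqrt{1+b^2}\,x+ab/\sqrt{1+b^2})$. It leaves implicit that the $\Phi$ factor runs from $0$ to $1$ between the limits, and that $\sigma_2\sqrt{1+b^2}=s$ and $a/\sqrt{1+b^2}=(\mu_1-\mu_2)/s$ in your notation.

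The paper's version is shorter but hands the real work to a reference. Your completing-the-square computation is essentially the derivation of that table entry, so it makes the lemma self-contained. Your probabilistic argument explains why the answer takes this form: it is the $\normal(\mu_1,\sigma_1^2+\sigma_2^2)$ density evaluated at $\mu_2$. That reading is also exactly how the lemma is used later, for instance in computing $\E\{\phi((Y_{n+1}-Y_j)/h_{n+1})\given\calG_n\}$ under the Gaussian-mixture predictive.

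One small point in the probabilistic version: the convolution formula identifies the density of $X_1+X_2$ only almost everywhere. Both the convolution and the Gaussian density are continuous, since an $L^1$ function convolved with a bounded one is uniformly continuous. So they agree at every point, including $\mu_2$. Your direct computation avoids the issue entirely.
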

\begin{proof}
    First make the substitution $x = (y - \mu_1)/\sigma_1$ to obtain
    $$\int\phi\left(\frac{y - \mu_1}{\sigma_1}\right)\phi\left(\frac{y - \mu_2}{\sigma_2}\right)\,\dd y = \sigma_1\int\phi(x)\phi\left(\frac{\mu_1 - \mu_2}{\sigma_2} + \frac{\sigma_1}{\sigma_2}x\right)\,\dd x,$$
    and then apply the identity (see \citet[p.~396, Table 1, entry n10]{owen1980table})
    \begin{equation}\int\phi(x)\phi(a + bx)\,\dd x = \frac1{\sqrt{1 + b^2}}\phi\left(\frac{a}{\sqrt{1 + b^2}}\right)\Phi\left(\sqrt{1 + b^2}x + \frac{ab}{\sqrt{1 + b^2}}\right) + C,
    \notag
    \end{equation}
    with $a = (\mu_1 - \mu_2)/\sigma_2$ and $b = \sigma_1/\sigma_2$.\end{proof}
Next, we need to show that a certain stochastic process is \as~bounded.
\begin{lem}\label{lem:LLN}
    Let $Y_1, Y_2, \ldots$ be distributed according to the predictives $\alpha_0, \alpha_1, \ldots$ Then
    \begin{equation*}\frac1{n^2h_n}\sum_{i, j = 1}^n\phi\left(\frac{Y_i - Y_j}{h_n}\right) \to Z\quad\as,
    \end{equation*}
    where $|Z| < \infty$ a.s. In particular,
    \begin{equation*}
        \sup_{n\geq 1}\frac{1}{n^2h_n}\sum_{i,j=1}^n\phi\left(\frac{Y_i - Y_j}{h_n}\right) < \infty\quad\as
    \end{equation*}
\end{lem}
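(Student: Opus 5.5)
The plan is to pass to a Fourier representation, where the double sum becomes a weighted $L^2$-norm of an empirical characteristic function, and then to show that this quantity is a nonnegative almost supermartingale. The \citet{robbins1971convergence} theorem, already used in Theorem~\ref{thm:weakconvas}, then gives \as~convergence to a finite limit. The second display of the statement is immediate from the first, since an \as~convergent real sequence is \as~bounded.

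\emph{Step 1 (Fourier representation).} Since $\phi$ is its own Fourier transform up to constants, $h^{-1}\phi(x/h) = (2\pi)^{-1}\int e^{itx}e^{-h^2t^2/2}\,\dd t$. Writing $\psi_n(t) = n^{-1}\sum_{j=1}^n e^{itY_j}$ and $w_n(t) = e^{-h_n^2t^2/2}$, I will use
\begin{equation*}
S_n \coloneqq \frac{1}{n^2h_n}\sum_{i,j=1}^n\phi\Big(\frac{Y_i-Y_j}{h_n}\Big) = \frac{1}{2\pi}\int |\psi_n(t)|^2 w_n(t)\,\dd t \;\geq 0 .
\end{equation*}

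\emph{Step 2 (one-step conditional expectation).} Under $\alpha_n$ we have $Y_{n+1} = Y_I + h_n\varepsilon$, with $I$ uniform on $\{1,\ldots,n\}$ and $\varepsilon\sim\normal(0,1)$ independent of $I$. Hence $\E\{e^{itY_{n+1}}\given\calG_n\} = \psi_n(t)w_n(t)$. Since $\psi_{n+1} = (n\psi_n + e^{itY_{n+1}})/(n+1)$, expanding the modulus squared gives
\begin{equation*}
\E\{|\psi_{n+1}(t)|^2\given\calG_n\} = \frac{n^2 + 2n\,w_n(t)}{(n+1)^2}|\psi_n(t)|^2 + \frac{1}{(n+1)^2}.
\end{equation*}
Multiplying by $w_{n+1}(t) = w_n(t)\,r_n(t)$, where $r_n(t) = e^{(h_n^2-h_{n+1}^2)t^2/2}\geq 1$, and integrating (Fubini is justified since the integrand is nonnegative), I obtain
\begin{equation*}
\E\{S_{n+1}\given\calG_n\} \leq \frac{1}{2\pi}\int |\psi_n|^2 w_n\, q_n(t)\,\dd t + \frac{C}{(n+1)^2h_{n+1}},
\qquad q_n(t) = r_n(t)\,\frac{n^2+2n w_n(t)}{(n+1)^2}.
\end{equation*}

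\emph{Step 3 (the main obstacle).} The difficulty is that the bandwidth shrinks, so $r_n\geq 1$ inflates the weight, and this must be beaten by the contraction $1 - 2n(1-w_n(t))/(n+1)^2$. With $h_n = cn^{-a}$ one has $h_n^2 - h_{n+1}^2 \approx 2a h_n^2/n$, so $\log r_n(t)\approx a h_n^2t^2/n$. On the other side, $1 - w_n(t)\approx \tfrac12 h_n^2t^2$ for small $h_nt$ and is bounded below by a constant for large $h_nt$. I will therefore split into the regions $h_nt\leq\delta$ and $h_nt>\delta$ and aim for
\begin{equation*}
q_n(t)\leq 1 + (a-1)\frac{h_n^2t^2}{n} + O\Big(\frac{h_n^2t^2}{n^2}\Big)
\end{equation*}
on the first region, and $q_n(t)\leq 1$ for large $n$ on the second. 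Any leftover factor of size $1+O(n^{-1-\eta})$ is harmless, because Robbins--Siegmund allows multiplicative factors $(1+\beta_n)$ with $\sum_n\beta_n<\infty$. I expect this comparison to be where the restriction on $a$ really enters, and it is what I would check first.

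\emph{Step 4 (conclusion).} Provided $a<1$, the additive term $\sum_n n^{-2}h_{n+1}^{-1}\asymp\sum_n n^{a-2}$ is finite. This summability is essentially necessary, since $S_n$ contains the diagonal term $\phi(0)/(nh_n)$, which is unbounded when $a>1$. Robbins--Siegmund then gives $S_n\to Z$ \as~with $Z<\infty$ \as, which is the claim. If the estimates in Step 3 turn out too crude for some $a$, the fallback is to remove the diagonal first. I would apply the argument to $S_n - \phi(0)/(nh_n)$, which corresponds to replacing $1/(n+1)^2$ by $0$ in Step 2, and treat the diagonal term deterministically.
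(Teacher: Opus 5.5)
Your Steps 1 and 2 are correct. Step 3 fails, and not for lack of sharper constants: the bound $q_n(t)\le 1$ on $\{h_n|t|>\delta\}$ is false. Since $w_n\ge 0$, you have $q_n(t)\ge\{n/(n+1)\}^2e^{(h_n^2-h_{n+1}^2)t^2/2}$, which is unbounded in $t$ for every $n$. More precisely, for fixed $u=h_n^2t^2/2$ one has $n\{q_n(t)-1\}\to 2au-2(1-e^{-u})$. This is positive once $u$ exceeds a constant $u^*(a)<1/a$, and that region carries a fixed fraction of the weight $w_n$. So no bound $q_n(t)\le 1+\beta_n$ that is uniform in $t$ with $\sum_n\beta_n<\infty$ exists.

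Truncation does not rescue this. Making the discarded tail $\int_{|t|>T_n}w_{n+1}\,\dd t$ summable forces $h_nT_n\gtrsim\sqrt{\log n}$, and then $\sup_{|t|\le T_n}q_n(t)-1\asymp\log n/n$. Bounding the band $\{u\ge u^*\}$ with $|\psi_n|^2\le 1$ instead costs $\asymp 1/(nh_n)$ per step. Neither is summable.

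In fact, for $a>2-\sqrt2$ no inequality $\E\{S_{n+1}\given\calG_n\}\le(1+\beta_n)S_n+\xi_n$ with deterministic summable $\beta_n,\xi_n$ can hold. If $Y_1=\cdots=Y_n$, then $S_n=\phi(0)/h_n$, and a direct computation gives $\E\{S_{n+1}\given\calG_n\}=S_n\{1+(a+\sqrt2-2)/n+O(n^{-2})\}$. That is an excess of order $n^{a-1}$ coming from the off-diagonal pairs, so removing the diagonal (your fallback) does not help.

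The paper's own argument has the same defect in another form. Its final inequality has coefficient $2+\{n/(n+1)\}^2h_n/h_{n+1}\to 3$ in front of $Q_n$, because choosing $e^{\delta_n^2/4}=n$ cancels the factor $1/n$ on the cross term. That is not of Robbins--Siegmund form either.

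The missing idea is to control the cumulative product of the contraction factors, not each factor separately, by renormalising frequency by frequency. Write your Step~2 as $\E\{|\psi_{n+1}(t)|^2\given\calG_n\}=\{1-\eps_n(t)\}|\psi_n(t)|^2+(n+1)^{-2}$, with $\eps_n(t)=\{1+2n(1-w_n(t))\}/(n+1)^2\in(0,1)$. Set $\Pi_n(t)=\prod_{k<n}\{1-\eps_k(t)\}$, $\rho(t)=\sup_n w_n(t)\Pi_n(t)$ and $V_n=\int|\psi_n(t)|^2\rho(t)/\Pi_n(t)\,\dd t$. Then, exactly,
\begin{equation*}
\E\{V_{n+1}\given\calG_n\}=V_n+\frac{1}{(n+1)^2}\int\frac{\rho(t)}{\Pi_{n+1}(t)}\,\dd t,\qquad 2\pi S_n\le V_n.
\end{equation*}
Splitting the product at the index where $h_k|t|$ is of order one gives $\Pi_{n+1}(t)\gtrsim(1+\min\{n,|t|^{1/a}\})^{-2}$ and $\Pi_n(t)\lesssim(1+\min\{n,|t|^{1/a}\})^{-2(1-\theta)}$ for any fixed $\theta>0$. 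Hence $\rho(t)\lesssim(1+|t|)^{-2(1-\theta)/a}$ and $\int\rho/\Pi_{n+1}\,\dd t\lesssim n^{a+2\theta}$, which is summable against $(n+1)^{-2}$ when $a<1$ and $\theta<(1-a)/2$. Robbins--Siegmund (with zero multiplicative term) then gives $V_n\to V_\infty<\infty$ a.s., hence $\sup_nS_n<\infty$ a.s. That is the form used in Theorem~\ref{thm:abscont}; the a.s.\ convergence of $S_n$ itself would still need a separate argument.
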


\begin{proof}
Let 
\begin{equation*}
    Q_n = \frac1{n^2h_n}\sum_{i,j=1}^n\phi\left(\frac{Y_i - Y_j}{h_n}\right).
\end{equation*}
Then
    \begin{multline}\label{eq:three_terms}
        \E\left\{Q_{n+1}\given\calG_n\right\} = \frac1{(n+1)^2h_{n+1}}\sum_{i,j=1}^n\phi\left(\frac{Y_i - Y_j}{h_{n+1}}\right) \\ + \frac2{(n+1)^2h_{n+1}}\sum_{j=1}^n\E\left\{\phi\left(\frac{Y_{n+1} - Y_j}{h_{n+1}}\right)\,\middle|\,\calG_n\right\} + \frac1{\sqrt{2\pi}}\frac1{(n+1)^2h_{n+1}}.
    \end{multline}
We require an upper bound for the middle term of~\eqref{eq:three_terms}. As $Y_{n+1}\given\calG_n \sim \sum_{i=1}^nn^{-1}\mathrm{N}(Y_i, h_n^2)$, we have that 
\begin{multline*}
    \E\left\{\phi\left(\frac{Y_{n+1} - Y_j}{h_{n+1}}\right)\,\middle|\,\calG_n\right\} = \frac1n\sum_{i=1}^n\int\phi\left(\frac{y - Y_j}{h_{n+1}}\right)\phi\left(\frac{y - Y_i}{h_n}\right)\,\dd y \\ = \frac{h_{n+1}}{n\sqrt{h_n^2 + h_{n+1}^2}}\sum_{i=1}^n\phi\left(\frac{Y_i - Y_j}{\sqrt{h_n^2 + h_{n+1}^2}}\right),
\end{multline*}
where we have used Lemma~\ref{lem:integral}. Hence 
\begin{align*}
    \frac2{(n+1)^2h_{n+1}}\sum_{j=1}^n\E\left\{\phi\left(\frac{Y_{n+1} - Y_j}{h_{n+1}}\right)\,\middle|\,\calG_n\right\} & = \frac2{n(n+1)^2\sqrt{h_n^2 + h_{n+1}^2}}\sum_{i,j=1}^n\phi\left(\frac{Y_i - Y_j}{\sqrt{h_n^2 + h_{n+1}^2}}\right) \\
    & \leq \frac{2}{n^3 h_n}\sum_{i,j=1}^n\phi\left(\frac{Y_i - Y_j}{\sqrt{2}h_n}\right),
\end{align*}
where we have used that
\begin{equation*}
    \phi\left(\frac{x}{h_n}\right) \leq \phi\left(\frac{x}{\sqrt{h_n^2 + h_{n+1}^2}}\right) \leq \phi\left(\frac{x}{\sqrt{2}h_n}\right)
\end{equation*}
for all $x\in\mathbb{R}$.

Let $\delta_n \geq 0$ be a sequence, to be chosen shortly. If $|x| \leq \delta_n h_n$, then
\begin{equation*}
    \phi\left(\frac{x}{\sqrt{2}h_n}\right) = \exp\left\{\frac14\left(\frac{x}{h_n}\right)^2\right\} \phi\left(\frac{x}{h_n}\right) \leq e^{\delta_n^2 / 4}\phi\left(\frac{x}{h_n}\right),
\end{equation*}
so
\begin{align*}
    \sum_{i,j=1}^n\phi\left(\frac{Y_i - Y_j}{\sqrt{h_n}}\right) & = \sum_{|Y_i - Y_j| \leq \delta_nh_n}\phi\left(\frac{Y_i - Y_j}{\sqrt{2}h_n}\right) + \sum_{|Y_i - Y_j| > \delta_nh_n}\phi\left(\frac{Y_i - Y_j}{\sqrt{2}h_n}\right) \\
    & \leq e^{\delta_n^2 / 4}\sum_{|Y_i - Y_j| \leq \delta_nh_n} \phi\left(\frac{Y_i - Y_j}{h_n}\right) + n^2\phi\left(\frac{\delta_n}{\sqrt{2}}\right) \\
    & \leq e^{\delta_n^2 / 4}\sum_{i,j=1}^n\phi\left(\frac{Y_i - Y_j}{h_n}\right) + n^2\phi\left(\frac{\delta_n}{\sqrt{2}}\right).
\end{align*}
Now, let $\delta_n = 2\sqrt{\log n}$. Then $e^{\delta_n^2 / 4} = n$, and $\phi(\delta_n / \sqrt{2}) = (2\pi)^{-1/2}\exp\left\{-\frac14\log n\right\} = (2\pi)^{-1/2}n^{-1}$, and so
\begin{align*}
    \frac2{n^3h_n}\sum_{i,j=1}^n\phi\left(\frac{Y_i - Y_j}{\sqrt{2}h_n}\right) & \leq \frac{2}{n^2 h_n}\sum_{i,j=1}^n\phi\left(\frac{Y_i - Y_j}{h_n}\right) + \frac{2}{\sqrt{2\pi}}\frac1{n^2 h_n} \\
    & = 2Q_n + \sqrt{\frac{2}{\pi}}\frac1{n^2 h_n}.
\end{align*}
Combining this bound with the fact that $\phi(x/h_{n+1}) \leq \phi(x/h_n)$ for all $x\in\mathbb{R}$, we have thus established that
\begin{equation*}
    \E\{Q_{n+1}\given\calG_n\} \leq \left\{2 + \left(\frac{n}{n+1}\right)^2\frac{h_n}{h_{n+1}}\right\} Q_n + \sqrt{\frac{2}{\pi}}\frac1{n^2h_n} + \frac1{\sqrt{2\pi}}\frac1{(n+1)^2 h_{n+1}}\quad\as,
\end{equation*}
and so, since $h_n / h_{n+1} \to 1$ and $\sum_{n=1}^\infty(n^2 h_n)^{-1} \propto \sum_{n=1}^\infty n^{a - 2} < \infty$ provided $a < 1$, we may apply the convergence theorem for nonnegative almost supermartingales by \citet{robbins1971convergence}, which proves the result.
\end{proof}

Before proving absolute continuity of the limiting distribution, we require one final lemma, which is a brief extension of the first implication in de la Vall\'{e}e-Poussin's theorem (see for example \citet[Theorem~4.5.9, p.~272]{bogachev2007measureI}) to measures with values in $[0,\infty]$. Let $(\mathcal{X}, \calF, \mu)$ be a measure space (not necessarily finite). Using Definition 4.5.2 from \citet[p.~267]{bogachev2007measureI}, we say that a family $\mathcal{A}$ of real-valued measurable functions on $\mathcal{X}$ has \emph{uniformly absolutely continuous integrals} if
\begin{equation}
    \lim_{\mu(E)\to 0}\sup_{f\in\mathcal{A}}\int_E|f|\,\dd\mu = 0.
\end{equation}

\begin{lem}\label{lem:delavalleepoussin}
    Let $(\mathcal{X}, \calF, \mu)$ be as above and let $\mathcal{A}$ be a family of real-valued measurable functions on $\mathcal{X}$. Suppose there exists a strictly increasing function $G : [0, \infty) \to [0, \infty)$ such that
    \begin{equation}
        \lim_{t\to\infty}\frac{G(t)}{t} = \infty\quad\text{and}\quad\sup_{f\in\mathcal{A}}\int G(|f|)\,\dd\mu = M < \infty.\notag
    \end{equation}
    Then $\mathcal{A}$ has uniformly absolutely continuous integrals.
\end{lem}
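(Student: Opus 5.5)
We must show that $\lim_{\mu(E)\to 0}\sup_{f\in\mathcal{A}}\int_E|f|\,\dd\mu = 0$. The plan is the standard de la Vallée-Poussin splitting argument: cut each $f$ at a large level $t_0$ and treat the large and small values separately. The only change from the finite-measure case is that we never use $\mu(\mathcal{X})<\infty$. We only ever integrate over the set $E$, which has small measure, so the truncation costs at most $t_0\,\mu(E)$.

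Fix $\eps>0$. Since $G(t)/t\to\infty$, choose $t_0>0$ such that $G(t)/t \geq 2M/\eps$ for all $t\geq t_0$ (if $M=0$, any $t_0$ works with the bound read as $G(t)\geq 0$). Then for every $f\in\mathcal{A}$, on the set $\{|f|\geq t_0\}$ we have $|f|\leq (\eps/(2M))\,G(|f|)$. We deduce
\begin{equation*}
\int_{E\cap\{|f|\geq t_0\}}|f|\,\dd\mu \leq \frac{\eps}{2M}\int G(|f|)\,\dd\mu \leq \frac{\eps}{2},
\end{equation*}
uniformly in $f\in\mathcal{A}$ and in $E$. Here we use that $G\geq 0$, so restricting the integral of $G(|f|)$ to a subset only decreases it. On the complementary set $E\cap\{|f|<t_0\}$ we have the trivial bound
\begin{equation*}
\int_{E\cap\{|f|<t_0\}}|f|\,\dd\mu \leq t_0\,\mu(E).
\end{equation*}
Hence $\sup_{f\in\mathcal{A}}\int_E|f|\,\dd\mu\leq \eps/2+t_0\mu(E)<\eps$ whenever $\mu(E)<\eps/(2t_0)$, which is the claim.

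There is no real obstacle. The points requiring care are these:
\begin{enumerate}[label=(\roman*)]
\item Measurability of $G(|f|)$ follows because $G$ is monotone, hence Borel.
\item The case $M=0$ must be handled: then $G(|f|)=0$ a.e., and strict monotonicity together with $G\geq 0$ forces $|f|=0$ a.e. except possibly where $G(|f|)=0$, i.e.\ $|f|=0$. So the integrals vanish.
\item If $\mu$ is allowed to take the value $\infty$, the condition $\mu(E)\to 0$ only ever involves sets of small finite measure, so the bound $t_0\mu(E)$ remains meaningful.
\end{enumerate}

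Strict monotonicity of $G$ is not actually needed beyond these remarks. The superlinear growth is the essential hypothesis.
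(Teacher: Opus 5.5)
Your proof is correct and follows essentially the same argument as the paper: truncate $|f|$ at a level, bound the part on $E$ below the level by $t_0\,\mu(E)$, and bound the part above the level using the superlinear growth of $G$. Your choice of $t_0$ with $G(t)/t\geq 2M/\eps$ for \emph{all} $t\geq t_0$ is slightly more careful than the paper's bound $cM/G(c)$. That bound tacitly uses $t/G(t)\leq c/G(c)$ for $t>c$, i.e.\ that $G(t)/t$ is nondecreasing beyond $c$, which is not among the hypotheses.
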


\begin{proof}
    Let $\varepsilon > 0$ and take $c \in \mathbb{R}$, yet to be specified. We have
    \begin{equation}
        \sup_{f\in\mathcal{A}} \int_E |f|\dd\mu \leq \sup_{f\in\mathcal{A}}\int_E|f| I_{|f| \leq c}\,\dd\mu + \sup_{f\in\mathcal{A}}\int_E |f| I_{|f| > c}\,\dd \mu.
        \notag
    \end{equation}
    Now, the first term is bounded above by $c\mu(E)$. As for the second,
    \begin{equation}
    \sup_{f\in\mathcal{A}}\int_E |f| I_{|f| > c}\,\dd\mu = \sup_{f\in\mathcal{A}}\int_E\frac{|f|}{G(|f|)}G(|f|)I_{|f| > c}\,\dd\mu \leq \frac{cM}{G(c)},
    \notag
    \end{equation}
    so if we take $c$ large enough so that $G(c) / c > M/\varepsilon$, and $\mu(E) < \varepsilon / c$, then we obtain
    \begin{equation}
        \sup_{f\in\mathcal{A}}\int_E|f|\,\dd\mu < 2\varepsilon,\notag
    \end{equation}
    proving the lemma.
\end{proof}

We are now ready to prove the main result of this section.
\begin{proof}[Proof of Theorem~\ref{thm:abscont}]
    Let 
    \begin{equation}
        \mathcal{R} = \left\{\omega\in\Omega : \sup_{n\geq 1}\frac1{n^2 h_n}\sum_{i,j=1}\phi\left(\frac{Y_i(\omega) - Y_j(\omega)}{h_n}\right) < \infty\right\}\notag
    \end{equation}
    and
    \begin{equation}
        \mathcal{S} = \left\{\omega\in\Omega : \int g(y)\alpha_n(\dd y, \omega) \to \int g(y)\alpha(\dd y, \omega)\text{ for all }g\in C_b\right\},\notag
    \end{equation}
    where we recall that $C_b$ denotes the set of bounded and continuous functions from $\mathbb{R}$ to itself. By Lemma~\ref{lem:LLN} and Theorem~\ref{thm:weakconvas}, respectively, we know that $\Pr(\mathcal{R}) = \Pr(\mathcal{S}) = 1$, and so in particular, $\Pr(\mathcal{R}\cap\mathcal{S}) = 1$. 

    Take $\omega \in \mathcal{R}\cap\mathcal{S}$ and let $G(t) = t^2$. By Lemmas~\ref{lem:integral} and~\ref{lem:LLN},
\begin{multline*}
    \sup_{n\geq 1}\int G(|f_n(y, \omega)|)\,\dd y = \sup_{n\geq 1}\frac1{n^2h_n^2}\sum_{i,j=1}^n\int\phi\left(\frac{y - Y_i(\omega)}{h_n}\right)\phi\left(\frac{y - Y_j(\omega)}{h_n}\right)\,\dd y \\ = \sup_{n\geq 1}\frac1{n^2h_n}\sum_{i,j=1}^n\phi\left(\frac{Y_i(\omega) - Y_j(\omega)}{h_n}\right) < \infty,
\end{multline*}
as $\omega\in\mathcal{R}$, and so by Lemma~\ref{lem:delavalleepoussin}, the sequence $(f_n(\cdot, \omega))_{n\geq 1}$ has uniformly absolutely continuous integrals. Furthermore, as $\omega\in\mathcal{S}$, the sequence $(\alpha_n(\cdot, \omega))_{n\geq 0}$ is tight by Prokhorov's theorem (see, e.g., \citet[Theorem~6.2, p.~37]{Billingsley68}). That is, for any given $\varepsilon > 0$, there exists a compact set $K_{\eps, \omega}$ such that 
\begin{equation}
\sup_{n\geq 1}\int_{K_{\eps}(\omega)^c}f_n(y, \omega)\,\dd y
= \sup_{n\geq 1} \alpha_n(K_{\eps, \omega}^c,\omega) < \eps.
\notag
\end{equation}
Finally, we note that $\lVert f_n(\cdot, \omega)\rVert_1 = 1$ for all $n\geq 1$. Hence, by Theorem 4.7.20 in~\citet[p.~287]{bogachev2007measureI}, the sequence $(f_n(\cdot, \omega))_{n\geq 1}$ is relatively compact in the weak topology of $L^1(\mathbb{R})$. Hence, there exists $f(\cdot, \omega)\in L^1(\mathbb{R})$ and a subsequence $n_k(\omega)$ such that $f_{n_k(\omega)}(\cdot, \omega) \Rightarrow f(\cdot, \omega)$, that is,
\begin{equation}
    \int g(y) f_{n_k(\omega)}(y, \omega)\,\dd y \to \int g(y) f(y, \omega)\,\dd y\text{ for all }g\in L^{\infty}(\mathbb{R}).\notag
\end{equation}
In particular, this convergence holds for all $g \in C_b$. As $\omega\in\mathcal{S}$, we also have
\begin{equation}
    \int g(y) \alpha_{n_k(\omega)}(\dd y, \omega) \to \int g(y)\alpha(\dd y, \omega)\text{ for all }g\in C_b,
    \notag
\end{equation}
but by definition,
\begin{equation}
    \int g(y) \alpha_{n_k(\omega)}(\dd y, \omega) = \int g(y) f_{n_k(\omega)}(y, \omega)\,\dd y\text{ for all }g\in C_b,
    \notag
\end{equation}
which forces
\begin{equation}
    \int g(y) f(y, \omega)\,\dd y = \int g(y) \alpha(\dd y, \omega)\text{ for all }g\in C_b,
    \notag
\end{equation}
by uniqueness of limits. As $C_b$ separates measures on $(\mathbb{R}, \calB)$, $f(\cdot, \omega)$ must be the density of the distribution $\alpha(\cdot, \omega)$. As the argument applies to any $\omega$ from a set of measure 1, we are done.
\end{proof}

\subsection{Proof of Proposition~\ref{prop:notacid}}\label{app:notacid}
\begin{proof}[Proof of Proposition~\ref{prop:notacid}]
By Theorem~\ref{thm:weakconvas}, we know there exists a distribution $\alpha$ such that $\alpha_n\Rightarrow\alpha~\as$ Let $A$ be a set of the form $(b, \infty)$ such that $\alpha(A) > 0$. We have that 
\begin{align*}
    \E\,\{\alpha_{n+1}(A)\given\calG_n\} & = \E\,\bigg\{\frac1{(n+1)h_{n+1}}\sum_{i=1}^{n+1}\int_A\phi\left(\frac{y-Y_i}{h_{n+1}}\right)\,\dd y\given\calG_n\bigg\} \\
    & = \frac1{(n+1)h_{n+1}}\sum_{i=1}^n\int_A\phi\left(\frac{y - Y_i}{h_{n+1}}\right)\,\dd y\\ 
    & \quad \qquad + \frac1{(n+1)h_{n+1}}\E\,\bigg\{\int_A \phi\left(\frac{y - Y_{n+1}}{h_{n+1}}\right)\,\dd y\given \calG_n\bigg\} \\
    & = \frac1{(n+1)h_{n+1}}\sum_{i=1}^n\int_A\phi\left(\frac{y - Y_i}{h_{n+1}}\right)\,\dd y \\
    & \quad\qquad + \frac1{(n+1)h_{n+1}}\int_{A}\int\phi\left(\frac{y - y_{n+1}}{h_{n+1}}\right)\frac1{n h_n}\sum_{i=1}^n\phi\left(\frac{y_{n+1} - Y_i}{h_n}\right)\,\dd y_{n+1}\,\dd y
\end{align*}
\as, where the final integrals can be exchanged by Tonelli's theorem. Now, using Lemma~\ref{lem:integral}, we get
\begin{align*}
\E\,\{\alpha_{n+1}(A)\given\calG_n\} & = \frac1{(n+1)h_{n+1}}\sum_{i=1}^n\int_A\phi\left(\frac{y - Y_i}{h_{n+1}}\right)\,\dd y\\  
& \qquad \qquad + \frac{1}{n(n+1)}\sum_{i=1}^n \int_A \frac{1}{\sqrt{h_{n+1}^2 + h_n^2}}\phi\left(\frac{y - Y_i}{\sqrt{h_{n+1}^2 + h_n^2}}\right) \,\dd y,
\end{align*}
and so
\begin{equation}
\begin{multlined}[c]
\E\,\{\alpha_{n+1}(A)\given\calG_n\} - \alpha_n(A) = \alpha_{n+1}(A) - \alpha_n(A) - \frac1{(n+1)h_{n+1}}\int_A \phi\bigg(\frac{y - Y_{n+1}}{h_{n+1}}\bigg)\,\dd y \\ \quad + \frac{1}{n(n+1)}\sum_{i=1}^n \int_A \frac{1}{\sqrt{h_{n+1}^2 + h_n^2}}\phi\left( \frac{y - Y_i}{\sqrt{h_{n+1}^2 + h_n^2}}\right) \,\dd y\quad\as
\end{multlined}
\label{eq:not-acid-three-terms}
\end{equation}
By Theorems~\ref{thm:weakconvas} and~\ref{thm:abscont}, we know that $\alpha_n(A) \to \alpha(A)$ (as $A$ is a continuity set), and so the difference $\alpha_{n+1}(A) - \alpha_n(A)$ converges to 0 a.s. As for the next term of~\eqref{eq:not-acid-three-terms},
\begin{equation}
0 \leq \frac1{(n+1)h_{n+1}}\int_A \phi\bigg(\frac{y - Y_{n+1}}{h_{n+1}}\bigg)\,\dd y =  \frac1{n+1}\Pr(Y\in A\given Y_{n+1}) \leq \frac1{n+1},
    \notag
\end{equation}
where $Y\given Y_{n+1} \sim {\rm N}(Y_{n+1}, h_{n+1}^2)$, and so this also converges to 0 a.s. To analyse the final term, note that for all $i\geq 1$ and for any $\sigma > 0$,
\begin{equation}
\frac{1}{\sigma} \int_b^\infty\phi\bigg(\frac{y - Y_i}{\sigma}\bigg)\,\dd y = \Phi\left(\frac{Y_i-b}{\sigma}\right) \geq \half I_{Y_i > b}\quad{\as}
    \notag
\end{equation}
Hence, with $A = (b,\infty)$, we have the lower bound
\begin{equation}
\frac{1}{n(n+1)}\sum_{i=1}^n \int_A \frac{1}{\sqrt{h_{n+1}^2 + h_n^2}}\phi\left(\frac{y - Y_i}{\sqrt{h_{n+1}^2 + h_n^2}}\right) \,\dd y
\geq \frac{1}{2n(n+1)}\sum_{i=1}^nI_{Y_i > b}\quad{\as}.
    \notag
\end{equation}
Now, by Proposition~\ref{prop:empmeasure_conv}, we have that 
\begin{equation}
\frac{1}{n}\sum_{i=1}^n I_{Y_i > b} \to \alpha(A) > 0 \quad {\as},
    \notag
\end{equation}
and so for sufficiently large $n$, we have
\begin{equation}
\E\,\{\alpha_{n+1}(A)\given\calG_n\} - \alpha_n(A) \geq \frac12\frac{\alpha(A)}{n+1} + C, \quad {\as},
    \notag
\end{equation}
for some constant $C$ (which takes into account the other terms of~\eqref{eq:not-acid-three-terms}). Hence, because $\sum_{n=1}^\infty 1/(n+1) = \infty$, the sequence $(\alpha_n(A))$ is not a nonnegative almost supermartingale for this choice of set $A$. Since the a.c.i.d.~condition requires it to be so for all sets $A$, the proof is completed.
\end{proof}

\section{Some weak convergence results}\label{moreweak}
Let $(Y_n)_{n \geq 1}$, with $\alpha_n(A) = \Pr(Y_{n+1}\in A \given \calG_n),\, A \in \calB$, where $\calG_0 = \{\emptyset, \Omega\}$ and $\calG_n = \sigma(Y_1,\ldots,Y_n)$ for $n \geq 1$. According to Theorem~2.2 in \citet{berti2006almost} we have $\alpha_n \Rightarrow \alpha$ {\as}~if and only if $\E\,\{g(Y_{n+1}) \given \calG_n\}$ is {\as}~convergent for all $g \in C_b$. The fact that the reals equipped with the Euclidean metric is a Polish metric space is important for this equivalence. In this appendix we show that such an equivalence holds when $g$ is restricted to be nonnegative and smooth. In proving this we rely on Theorem~2.6 in \citet{berti2006almost}, which says that $\alpha_n \Rightarrow \alpha$ {\as}~if and only the conditional characteristic functions converge {\as}, that is
\begin{equation}
\int \exp(ity) \,\alpha_n(\dd y)\to 
\int \exp(ity) \,\alpha(\dd y),\quad \as.
    \notag
\end{equation}
For how {\as}~weak convergence of random probability measures relates to stable convergence and mixing convergence \citep{aldous1978mixing,jacod2003limit,hausler2015stable}, see \citet[Sect.~4]{grubel2016functional}. 

We also remark that L{\'e}vy{'}s continuity theorem (see, e.g., Corollary~2 in \citet[p.~350]{billingsley96pm}) carries over to random probability measures.

\begin{lem} If $\int \exp(ity) \,\alpha_n(\dd y,\omega)\to\beta(t,\omega)$ for all $\omega \in N^c$ where $\Pr(N) = 0$, and $t \mapsto \beta(t,\omega)$ is continuous in $t = 0$ for all $\omega \in N^c$, $\beta(t,\cdot)$ is the (random) characteristic function of some random probability measure $\alpha$.   
\end{lem}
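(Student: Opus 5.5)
Fix $\omega\in N^c$ and write $\varphi_n(t)=\int \exp(ity)\,\alpha_n(\dd y,\omega)$. The plan has three steps: for fixed $\omega$, the classical L\'evy continuity theorem produces a probability measure $\alpha(\cdot,\omega)$ whose characteristic function is $\beta(\cdot,\omega)$; on $N$ we put $\alpha(\cdot,\omega)$ equal to some fixed law; and we check that $\omega\mapsto\alpha(A,\omega)$ is measurable.

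\textbf{Step 1 (pointwise L\'evy).} For each fixed $\omega\in N^c$, $(\alpha_n(\cdot,\omega))_n$ is a deterministic sequence of probability measures on $(\real,\calB)$. Its characteristic functions converge pointwise to $\beta(\cdot,\omega)$, and the limit is continuous at $0$. Corollary~2 in \citet[p.~350]{billingsley96pm} then gives a probability measure $\alpha(\cdot,\omega)$ with $\alpha_n(\cdot,\omega)\Rightarrow\alpha(\cdot,\omega)$ and $\int \exp(ity)\,\alpha(\dd y,\omega)=\beta(t,\omega)$ for all $t$. Tightness follows from the usual bound $\alpha_n(\{|y|\geq 2/u\})\leq u^{-1}\int_{-u}^u\{1-\varphi_n(t)\}\,\dd t$ together with dominated convergence and continuity of $\beta$ at $0$; all of this is handled inside the cited corollary. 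For $\omega\in N$ we set $\alpha(\cdot,\omega)=\delta_0$. Uniqueness of characteristic functions shows that $\alpha(\cdot,\omega)$ is uniquely determined by $\beta(\cdot,\omega)$ on $N^c$.

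\textbf{Step 2 (measurability).} This is the only point that goes beyond the deterministic theorem, and I expect it to be the main obstacle. For $g\in C_b$ and $\omega\in N^c$ we have $\int g\,\dd\alpha(\cdot,\omega)=\lim_n\int g\,\dd\alpha_n(\cdot,\omega)$. Each $\omega\mapsto\int g\,\dd\alpha_n(\cdot,\omega)$ is $\calF$-measurable, since $\alpha_n$ is a random probability measure (approximate $g$ by simple functions). Therefore $\omega\mapsto\int g\,\dd\alpha(\cdot,\omega)$ is measurable, being a pointwise limit on $N^c$ and constant on $N$. (If $\calF$ is not complete, note that $N$ may be taken measurable; otherwise one works with the completion.)

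Next take $A=(-\infty,b]$ and continuous $g_k\downarrow I_A$ with $0\leq g_k\leq 1$. Monotone convergence gives $\alpha(A,\omega)=\lim_k\int g_k\,\dd\alpha(\cdot,\omega)$, so $\omega\mapsto\alpha(A,\omega)$ is measurable. The class of $A\in\calB$ for which $\omega\mapsto\alpha(A,\omega)$ is measurable is a Dynkin system. It contains the $\pi$-system of half-lines, so by the $\pi$--$\lambda$ theorem it is all of $\calB$. Hence $\alpha$ is a random probability measure.

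\textbf{Step 3 (conclusion).} Combining the two steps, $\beta(t,\cdot)$ is almost surely equal to the random characteristic function $\int \exp(ity)\,\alpha(\dd y,\cdot)$ of the random probability measure $\alpha$. As a byproduct, $\alpha_n\Rightarrow\alpha$ \as\ These are the two facts needed when combining with Theorem~2.6 of \citet{berti2006almost}.
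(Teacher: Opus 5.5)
Your proposal is correct and follows essentially the same route as the paper: for each fixed $\omega\in N^c$ you apply the deterministic L\'evy continuity theorem by citing the corollary, while the paper writes out the tightness step via the bound $\alpha_n(\{|y|\geq 2/\delta\},\omega)\leq \delta^{-1}\int_{-\delta}^{\delta}\{1-\varphi_n(t,\omega)\}\,\dd t$ and bounded convergence. Your Step~2 is a genuine addition, since the paper's proof only produces a probability measure for each $\omega$ and leaves implicit that $\omega\mapsto\alpha(A,\omega)$ is measurable, which you verify via limits of $\int g\,\dd\alpha_n$ for $g\in C_b$, approximation of half-lines and a $\pi$--$\lambda$ argument.
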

\begin{proof} Let $\varphi_n(t,\omega) = \int \exp(ity) \,\alpha_n(\dd y,\omega)$ for each $n$ and $\omega$. By the standard proof we get that for any $\delta > 0$
\begin{equation}
\alpha_n(\{y \colon |y|\geq 2/\delta\},\omega) \leq \frac{1}{\delta}\int_{-\delta}^{\delta}\{1 - \varphi_n(t,\omega)\}\,\dd t. 
    \notag
\end{equation}
By bounded convergence $\int_{-\eps}^{\eps}\{1 - \varphi_n(t,\omega)\}\,\dd t  \to \int_{-\eps}^{\eps}\{1 - \beta(t,\omega)\}\,\dd t$, for each $\omega \in N^c$. Given $\eps > 0$ we can then choose $\delta$ so that  $|t|< \delta$ we have $|1 - \beta(t,\omega)| < \eps$, and thus for each $\omega \in N^c$
\begin{equation}
\limsup_n  \alpha_n(\{y \colon |y|\geq 2/\delta\},\omega) \leq 2 \eps,
    \notag
\end{equation}
where $\eps = \eps(\omega)$. This shows for each $\omega \in N^c$, the sequence $(\alpha_n(\cdot,\omega))_{n \geq 0}$ of probability measures is tight, and so $\beta(t,\omega)$ is a the characteristic function of some probability measure for each $\omega \in N^c$, where $\Pr(N) = 0$.   
\end{proof}

\begin{lem}\label{lemma:aux1} Let $(Y_n)_{n \geq 1}$, $(\alpha_n)_{n \geq 0}$ and $(\calG_n)_{n \geq 0}$ be as described in the above paragraph. Take $Z_n \sim \normal(0,1),\,n\geq 1$, such that $(Z_n)_{n \geq 1}$ and $(Y_n)_{n \geq 1}$ are independent, and set $\mu_{\sigma,n}(A) = \Pr (Y_{n+1} + \sigma Z_{n+1} \in A \given \calG_n),\, A\in \calB$ for $n \geq 0$. Then $\alpha_n \Rightarrow \alpha$ {\as}~if and only if $\mu_{\sigma,n} \Rightarrow \mu_{\sigma}$ {\as}~for each $\sigma > 0$. 
\end{lem}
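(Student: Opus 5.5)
The plan is to work entirely with characteristic functions and to use Theorem~2.6 of \citet{berti2006almost}, together with the random-measure version of L\'evy's continuity theorem from the preceding lemma. Since $Z_{n+1}$ is independent of $(Y_n)_{n\geq 1}$, the conditional law of $Y_{n+1}+\sigma Z_{n+1}$ given $\calG_n$ is the convolution $\mu_{\sigma,n}=\alpha_n * \normal(0,\sigma^2)$. Its conditional characteristic function therefore factorises:
\begin{equation}
\int \exp(ity)\,\mu_{\sigma,n}(\dd y) = e^{-\sigma^2t^2/2}\int \exp(ity)\,\alpha_n(\dd y).
\notag
\end{equation}
I will first verify this identity carefully, using a regular conditional distribution and the independence of $Z_{n+1}$ from $\calG_n\vee\sigma(Y_{n+1})$.

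For the forward direction, suppose $\alpha_n\Rightarrow\alpha$ \as. Then on a null-set complement $N^c$ we have $\varphi_n(t,\omega)\to\varphi(t,\omega)$ for every $t$. Multiplying by the deterministic factor $e^{-\sigma^2t^2/2}$ shows that the characteristic functions of $\mu_{\sigma,n}$ converge, on the same set $N^c$, to that of $\alpha*\normal(0,\sigma^2)$. Theorem~2.6 then gives $\mu_{\sigma,n}\Rightarrow\mu_\sigma:=\alpha*\normal(0,\sigma^2)$ \as\ for every $\sigma$. Alternatively, one can argue directly: $\int g\,\dd\mu_{\sigma,n}=\int (g*\phi_\sigma)\,\dd\alpha_n$ with $g*\phi_\sigma\in C_b$.

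For the converse, assume $\mu_{\sigma,n}\Rightarrow\mu_\sigma$ \as\ for each $\sigma>0$. I will restrict to a countable sequence $\sigma_k\downarrow 0$, so that there is a single null set $N$ outside which all these convergences hold simultaneously. Dividing by the nonvanishing factor $e^{-\sigma_k^2t^2/2}$ gives, for $\omega\in N^c$ and every $t$,
\begin{equation}
\varphi_n(t,\omega)\to\beta(t,\omega):=e^{\sigma_k^2t^2/2}\hat\mu_{\sigma_k}(t,\omega).
\notag
\end{equation}
The limit $\beta$ does not depend on $k$, because it is the limit of the same sequence $\varphi_n(t,\omega)$. To apply the L\'evy-type lemma I need $\beta(\cdot,\omega)$ to be continuous at $0$. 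Fixing any single $k$, $\beta(t,\omega)=e^{\sigma_k^2t^2/2}\hat\mu_{\sigma_k}(t,\omega)$ is a product of continuous functions of $t$, so it is continuous everywhere. The preceding lemma then says that $\beta(\cdot,\omega)$ is the characteristic function of a probability measure $\alpha(\cdot,\omega)$, and that $\varphi_n\to\beta$ pointwise on $N^c$. Theorem~2.6 (or the classical continuity theorem applied $\omega$-wise) yields $\alpha_n\Rightarrow\alpha$ \as

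The step I expect to be the main obstacle is the bookkeeping around null sets and measurability. The hypothesis gives a null set that a priori depends on $\sigma$, and this is handled by using countably many $\sigma_k$; in fact a single $\sigma$ suffices, which simplifies matters further. I also need $\alpha$ to be a genuine random probability measure, i.e.\ $\omega\mapsto\alpha(A,\omega)$ measurable. Here I would define $\alpha$ on $N$ as a fixed probability measure, and argue measurability through the limit $\int g\,\dd\alpha=\lim_n\int g\,\dd\alpha_n$ for $g\in C_b$ followed by a monotone class argument.
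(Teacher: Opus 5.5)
Your proposal is correct and follows the paper's own proof. Both factorise the conditional characteristic function as $e^{-\sigma^2t^2/2}\int e^{ity}\,\alpha_n(\dd y)$, use Theorem~2.6 of \citet{berti2006almost} for the forward direction, and for the converse divide out the Gaussian factor and apply L\'evy's continuity theorem. Your remark that a single $\sigma$ already suffices for the converse, together with your explicit handling of null sets and of the measurability of $\alpha$, is tidier than the paper's appeal to uniqueness of the limit across $\sigma$, but does not change the argument.
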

\begin{proof} The characteristic function of $Y_{n+1} + \sigma Z_{n+1}$ given $\calG_n$ is 
\begin{equation}
\varphi_{\sigma,n}(t) = \exp(-\half t^2 \sigma^2) \int \exp(it y)\,\alpha_n(\dd y). 
    \notag
\end{equation}
If $\alpha_n \Rightarrow \alpha$ {\as}, then $\varphi_{\sigma,n}(t) \to \exp(- \half t^2 \sigma^2) \int \exp(ity) \,\alpha(\dd y)$ {\as}. The product of two characteristic functions is a characteristic function, so by Theorem~2.6 in \citet{berti2006almost} we conclude that $\mu_{\sigma,n}\Rightarrow \mu_{\sigma}$ {\as}, where $\mu_{\sigma}$ is a random measure such that $\int \exp(itx)\,\mu_{\sigma}(\dd x) = \exp(- \half t^2 \sigma^2) \int \exp(ity) \,\alpha(\dd y)$ {\as}.

Conversely, assume that $\mu_{\sigma,n} \Rightarrow \mu_{\sigma}$ {\as} for each $\sigma > 0$, for some random probability measure $\mu_{\sigma}$. By the theorem just quoted, this implies that $\varphi_{\sigma,n}(t) \to \varphi_{\sigma}(t)$ {\as}~for all $t \in \real$, and for each $\sigma > 0$; where $\varphi_{\sigma}$ is the characteristic function of $\mu_{\sigma}$. We then have that
\begin{equation}
\int \exp(ity) \,\alpha_n(\dd y)
= \exp(\half t^2 \sigma^2) \varphi_{\sigma,n}(t) \to \exp(\half t^2 \sigma^2) \varphi_{\sigma}(t),\quad \as. 
\notag
\end{equation}
The function $t \mapsto \exp(\half  t^2\sigma^2 )\varphi_{\sigma}(t,\omega)$ is continuous in zero, for almost all $\omega$. By L{\'e}vy{'}s continuity theorem, the limit $\exp(\half  t^2\sigma^2 )\varphi_{\sigma}(t)$ must therefore be a characteristic function {\as} (equivalently, the sequence $(\alpha_n)_{n \geq 0}$ is {\as}~tight). Since $\sigma > 0$ was arbitrary and the limit of $\int \exp(it y)\,\alpha_n(\dd y)$ is unique (in the {\as}~sense), the limiting characteristic function cannot depend on $\sigma$. Consequently, there exists a random probability measure $\alpha$ such that $\int \exp(ity) \,\alpha_n(\dd y) \to \int \exp(ity) \,\alpha(\dd y)$ {\as}. By Theorem~2.6 in~\citet{berti2006almost} again, we can conclude that $\alpha_n \Rightarrow \alpha$ {\as}.
\end{proof}
The introduce-Gaussian-noise {`}trick{'} used in the next lemma is drawn from \citet[Section~III.5]{pollard1984convergence}, and is also included as an exercise in~\citet[p.~166]{jacod2004probability}. 
\begin{lem}\label{lemma:weakconvCbinfty} The conditions in~\eqref{eq:berti1} and \eqref{eq:ourcond} are equivalent.
\end{lem}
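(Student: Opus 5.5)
The direction from \eqref{eq:berti1} to \eqref{eq:ourcond} is immediate: $C_b^\infty \subset C_b$, so almost sure weak convergence gives almost sure convergence of $\int g\,\dd\alpha_n$ for every $g\in C_b^\infty$. For the converse, I will use Lemma~\ref{lemma:aux1}. It suffices to show that for each fixed $\sigma>0$ the smoothed predictives $\mu_{\sigma,n}$, the conditional laws of $Y_{n+1}+\sigma Z_{n+1}$ given $\calG_n$, converge weakly almost surely. By Theorem~2.2 of \citet{berti2006almost}, this is equivalent to almost sure convergence of $\int g\,\dd\mu_{\sigma,n}$ for every $g\in C_b$.

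The key identity is the following. For $g\in C_b$, independence of $Z_{n+1}$ from $(Y_n)$ gives
\begin{equation*}
\int g\,\dd\mu_{\sigma,n} = \E\{g(Y_{n+1}+\sigma Z_{n+1})\given\calG_n\} = \int (g*\phi_\sigma)(y)\,\alpha_n(\dd y),
\end{equation*}
where $\phi_\sigma(u)=\sigma^{-1}\phi(u/\sigma)$ and $(g*\phi_\sigma)(y)=\int g(y+\sigma z)\phi(z)\,\dd z$. The function $g*\phi_\sigma$ lies in $C_b^\infty$. Write it as $\int g(u)\phi_\sigma(u-y)\,\dd u$ and differentiate under the integral sign, which is justified by dominated convergence. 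Each derivative is $\int g(u)\,\partial_y^k\phi_\sigma(u-y)\,\dd u$, which is bounded by $\|g\|_\infty\int|\phi_\sigma^{(k)}|<\infty$ and is continuous. Hence \eqref{eq:ourcond} gives almost sure convergence of $\int g\,\dd\mu_{\sigma,n}$ for each fixed $g\in C_b$.

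The main obstacle is the quantifier order. Theorem~2.2 of \citet{berti2006almost}, as quoted, requires almost sure convergence for each $g\in C_b$, with a null set that may depend on $g$. Our argument supplies exactly this, so the cited theorem applies directly to $(\mu_{\sigma,n})_n$ for each fixed $\sigma$. This yields $\mu_{\sigma,n}\Rightarrow\mu_\sigma$ almost surely, and Lemma~\ref{lemma:aux1} then gives $\alpha_n\Rightarrow\alpha$ almost surely.

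If the paper's use of Theorem~2.2 needed the predictives to be the conditional laws of some sequence, I would note that $\mu_{\sigma,n}$ is the predictive of the sequence $Y_{n}+\sigma Z_{n}$ only relative to $\calG_n$, not to its own filtration. In that case I would appeal instead to the version of Theorem~2.2 stated for arbitrary sequences of random probability measures on a Polish space. That version is how the theorem is phrased in Section~\ref{sec:asweakconv}, so no filtration structure is needed.

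Finally, the restriction to nonnegative $g$ in the proof of Theorem~\ref{thm:weakconvas} is harmless. Any $g\in C_b^\infty$ satisfies $g+\|g\|_\infty\ge0$, and adding a constant does not affect convergence of $\int g\,\dd\alpha_n$.
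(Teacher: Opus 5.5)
Your proposal is correct and follows the paper's proof essentially step for step. You smooth $f\in C_b$ by Gaussian convolution so that $\int f\,\dd\mu_{\sigma,n}=\int (f*\phi_\sigma)\,\dd\alpha_n$ with $f*\phi_\sigma\in C_b^\infty$, invoke Theorem~2.2 of \citet{berti2006almost} to get $\mu_{\sigma,n}\Rightarrow\mu_\sigma$ a.s., and conclude with Lemma~\ref{lemma:aux1}. Your side remarks are also sound: the smoothed predictives need no filtration structure, and nonnegativity is better arranged by adding a constant than by the paper's $f^{\pm}$ decomposition, which destroys smoothness.
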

\begin{proof} Since $C_b^{\infty} \subset C_b$, \eqref{eq:berti1} implies \eqref{eq:ourcond}. Conversely, assume that \eqref{eq:ourcond} holds, and let $f \in C_b$. Take $Z_{n} \sim \normal(0,1)$ independent of $(Y_n)_{n \geq 1}$ for each $n \geq 1$. Then, for $\sigma > 0$
\begin{equation}
\E\,\{ f(Y_{n+1} + \sigma Z_{n+1} ) \given \calG_{n} \}
= \E\,\{ g_{\sigma}(Y_{n+1}) \given \calG_{n}\},\quad \text{a.s.}, 
    \notag
\end{equation}
where
\begin{equation}
g_{\sigma}(y) = \frac{1}{\sqrt{2\pi}\sigma}\int f(u)\exp\{-\half(y - u)^2/\sigma^2 \}\,\dd u,  
    \notag
\end{equation}
and one may check that $g_{\sigma} \in C_b^{\infty}$ as desired. Then, since we are assuming~\eqref{eq:ourcond}, the above {\as}~equality implies that $\E\, \{f(Y_{n+1} + \sigma Z_n)\given \calG_n\}$ is {\as}~convergent, and it follows from Theorem~2.2 in \citet{berti2006almost} that $\mu_{\sigma,n} \Rightarrow \mu_{\sigma}$. Since $\sigma > 0$ was arbitrary, we have from Lemma~\ref{lemma:aux1} that this is equivalent to $\alpha_n \Rightarrow \alpha$ {\as}.
\end{proof}
For a sequence $(Y_n)_{n \geq 1}$ let $\PP_n = n^{-1}\sum_{i=1}^n \delta_{Y_i}$ be the empirical measure of the $n$ first random variables. If the sequence $(Y_n)_{n \geq 1}$ is exchangeable, it is well known that the empirical measure converges weakly {\as}~to the directing measure~\citep[p.~14]{aldous1985exchangeability}. In fact, as is also known, asymptotic exchangeability is sufficient. The following proposition is included completeness. 
\begin{prop}\label{prop:empmeasure_conv} Let $(Y_n)_{n\geq 1}$ be a sequence of random variables with predictives $(\alpha_n)_{n \geq 0}$ such that $\alpha_n \Rightarrow \alpha$ {\as}. Then $\PP_n \Rightarrow \alpha$ {\as}.  
\end{prop}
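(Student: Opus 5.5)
By the Berti et al.\ characterisation recalled in Section~\ref{sec:asweakconv}, and since $\real$ is Polish, it suffices to find a countable convergence-determining class $\mathcal{D}\subset C_b$ and show that $\int g\,\dd\PP_n - \int g\,\dd\alpha_n \to 0$ {\as}\ for each $g\in\mathcal{D}$. For $\mathcal{D}$ we may take, for instance, the functions $y\mapsto \cos(ty)$ and $y\mapsto\sin(ty)$ with $t$ rational, or a countable family of bounded Lipschitz functions. Given this, the hypothesis $\int g\,\dd\alpha_n\to\int g\,\dd\alpha$ {\as}\ yields $\int g\,\dd\PP_n\to\int g\,\dd\alpha$ {\as}\ for all $g\in\mathcal{D}$ simultaneously, because there are only countably many null sets to discard. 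It remains to upgrade convergence on $\mathcal{D}$ to weak convergence $\PP_n\Rightarrow\alpha$. For the trigonometric class this follows from L\'evy's continuity theorem together with continuity of characteristic functions, which lets us pass from rational to all real $t$. The needed tightness of $(\PP_n(\cdot,\omega))$ comes for free, since the limit $\int e^{ity}\alpha(\dd y,\omega)$ is itself a characteristic function.

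The core step is a martingale strong law. Fix $g\in\mathcal{D}$ with $|g|\le 1$ and write
\begin{equation*}
\int g\,\dd\PP_n = \frac1n\sum_{k=1}^n \int g\,\dd\alpha_{k-1} + \frac1n\sum_{k=1}^n \Big\{ g(Y_k) - \E\,\{g(Y_k)\given\calG_{k-1}\}\Big\},
\end{equation*}
using that $\E\,\{g(Y_k)\given\calG_{k-1}\}=\int g\,\dd\alpha_{k-1}$. For the first sum, $\int g\,\dd\alpha_{k-1}\to\int g\,\dd\alpha$ {\as}\ by assumption, so the Ces\`aro averages converge to the same limit by the Toeplitz lemma. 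For the second sum, set
\begin{equation*}
M_n=\sum_{k=1}^n k^{-1}\Big\{g(Y_k)-\int g\,\dd\alpha_{k-1}\Big\}.
\end{equation*}
This is a martingale whose increments are bounded by $2/k$, so $\sup_n \E\, M_n^2\le 4\sum_k k^{-2}<\infty$. Hence $M_n$ converges {\as}\ by the $L^2$ martingale convergence theorem, and Kronecker's lemma then sends the second term to $0$ {\as}. This is exactly the argument in the proof of Lemma~\ref{lemma:meanconv}, but it is simpler here because $g$ is bounded, so no moment conditions are needed.

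I do not expect a serious obstacle. The only point needing care is the final upgrade from countably many test functions to all of $C_b$. I would present it through characteristic functions as described above, since the paper already invokes Theorem~2.6 of \citet{berti2006almost} and L\'evy's continuity theorem in Appendix~\ref{moreweak}. As an alternative, one could use a countable dense subset of bounded Lipschitz functions, which metrises weak convergence on $\real$.
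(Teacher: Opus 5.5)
The martingale step in your proposal is correct, and it is the paper's own decomposition. The paper applies it to $\delta_{Y_k}(A)$ for continuity sets $A$ of $\alpha$; you apply it to bounded $g$, where boundedness gives $\sup_n \E\,M_n^2\le 4\sum_k k^{-2}$ at once.

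The gap is in the final upgrade, which you treat as routine. The family $\{\cos(ty),\sin(ty): t\in\mathbb{Q}\}$ is measure-determining but not convergence-determining, and tightness does not ``come for free''. Take $\mu_n=\delta_{2\pi\,n!}$. If $t=p/q$, then $\int e^{ity}\,\mu_n(\dd y)=1$ for all $n\ge q$. So the characteristic functions converge on $\mathbb{Q}$ to that of $\delta_0$, yet $(\mu_n)$ is not tight and has no weak limit.

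L\'evy's continuity theorem needs convergence for (Lebesgue-almost) all $t$ near $0$, so that one can pass to the limit in $\delta^{-1}\int_{-\delta}^{\delta}\{1-\varphi_n(t)\}\,\dd t$. Continuity of the limit does not make up for convergence on a null set of frequencies. Your fallback is also not available as literally stated, because the bounded-Lipschitz unit ball on $\real$ is not separable in the supremum norm.

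The repair is easy, because your martingale step works for any countable family of bounded measurable $g$. Any one of the following suffices.

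\begin{enumerate}[label=(\roman*)]
\item Add to $\mathcal{D}$, for each $M\in\mathbb{N}$, a continuous $\psi_M$ with $0\le\psi_M\le 1$, $\psi_M=1$ on $\{|y|\ge M\}$ and $\psi_M=0$ on $\{|y|\le M-1\}$. On the good event, $\limsup_n\PP_n(\{|y|\ge M\})\le\int\psi_M\,\dd\alpha\to 0$, so $(\PP_n(\cdot,\omega))$ is tight. Every subsequential limit then has the characteristic function of $\alpha(\cdot,\omega)$ on $\mathbb{Q}$, hence everywhere by continuity.
\item Use the ramps $g_{b,k}(y)=\min\{1,\max\{0,1-k(y-b)\}\}$ for $b\in\mathbb{Q}$, $k\in\mathbb{N}$. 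These sandwich $\PP_n((-\infty,x])$ and give convergence at every continuity point $x$ of the distribution function of $\alpha(\cdot,\omega)$.
\item Run the martingale step for every $g\in C_b$ and apply Theorem~2.2 of \citet{berti2006almost} to $(\PP_n)$ to get $\PP_n\Rightarrow\beta$ a.s. Then use the rational trigonometric class only to identify $\beta=\alpha$ a.s., where being measure-determining is enough.
\end{enumerate}

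Once repaired, your route is cleaner than the paper's in one respect. Fixed test functions avoid the fact that the continuity sets of the random measure $\alpha$ depend on $\omega$. The paper handles this only with the remark that the null sets ``do not pile up''. Making that rigorous needs, for example, half-lines $(-\infty,b]$ with $b$ in a countable dense set of non-atoms of the mean measure $\E\,\alpha$.
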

\begin{proof} Let $A$ be a continuity set of $\alpha$, i.e., $\alpha(\partial A) = 0$ {\as}. Write
\begin{equation}
\PP_n(A) = \frac{1}{n}\sum_{k=1}\alpha_{k-1}(A) + \frac{1}{n}\sum_{k=1}^n k\frac{\delta_{Y_k}(A) - \alpha_{k-1}(A)}{k}.
    \notag
\end{equation}
Since $\alpha_n(A) \to \alpha(A)$ {\as}~by assumption, we have $n^{-1}\sum_{k=1}^n\alpha_{k-1}(A) \to \alpha(A)$ {\as}~by the Toeplitz lemma. Almost sure convergence of the second sum, to zero, follows from Kronecker{'}s lemma upon noting that $n \mapsto \sum_{k=1}^n k^{-1}\{\delta_{Y_k}(A) - \alpha_{k-1}(A)\}$ is a uniformly integrable martingale (see the proof of Lemma~\ref{lemma:meanconv}). This shows that $\PP_n(A) \to \alpha(A)$ {\as} for every continuity set $A$ of $\alpha$. Since $\real$ is Polish (ensuring that the null sets do not pile up), this implies that $\PP_n \Rightarrow \alpha$ {\as}. 
\end{proof}

\section{On criteria for a.s.~weak convergence}\label{app:convcriteria}
As touched upon in Section~\ref{sec:asweakconv} (below Theorem~\ref{thm:weakconvas}), there are three known general criteria ensuring that a sequence of predicitive distributions converge weakly~{\as} These three criteria are, one, that the $(\alpha_n)_{n \geq 0}$ form a {\it martingale} in the sense that
\begin{equation}
\E\,\{ \alpha_{n+1}(A) \given \calG_n\} = \alpha_n(A) \quad {\as},
    \notag
\end{equation}
for all $A \in \calB$; which is equivalent to the sequence $(Y_n)_{n \geq 1}$ being {\it conditionally identically distributed} (c.i.d.), this being defined as $\Pr(Y_{n+k} \in A \given \calG_n) = \alpha_n(A)$, for all $A \in \calB$ and all $k \geq 1$. Two, that the sequence $(\alpha_n)_{n \geq 0}$ forms a {\it quasi-martingale}, that is
\begin{equation}
\sum_{n \geq 0} \E\,\big| \E\, \{\alpha_{n+1}(A) \given \calG_n\} - \alpha_n(A)      \big| < \infty, 
    \label{eq:defquasi}
\end{equation}
for each $A \in \calB$ \citep[Def.~G.2]{fortini2026principled}. And, three, that the sequence $(Y_n)_{n \geq 1}$ is {\it almost} c.i.d., or {\it a.c.i.d.}, meaning that 
\begin{equation}
\E\,\{\alpha_{n+1}(A) \given \calG_n\} \leq \alpha_n(A) + \xi_n \quad {\as}
    \label{eq:defacid}
\end{equation}
for all $A \in \calB$, for a nonnegative {\as} summable sequence $(\xi_n)_{n\geq 0}$ \citep[Theorem~1]{battiston2025bayesian}.\footnote{Note that \citet{battiston2025bayesian} do not include the absolute summability condition as a defining property of a sequence being a.c.i.d. Hence our terminology differs somewhat from theirs.} The next well known lemma entails that if $(\alpha_n)_{n \geq 1}$ is a quasi-martingale, then it is a.c.i.d.

\begin{lem} Let $(X_n)_{n \geq 1}$ be nonnegative random variables such that $\sum_{n \geq 1} \E\, X_n < \infty$. Then $\sum_{n \geq 1} X_n < \infty$ {\as}.
\end{lem}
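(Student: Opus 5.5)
The plan is to pass from finite expectation to almost sure finiteness via monotone convergence. Set $S_N = \sum_{n=1}^N X_n$ and $S = \sum_{n\geq 1} X_n$, the latter taking values in $[0,\infty]$. Because each $X_n$ is nonnegative, the partial sums $S_N$ are nondecreasing in $N$ and increase pointwise on $\Omega$ to $S$. The monotone convergence theorem then lets us interchange expectation and infinite summation:
\begin{equation}
\E\, S = \lim_{N\to\infty} \E\, S_N = \lim_{N\to\infty}\sum_{n=1}^N \E\, X_n = \sum_{n\geq 1}\E\, X_n < \infty.
\notag
\end{equation}

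The remaining step is the standard fact that a $[0,\infty]$-valued random variable with finite expectation is finite almost surely. To see this, note that on the event $\{S=\infty\}$ we have $S \geq M$ for every $M>0$. Markov's inequality therefore gives
\begin{equation}
\Pr(S=\infty)\leq \Pr(S\geq M)\leq \E\, S/M.
\notag
\end{equation}
Letting $M\to\infty$ yields $\Pr(S=\infty)=0$, which is exactly the statement that $\sum_{n\geq 1} X_n<\infty$ {\as}

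There is no genuine obstacle in this argument. The only point needing care is that nonnegativity is what justifies the monotone convergence theorem, so no integrability or domination assumption is required beyond $\sum_n \E\, X_n<\infty$.

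Finally, to obtain the stated consequence that a quasi-martingale is a.c.i.d., apply the lemma with $X_n = |\E\,\{\alpha_{n+1}(A)\given\calG_n\}-\alpha_n(A)|$ and take $\xi_n = X_n$. Condition~\eqref{eq:defquasi} says precisely that $\sum_n \E\, X_n<\infty$, so the lemma makes $(\xi_n)$ nonnegative and {\as}~summable. The trivial bound $\E\,\{\alpha_{n+1}(A)\given\calG_n\}\leq\alpha_n(A)+\xi_n$ then gives~\eqref{eq:defacid}.
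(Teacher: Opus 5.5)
Your proof is correct, but it takes a different route from the paper's. The paper works with $L^1$ limits. It notes that $\E\,|S_n-S_m|=\sum_{j=m+1}^n\E\,X_j$, so the partial sums $S_n=\sum_{j\le n}X_j$ are Cauchy in $L^1$. By completeness they have an $L^1$ limit, hence converge in probability, and monotonicity of $(S_n)$ is then used to upgrade this to almost sure convergence. That last step implicitly needs an a.s.\ convergent subsequence to identify the pointwise limit with the $L^1$ limit. You instead work directly with the pointwise limit $S\in[0,\infty]$: you compute $\E\,S=\sum_n\E\,X_n$ by monotone convergence (Tonelli for sums) and finish with Markov's inequality. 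Your argument is shorter and needs neither the completeness of $L^1$ nor the step identifying the two limits. It also loses nothing relative to the paper's route, since $\E\,S<\infty$ gives integrability of the limit and $\E\,(S-S_N)=\sum_{n>N}\E\,X_n\to0$ gives $L^1$ convergence. Your closing application with $X_n=|\E\,\{\alpha_{n+1}(A)\given\calG_n\}-\alpha_n(A)|$ and $\xi_n=X_n$ matches how the paper uses the lemma to conclude that quasi-martingale predictives are a.c.i.d.
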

\begin{proof} Let $S_n = \sum_{j=1}^n X_j$, so that $(S_n)_{n \geq 1}$ is an increasing sequence of random variables. For $n > m$, we have $\E\,|S_n - S_m| = \sum_{j=m+1}^n \E\, X_j$, which tends to zero as $n,m \to \infty$, by the assumption. This shows that $(S_n)_{n\geq 1}$ is Cauchy in $L^1$, and so $S_n \to \sum_{j \geq 1}X_j$ in $L^1$ by completeness. Convergence in $L^1$ implies convergence in probability, but an increasing sequence converging in probability converges almost surely.  \end{proof}
The a.c.i.d.~condition in~\eqref{eq:defacid} coupled with summability of $\xi_n$ is equivalent to 
\begin{equation}
\sum_{n \geq 1} |\E\, \{\alpha_{n+1}(A) \given \calG_n\} - \alpha_n(A)     | < \infty\quad {\as},
    \notag
\end{equation}
for each $A \in \calB$ (see the discussion in Sect.~3 of \citet{battiston2025bayesian}). From the lemma it is then seen that the quasi-martingale condition of~\eqref{eq:defquasi} implies the a.c.i.d.~condition with $(\xi_n)_{n \geq 1}$ {\as} summable. That the quasi-martingale condition is in fact stronger than the a.c.i.d.~condition can be seen from the following simple example: For $n \geq 1$, let 
\begin{equation*}
    X_n = \begin{cases}n & \text{with probability }1/n^2 \\ 0 & \text{with probability }1 - 1/n^2.\end{cases}
\end{equation*}
For some $c > 0$, we see that $\Pr(X_n \geq c) = 1/n^2$ when $n \geq c$, and zero otherwise, so $\sum_{n\geq 1}\Pr(X_n \geq c) < \infty$. Hence $\sum_{n\geq 0} X_n < \infty$ {\as}~by the Borel--Cantelli lemma. On the other hand, $\E\,X_n = 1/n$, so $\sum_{n\geq 1}\E\,X_n = \infty$.  

\section{On the posterior mean}\label{app:posteriormean}
By absolute continuity, we know that the limiting measure admits a density and we can try to reason about the posterior mean of such an object. In particular, fixing an argument $y$, interest is on $\E\,[f(y)\given\calG_n]$. To start, we consider the posterior mean after only $N$ resampling steps, i.e. $\E\,[f_{n+N}(y)\given\calG_n]$, where $f_{n+N}(y)=(n+N)^{-1}\sum_{i=1}^{n+N} K_{n+N}(Y_i-y)$, where $K_{n+N}(\cdot)=h_{n+N}^{-1}K(\cdot / h_{n+N})$. Consider further the empirical measure $\PP_n = n^{-1}\sum_{i=1}^n \delta_{Y_i}$ and note that the KDE can be written as a convolution: $f_n(y) = (K_n * \PP_n)(y)$. The empirical measure further has a nice one-step-ahead updating equation
\begin{align*}
    \PP_{n+1} = \frac{n}{n+1}\PP_n + \frac{1}{n+1}\delta_{Y_{n+1}}.
\end{align*}
Now, defining the convolution operator $(H_n\nu)(dy) = (K_n * \nu)(y)dy$ as an operator on measures, noting in particular that $f_n(y)=(H_n\PP_n)(y)$, we use that $Y_{n+1} \given \calG_n \sim f_n$ and write
\begin{align*}
    \E\,[\PP_{n+1} \given \calG_n] &= \frac{n}{n+1}\PP_n + \frac{1}{n+1}\E\,[\delta_{Y_{n+1}} \given \calG_n] \\
    &=\frac{n}{n+1}\PP_n + \frac{1}{n+1}H_n\PP_n 
    =\underbrace{\left(\frac{n}{n+1}I + \frac{1}{n+1}H_n\right)}_{=:T_n}\PP_n.
\end{align*}
Here $I$ denotes the identity operator and we've defined the linear operator $T_n$ which can be regarded more generally as $T_m := (m/(m+1)I + 1/(m+1)H_m)$ for $m \geq n$. Now, taking one more predictive resampling step we consider $\E\,[\PP_{n+2} \given \calG_n] = \E\,\left[ \E\,[\PP_{n+2} \given \calG_{n+1}] \given\calG_n\right] 
    =\E\,\left[ T_{n+1}\PP_{n+1} \given\calG_n\right] =T_{n+1}T_n\PP_n$, from which repeated applications of the tower property provide the general formula 
    \begin{equation}
    \E\,[ \PP_{n+N} \given \calG_{n}]=T_{n+N-1}T_{n+N-2}\cdots T_n\PP_n.
        \notag
    \end{equation}
From here using that $f_{n+N}(y)=(H_{n+N}\PP_{n+N})(y)$, we have 
\begin{align*}
    \E\,[f_{n+N}(y) \given \calG_n] = (H_{n+N}T_{n+N-1}\cdots T_n \PP_n)(y).
\end{align*}
This above characterisation reduces the analysis of the posterior mean to one of the asymptotic behaviour of infinite products of linear operators, which we leave for future work.

\bibliographystyle{apalike}
\bibliography{refs}

\end{document}